\documentclass[english,british,prodmode,acmtocl]{acmsmall}
\usepackage[T1]{fontenc}
\usepackage[latin9]{inputenc}
\synctex=-1
\usepackage{color}
\usepackage{prettyref}
\usepackage{amsmath}
\usepackage{amssymb}
\usepackage{graphicx}
\usepackage[authoryear]{natbib}
\PassOptionsToPackage{normalem}{ulem}
\usepackage{ulem}

\makeatletter


\newenvironment{lyxlist}[1]
{\begin{list}{}
{\settowidth{\labelwidth}{#1}
 \setlength{\leftmargin}{\labelwidth}
 \addtolength{\leftmargin}{\labelsep}
 }}
{\end{list}}

\usepackage{hyperref}
\usepackage{algorithm}
\usepackage{algorithmic}
\providecommand{uline}[1]{\emph{#1}}

\acmYear{Draft: \today{}}

\makeatother

\usepackage{babel}
\begin{document}
\global\long\def\stateX{x}
\renewcommand\marginpar{\adsfafdsdf}

\global\long\def\stateY{y}

\global\long\def\stateS{s_{\phi}}

\global\long\def\stateT{t_{\phi}}

\global\long\def\stateA{s_{\phi}}

\global\long\def\stateB{t_{\phi}}

\providecommand\thispaper{this paper}

\providecommand{\ifnoroot}[1]{}

\providecommand\UWA{\\The University of Western Australia\\Computer
Science and Software Engineering}

 \markboth{J. C. McCabe-Dansted et al.}{Specifying Robustness}

\title{Specifying Robustness}

\author{John C. McCabe-Dansted,
Tim French \and{}
Mark Reynolds \affil{University of Western Australia}
Sophie Pinchinat \affil{Campus Universitaire de Beaulieu}
}

\markboth{J. C. McCabe-Dansted et al.}{Specifying Robustness}
\title{Specifying Robustness}
\author{John C. McCabe-Dansted,
Tim French \and{}
Mark Reynolds
\affil{University of Western Australia}
Sophie Pinchinat
\affil{Campus Universitaire de Beaulieu}}
\begin{abstract}
This paper proposes a new logic RoCTL{*} to model robustness in concurrent
systems. RoCTL{*} extends CTL{*} with the addition of Obligatory and
Robustly operators, which quantify over failure-free paths and paths
with one more failure respectively. We present a number of examples
of problems to which RoCTL{*} can be applied.

The core result of this paper is to show that RoCTL{*} is expressively
equivalent to CTL{*} but is non-elementarily more succinct. We present
a translation from RoCTL{*} into CTL{*} that preserves truth
but may result in non-elementary growth in the length
of the translated formula as each nested Robustly operator may result
in an extra exponential blowup. 
However, we show that this translation is optimal
in the sense that any equivalence preserving translation will require
an extra exponential growth per nested Robustly. 
We also compare RoCTL{*}
to Quantified CTL{*} (QCTL{*}) and hybrid logics. 
\end{abstract}


\category{F.4.1}{Mathematical Logic and Formal Languages}{Temporal Logics}

\terms{Algorithms, Languages, Reliability, Theory, Verification}

\keywords{Robustness, succinctness, branching time}

\acmformat{John C. McCabe-Dansted, Tim French, Mark Reynolds and Sophie Pinchinat, 2013. Specifying Robustness}

\begin{bottomstuff}
This project is supported by the Australian Government's International Science Linkages program and the Australian Research Council.

Author's addresses: J. C. McCabe-Dansted, T. French {and} M. Reynolds,
School of Computer Science and Software Engineering,
University of Western Australia;
Sophie Pinchinat,
IRISA,
Campus Universitaire de Beaulieu
\end{bottomstuff}

\providecommand{\uline}[1]{\term{#1}}

\renewcommand{\uline}{\textbf}

\providecommand{\uline}{\emph}

\global\long\def\formulae{formul\ae{}}

\global\long\def\lemmas{lemmas}

\providecommand{\logicindex}[1]{\index{Logics!#1}} 

\global\long\def\underbar{\term}

\providecommand{\term}{}

\global\long\def\doi#1{\href{http://dx.doi.org/#1}{doi:#1}}
 \global\long\def\retrieved#1#2{Retrieved on #1: \url{#2}}


\newrefformat{exa}{Example \ref{#1}}

\newrefformat{apx}{Appendix \ref{#1}} 
\newrefformat{sec}{Section \ref{#1}} 
\newrefformat{fig}{Figure \ref{#1}} 
\newrefformat{def}{Definition~\ref{#1}} \newrefformat{defn}{Definition~\ref{#1}}
\newrefformat{cor}{Corollary~\ref{#1}} \newrefformat{sub}{Section~\ref{#1}}
\newrefformat{prop}{Proposition \ref{#1}}

\global\long\def\aform{\phi}

\global\long\def\access{R}

\global\long\def\allworlds{S}

\global\long\def\ATL{\mathcal{A}LTL}

\global\long\def\axOiP{O5}
 \global\long\def\axONiNO{O6}
 \global\long\def\axNOiN{O7}

\global\long\def\bctlstruct{\left(\mfw,\ra,\valuation,B\right)}

\global\long\def\bisimulation{\mathfrak{B}}

\global\long\def\Forall#1{\forall#1:\,}

\global\long\def\fsX#1{\mathsf{#1}}

\global\long\def\fsNOAR{\fsX{NORA}}

\global\long\def\Exists#1{\exists#1:\,}

\global\long\def\myth{\scriptsize{\mbox{th}}}

\global\long\def\ith{i^{\myth}}
 \global\long\def\mth{m^{\myth}}

\global\long\def\liff{\leftrightarrow}

\global\long\def\mathcomma{\mbox{ ,}}
 \global\long\def\mathfullstop{\mbox{ .}}

\global\long\def\Na{N_{a}^{-1}}

\global\long\def\negU{\overset{\neg}{U}}

\global\long\def\node{\mathbf{n}}

\global\long\def\normf{\Xi}

\global\long\def\pvsW{w}
 \global\long\def\pvsWB{\hat{w}}

\global\long\def\pvsOf#1#2{#1_{#2}}

\global\long\def\pvs{\pvsOf{(\allworlds,\access,g)}{\pvsW}}
 \global\long\def\pvsB{\pvsOf{(\hat{\allworlds},\hat{R},\hat{g})}{\pvsWB}}

\global\long\def\pvrs{\pvsOf{(\allworlds,\access^{s},R^{f},g)}{\pvsW}}
 \global\long\def\pvrsB{\pvsOf{(\hat{\allworlds},\hat{R}^{s},\hat{R}^{f},\hat{g})}{\pvsWB}}

\global\long\def\setdiff{-}

\global\long\def\suffix{\mathbf{\mbox{suffix}}}

\global\long\def\pathc{\rho}

\global\long\def\QForall#1{\forall#1}

\global\long\def\Qp{\QForall p}
 \global\long\def\Qy{\QForall y}

\global\long\def\tuple#1{\left(#1\right)}

\global\long\def\States{Q}

\global\long\def\initial{\States_{0}}

\global\long\def\devi{{\bf \Lambda}}

\global\long\def\true{\top}

\global\long\def\false{\bot}

\global\long\def\fgap{\,}
 \global\long\def\fstop{\fgap.}
 \global\long\def\fcomma{\fgap,}
 \global\long\def\BNFbar{\,|\,}
 \global\long\def\BNFeq{::=}

\global\long\def\tN{N}
 \global\long\def\tU{U}
 \global\long\def\eA{\blacktriangle}
 \global\long\def\eE{\triangle}
 \global\long\def\dO{O}
 \global\long\def\dP{P}
 \global\long\def\AP{\func{ap}}
 \global\long\def\SF{\func{ap}}
 \global\long\def\tG{G}
 \global\long\def\Union#1#2{\dyadic#1\union#2}
 \global\long\def\dyadic#1#2#3{\left(#1#2#3\right)}
 \global\long\def\lImplies#1#2{\dyadic{#1}{\rightarrow}{#2}}
 \global\long\def\stateforma{A\forma}
 \global\long\def\stateformb{A\formb}
 \global\long\def\aform{\phi}
 \global\long\def\lUntil#1#2{\dyadic{#1}U{#2}}
 \global\long\def\lWeak#1#2{\dyadic{#1}W{#2}}

\global\long\def\classM{\classvs}
 \global\long\def\forma{\aform}
 \global\long\def\formb{\psi}
 \global\long\def\formc{\theta}
 \global\long\def\formd{\gamma}
 \global\long\def\classvs{\mathbb{M}}
 \global\long\def\lAnd#1#2{\dyadic{#1}{\wedge}{#2}}

\global\long\def\valuation{g}

\global\long\def\jzseq#1#2{#1_{0},#1_{1},\ldots,#1_{#2}}

\global\long\def\alphabet{\Sigma}

\global\long\def\transition{\delta}

\global\long\def\accepting{F}

\global\long\def\buchi{B\"uchi}

\global\long\def\autom{\mathcal{A}}

\global\long\def\robust{\blacktriangle}

\global\long\def\prone{\triangle}

\global\long\def\Var{\mathcal{V}}

\global\long\def\LTLof#1{\CTLf(#1)}

\global\long\def\CTLf{\buchif^{-1}}

\global\long\def\buchif{\mathfrak{A}}

\global\long\def\Viol{{\bf v}}

\global\long\def\union{\cup}

\global\long\def\forces{\vDash}
 \global\long\def\forcesw{\forces^{E}}

\global\long\def\nforces{\nvDash}

\global\long\def\lang{\mathcal{L}}

\global\long\def\vP{\valuation_{\Phi}}

\global\long\def\vPp{g_{\Phi_{\prone}}}

\global\long\def\natnum{\mathbb{N}}

\global\long\def\negrass#1#2{#1\nrightarrow^{s}#2}
 \global\long\def\mfm{w}
 \global\long\def\failmap{\epsilon}
 \global\long\def\patha{\sigma}
 \global\long\def\pathb{\pi}
 \global\long\def\tiff{\text{ iff }}
 \global\long\def\st{\text{ s.t. }}
 \global\long\def\tand{\text{ and }}
 \global\long\def\mfkf{(\mfw,\ras,\raf,\valuation)}
 \global\long\def\pathdf{\tau}

\global\long\def\ras{\access^{s}}
 \global\long\def\raf{\access^{f}}
 \global\long\def\mfw{\allworlds}

\global\long\def\MCTL{M}
 \global\long\def\mfk{M}
 \global\long\def\mfs{\mathcal{A}}
 \global\long\def\transframe{(\mfw,\ra)}
 \global\long\def\mfsf{XXXXX}
 \global\long\def\classf{\classM}
 \global\long\def\classC{\mathbb{C}}
 \global\long\def\classCt{\classC_{t}}
 \global\long\def\rac{\access}
 \global\long\def\ra{\access}
 \global\long\def\ctlstruct{\tuple{\mfw,\access,\valuation}}
 \global\long\def\treestruct{\ctlstruct}
 \global\long\def\sF{s_{F}}
 \global\long\def\bigO{\mathcal{O}}
 \global\long\def\prev{\mbox{prev}}

\global\long\def\cl{\textbf{cl}}
 \global\long\def\closure{\cl\phi}

\global\long\def\func#1{\mbox{\textbf{#1}}}
 \global\long\def\height{\func{\heightT}_{\ra}}
 \global\long\def\heightT{\func{height}}
 \global\long\def\rootT{\func{root}}

\global\long\def\statea{q}
 \global\long\def\stateb{r}

\global\long\def\jSeqOneTo#1#2{#1_{1},#1_{2},\ldots,#1_{#2}}

\global\long\def\Lstar{\square\raisebox{-2pt}{\hspace{-1.2ex}*}\hspace{0.36ex}}

\global\long\def\classK{\mathbb{K}}
 \global\long\def\fsK{\mathsf{K}}
 \global\long\def\fsOA{\mathsf{OA}}
 \global\long\def\fsO{\mathsf{O}}
 \global\long\def\fsOAR{\mathsf{OA\eA}}
 \global\long\def\classOAR{\mathbb{OA\eA}}
 \global\long\def\fsX#1{\mathsf{#1}}

\global\long\def\aformalsystem{\mathsf{S}}
 \global\long\def\fsS{\aformalsystem}
 \global\long\def\fsPC{\fsX{PC}}
 \global\long\def\conseqrel{\vdash}
 \global\long\def\crS{\conseqrel_{\aformalsystem}}
 \global\long\def\structure{\mathcal{A}}
 \global\long\def\valustruct{(\structure,\valuation)}

\global\long\def\classOA{\mathbb{OA}}
 \global\long\def\li{\rightarrow}
 \global\long\def\lIFF{\leftrightarrow}

\global\long\def\forceso{\forces_{\square}}
 \global\long\def\forcesr{\forces_{\star}}

\global\long\def\rasf{\access^{sf}}

\providecommand{\ctls}{CTL*}\providecommand{\ctl}{CTL}

\global\long\def\DP{\func{dp}}
 \global\long\def\SP{\func{sp}}

\global\long\def\CTLv{CTL{}$_{\Viol}${}}

\global\long\def\RoCTLvstarLogic{RoCTL{}*\hspace{-1ex}$_{\Viol}$}

\global\long\def\RoBCTLvstarLogic{RoBCTL{}*\hspace{-1ex}$_{\Viol}$}

\global\long\def\CTLvstar{CTL\makebox[0cm][l]{*}$_{\Viol}$}

\global\long\def\CTLvstruct{CTL{}$_{\Viol}${} structure}

\global\long\def\CTLvStruct{CTL{}$_{\Viol}${} Structure}

\global\long\def\RoBCTLvstar{RoB\CTLvstar}

\global\long\def\RoBCTLv{RoB\CTLv}

\global\long\def\RoBCTLvstruct{RoB\CTLvstruct}

\global\long\def\RoCTLvstar{Ro\CTLvstar}

\global\long\def\RoCTLv{Ro\CTLv}

\global\long\def\RoCTLvstruct{Ro\CTLvstruct}

\global\long\def\RoCTLvStruct{Ro\CTLvStruct}

\renewcommand\formulae{formulas}

\maketitle
\section{Introduction }

We introduce the Robust Full Computation Tree Logic (RoCTL{*})
as
an extension of 
the branching time temporal logic CTL{*} to represent issues relating
to robustness and reliability in systems. It does this by adding an
Obligatory operator and a Robustly operator. The Obligatory operator
specifies how the systems should ideally behave by quantifying over paths
in which no failures occur. The Robustly operator specifies that something
must be true on the current path and similar paths that ``deviate''
from the current path, having at most one more failure occurring.
This notation allows phrases such as ``even with $n$ additional
failures'' to be built up by chaining $n$ simple unary Robustly
operators together.

RoCTL{*} is a particular combination of 
temporal and deontic logics allowing reasoning about
how requirements on behaviour 
are progressed and change with time,
and the unfolding of actual events.
The RoCTL{*} Obligatory operator is similar to the Obligatory operator
in Standard Deontic Logic (SDL), although in RoCTL{*} the operator
quantifies over paths rather than worlds. 
However, it is the Robustly operator
which gives RoCTL{*} many advantages
over a simple 
combination of temporal logic and
deontic logic as
in \cite{vandertorre98temporal}.
SDL has many paradoxes and 
some of these, such as the ``Gentle Murderer'' paradox
(``if you murder, you must murder gently''~\cite{Fo84}), spring from
the inadequacy of SDL to deal with obligations caused by acting
contrary to duty. 
Contrary-to-Duty (CtD) obligations are important for modeling a robust
system, as it is often important to state that the system should achieve
some goal and also that, if it fails, then it should 
act to mitigate or in some way recover
from the failure. 

RoCTL{*} can represent CtD obligations by specifying
that the agent must ensure that the CtD obligation is met even if
a failure occurs. 
SDL
is able to distinguish what ought to be true from what is true, but
is unable to specify obligations that come into force only when we
behave incorrectly. 
Addition of temporal operators to deontic logic allows us to specify
correct responses to failures that have occurred in the past~\cite{vandertorre98temporal}.
However, this approach alone is not sufficient~\cite{vandertorre98temporal}
to represent obligations such as ``You must assist your neighbour,
and you must warn them iff you will not assist them''. In RoCTL{*}
these obligations can be represented if the obligation to warn your
neighbour is robust but the obligation to assist them is not. 

\newcommand{\ignore}[1]{}
\ignore{
Other approaches to dealing with Contrary-to-Duty obligations exist.
Defeasible logic is often used~\cite{dblp:journals/fuin/mccarty94},
and logics of agency, such as STIT~\cite{be91}, can be useful as
they can allow obligations to be conditional on the agent's ability
to carry out the obligation.  
}

A number of other extensions of temporal logics have been proposed
to deal with deontic or robustness issues~\cite{jan_broersen_designing_2004,w_long_quantification_2000,hansson94logic,huib_aldewereld_designing_2005,agerri_rodrigo_normative_2005}.
Each of these logics are substantially different from RoCTL{*}. Some
of these logics are designed specifically to deal with deadlines~\cite{jan_broersen_designing_2004,hansson94logic}.
The Agent Communication Language was formed by adding deontic and other
modal operators to CTL~\cite{agerri_rodrigo_normative_2005}; this
language does not explicitly deal with robustness or failures. 
\citet{hansson94logic} 
proposed an extension of CTL
to deal with reliability. However, as well as being intended to deal with deadlines,
their logic reasons about reliability using probabilities rather than
numbers of failures, and their paper does not contain any discussion
of the relationship of their logic to deontic logics. Like our embedding
into QCTL{*}, 
\citet{huib_aldewereld_designing_2005}
uses a Viol atom to represent failure. However, their logic also uses
probability instead of failure counts and is thus suited to a different
class of problems than RoCTL{*}. Another formalisation of robustness
is representing the robustness of Metric Temporal Logic (MTL) formulas
to perturbations in timings 
\cite{bouyer-robust}.
None of these logics appear to have an operator that is substantially
similar to the Robustly operator of RoCTL{*}.

In the last few years there has been considerable interest in logics
for reasoning about systems that are robust to partial
non-compliance with the norms. One approach has been to define 
{\em robustness}
in terms of the ability of a multi-agent system to deal with having
some subset of agents that are unwilling or unable to comply with
the norms \cite{HRW08,AHW10}. Like RoCTL{*} they consider socially
acceptable behaviours to be a subset of physically possible behaviours.
A logic that like can discuss numbers of faults was suggested by \cite{FNP10},
though this logic extended ATL instead of CTL{*} and defined fault-tolerance
in terms of numbers of winning strategies. More recently the Deontic
Computation Tree Logic (dCTL) was proposed \cite{castro2011dctl}.
Like RoCTL{*} the logic divides states into normal and abnormal states,
but avoids capturing the full expressivity of CTL{*} to allow the
model checking property to be polynomial like the simpler CTL logic.
There is a restriction of RoCTL{*} that can be easily translated into
CTL \cite{two_seconds}, allowing this restriction to be reasoned about
as efficiently as CTL; however, dCTL is more expressive than CTL \cite{castro2011dctl}.
Finally, a Propositional Deontic Logic was proposed by \cite{AKC12}
than divided events into allowable and non-allowable depending on
the current state.

Diagnosis problems in control theory~\cite{jmpc,avw} also deals
with failures of systems. Diagnosis is in some sense the dual of the
purpose of the RoCTL{*} logic, as diagnosis requires that failure
cause something (detection of the failure) whereas robustness involves
showing that failure will \emph{not} cause something.

This paper provides some examples of robust systems that can be effectively
represented in RoCTL{*}. It is easy to solve the coordinated attack
problem if our protocol is allowed to assume that only $n$ messages
will be lost. The logic may also be useful to represent the resilience
of some economy to temporary failures to acquire or send some resource.
For example, a remote mining colony may have interacting requirements
for communications, food, electricity 
and fuel. RoCTL{*} may be more
suitable than Resource Logics (see for example~\cite{weerdt01resource})
for representing systems where a failure may cause a resource to become
temporarily unavailable. 
This paper presents a simple example where
the only requirement is to provide a cat with food when it is hungry.

The Obligatory operator,
as well as some uses of the Robustly operator, are easy to translate
into CTL{*}~\cite{DBLP:conf/jelia/McCabe-Dansted08}
but a general 
way to achieve a translation to CTL{*} is not obvious.
The first translation
in our paper
is of RoCTL{*} into the tree semantics
of Quantified CTL{*} (QCTL{*}). We note that a similar translation
can be made into a fragment of Hybrid temporal logic. Although QCTL{*}
is strictly more expressive than CTL{*} the translation of RoCTL{*}
into QCTL{*} will be given for two reasons. Firstly the translation
into QCTL{*} is very simple and thus is well suited as an introduction
to reasoning with RoCTL{*}. Secondly, even this weak result is sufficient
to demonstrate that RoCTL{*} is decidable. Finally, the translation
into QCTL{*} is linear, while it will be shown that any translation
to CTL{*} must be non-elementary in the worst case. 

We then give a translation of RoCTL{*} formulas into CTL{*}. 
This results in a formula
that is satisfied on a model iff the original formula is satisfied
on the same model. This means that we can use all the CTL{*} model
checkers, decision procedures and so forth for RoCTL{*}.
Unfortunately,
the translation can be quite long.
We show that although all RoCTL{*} \formulae{} can be translated
into CTL{*}, the length of the CTL{*} formula is not elementary in
the length of the RoCTL{*} formula. Hence some properties can be represented
much more succinctly in RoCTL{*} than CTL{*}.
This
translation requires roughly one extra exponential per nested robustly
operator. We will show that no translation can do better than this,
so although RoCTL{*} is no more expressive than CTL it is very succinct
in the sense that any translation of RoCTL{*} into either CTL{*} or
tree automata will result in a non-elementary blowup in the length
of the formula.

We can summarise the contributions of this paper as follows.
Firstly, it defines a new intuitive and expressive logic, RoCTL{*}, for specifying robustness in systems.
The logic seems to combine temporal and deontic notions in
a way that captures the important contrary-to-duty obligations
without the usual paradoxes.
Secondly, it provides a proof that the logic can be translated in a truth-preserving manner
into the existing CTL* logic.
Thirdly, it provides a proof that RoCTL{*} is non-elementarily more succinct than CTL*
for specifying some properties.

This paper extends results from the conference papers \cite{FrDaRe07book,DBLP:conf/time/McCabe-DanstedFRP09,Da11Cats}. 
There is further discussion and more details in the thesis
\cite{MCD10}.

The structure of the paper is as follows.
RoCTL{*}
is introduced in the next section
before we
show that the new logic can be applied across
a wide variety of examples,
practical, theoretical and philosophical.
In section~\ref{sec:machinery}, we
revise a large collection of
existing machinery that
we will need in the subsequent 
expressivity and succinctness proofs.
In section~\ref{sec:bisim}, we
show that RoCTL{*} is preserved under bisimulations:
needed for some unwinding proofs, but also interesting to have.
In section~\ref{sec:qctl}, we
show the fairly straightforward translation
of RoCTL{*} into QCTL{*}.
Section~\ref{sec:altl} presents
some useful conversions between automata.
Section~\ref{sec:ctlstar} contains the
translation of RoCTL{*} into CTL{*}.
In section~\ref{sec:optimal}, we
show that this translation is optimal.

\section{RoCTL{*}}

\label{sec:roctl}\label{sub:RoCTL-Structures} In this section
we define the RoCTL{*} logic. We first provide some basic definitions,
starting with our set of variables.
\begin{defi}
We let $\Var$ be our set of variables. The set $\Var$ contains a
special variable $\Viol$. A \uline{valuation} $\valuation$ is
a map from a set of worlds $\mfw$ to the power set of the variables.
The statement $p\in\valuation(\mfm)$ means roughly ``the variable
$p$ is true at world $\mfm$''.
\end{defi}

The $\Viol$ atom%
\footnote{A variant of RoCTL{*} was presented in~\citep{FrDaRe07book}, which
had two accessibility relations, a success and failure transition
and thus did not need the special atom $\Viol$. The definition we
use here was presented in \citep{DBLP:conf/jelia/McCabe-Dansted08}).
These definitions are equivalent if we disallow the RoCTL{*} formulas
from directly accessing the $\Viol$ atom \citep{MCD10}. All the
known results on RoCTL{*} apply equally well using either definition,
and no advantage is known to the definition in \citep{FrDaRe07book}.
Using the definition of the structures for RoCTL{*} that have a single
accessibility relation allows us to define both CTL{*} and RoCTL{*}
structures in the same way, greatly simplifying the definition of
the translations.%
} will be used to define failing transitions. Informally it may be
possible to enter a state labelled with $\Viol$, but it is forbidden
to do so; entering such a state will be considered a failure. 

As is normal we say a binary relation is serial if every element has
a successor.
\begin{defi}
We say that a binary relation $R$ on $S$ is \uline{serial} (total)
if for every $a$ in $S$ there exists $b$ in $S$ such that $aRb$.
\end{defi}

We now provide a definition of a structure. 
\begin{defi}
A \uline{structure} $M=\ctlstruct$ is a 3-tuple containing a
set of worlds $\mfw$, a serial binary relation $\ra$ on $\allworlds$,
a valuation $\valuation$ on the set of worlds $\mfw$.
\end{defi}

While in some logics the truth of formulas depends solely on the current
world, the truth of CTL{*} (and hence QCTL{*} and RoCTL{*}) may depend
on which future eventuates. These futures are represented as infinitely
long (full) paths through the structure. For this reason, we provide
a formal definition of fullpaths.
\begin{defi}
We call an $\omega$-sequence $\sigma=\left\langle \mfm_{0},\mfm_{1},\ldots\right\rangle $
of worlds a \uline{fullpath} iff for all non-negative integers
$i$ we have $w_{i}\ra w_{i+1}$. For all $i$ in $\natnum$ we define
$\patha_{\geq i}$ to be the fullpath $\left\langle \mfm_{i},\mfm_{i+1},\ldots\right\rangle $,
we define $\patha_{i}$ to be $w_{i}$ and we define $\patha_{\leq i}$
to be the sequence $\left\langle \jzseq wi\right\rangle $. 
\end{defi}

We now define the property of failure-freeness. This means that, in
the future, no failing transitions are taken. Informally, a failure-free
fullpath represents a perfect future. Whereas the Obligatory operator
in SDL quantifies over acceptable worlds, the Obligatory operator
we will define quantifies over failure-free fullpaths.

\begin{defi}
We say that a fullpath $\sigma$ is \uline{failure-free} iff for
all $i>0$ we have $\Viol\notin\valuation\left(\sigma_{i}\right)$.
We define $\SF(w)$ to be the set of all fullpaths starting with world
$\mfm$ and $S(w)$ to be the set of all failure-free fullpaths starting
with $\mfm$. We call a structure a RoCTL-structure iff $S(w)$ is
non-empty for every $w\in\allworlds$.
\end{defi}

We will now define deviations. Informally, these represent the possibility
of adding an additional failure to some step $i$ along a path. After
$i$ we follow a different path, and we allow only a single failure
not on the existing path so no failures occur after $i+1$. Deviations
are intended to represent possible failures we may wish to be able
to recover from, and if our system is robust to failures we also want
it to be robust in the face of correct transitions. For this reason
we allow the new transition added at step $i$ to be a success as
well as a failure. 
\begin{defi}
For two fullpaths $\patha$ and $\pathb$ we say that $\pathb$ is
an \uline{$i$-deviation} from $\patha$ iff $\patha_{\leq i}=\pathb_{\leq i}$
and $\pathb_{\geq i+1}\in S(\pathb_{i+1})$. We say that $\pathb$
is a \uline{deviation} from $\patha$ if there exists a non-negative
integer $i$ such that $\pathb$ is an $i$-deviation from $\patha$.
We define a function $\delta$ from fullpaths to sets of fullpaths
such that where $\patha$ and $\pathb$ are fullpaths, $\pathb$ is
a member of $\delta(\patha)$ iff $\pathb$ is a deviation from $\patha$. 
\end{defi}
We see that $S\left(\sigma_{0}\right)\subseteq\delta(\patha)\subseteq\SF(\sigma_{0})$.
 Where $p$ varies over $\Var$, we define RoCTL{*} \formulae{}
according to the following abstract syntax 
\begin{align*}
\forma & :=p\BNFbar\neg\forma\BNFbar\lAnd\forma\forma\BNFbar\lUntil\forma\phi\BNFbar N\phi\BNFbar A\forma\BNFbar O\forma\BNFbar\eA\phi\mbox{ .}
\end{align*}

A formula that begins with $A$, $\neg A$, $O$, $\neg O$, $p$
or $\neg p$ is called a \term{state formula}. For consistency with
\citep{FrDaRe07book}, we do not consider a formula that explicitly
contains $\Viol$ to be a RoCTL{*} formula, although our translation
into CTL{*} works equally well for such \formulae{}. The $\true,\,\neg,\,\wedge,\,\tN,\,\tU$
and $A$ are the familiar ``true'', ``not'', ``and'', ``next'',
``until'' and ``all paths'' operators from CTL\@. The abbreviations
$\false$, $\vee$, $F$, $G$, $W$, $E$ $\rightarrow$ and $\leftrightarrow$
are defined as in CTL{*} logic. As with Standard Deontic Logic (SDL)
logic, we define $\dP\equiv\neg\dO\neg$. Finally, we define the dual
$\prone$ of $\robust$ as the abbreviation $\eE\equiv\neg\eA\neg$.
 We call the $O$, $P$, $\eA$, $\eE$ operators Obligatory, Permissible,
Robustly and Prone respectively.

We define truth of a RoCTL{*} formula $\forma$ on a fullpath $\sigma=\left\langle \mfm_{0},\mfm_{1},\ldots\right\rangle $
in a RoCTL-structure $\mfk$ recursively as follows:
\begin{align*}
\mfk,\patha\forces\tN\forma & \tiff\mfk,\patha_{\geq1}\forces\forma\\
\mfk,\patha\forces\forma\tU\formb & \tiff\exists_{i\in\natnum}\st\mfk,\patha_{\geq i}\forces\formb\tand\\
 & \qquad\forall_{j\in\natnum}j<i\implies\mfk,\patha_{\geq j}\forces\phi\\
\mfk,\patha\forces A\forma & \tiff\forall_{\pathb\in\SF(\patha_{0})}\mfk,\pathb\forces\forma\\
\mfk,\patha\forces O\forma & \tiff\forall_{\pathb\in S(\sigma_{0})}\mfk,\pathb\forces\forma\\
\mfk,\patha\forces\eA\forma & \tiff\forall_{\pathb\in\delta(\patha)}\mfk,\pathb\forces\forma\mbox{ and }\mfk,\patha\forces\forma
\end{align*}
The definition for $\true$, $p$, $\neg$ and $\wedge$ is as we
would expect from classical logic.  The intuition behind the $\robust$
operator is that it quantifies over paths that could result if a single
error was introduced; the deviations only have at most one failure
not on the original path, and they are identical to the original path
until this failure occurs.
\begin{defi}
We say that a function $\tau$ from \formulae{} to \formulae{} is
\uline{truth-preserving} iff for all $M,\sigma$ and $\phi$ it
is the case that $M,\sigma\forces\phi\iff M,\sigma\forces\tau\left(\phi\right)$\textup{.}
\end{defi}
Given that traditional modal logics define truth at worlds, instead
of over paths, many important properties of modal logics assume such
a definition of truth. When dealing with those properties we can use
the following definition of truth of RoCTL{*} formulas at worlds. 
\begin{defi}
A RoCTL{*} formula is true at a world if it is true on any path leading
from that world, or more formally: 
\begin{align*}
\mfk,w\forces\forma & \tiff\Exists{\pathb\st\pi_{0}=w}\mfk,\pathb\forces\forma\mathfullstop
\end{align*}

\end{defi}

\section{Examples}

\label{sec:examples}\providecommand{\ifnoroot}[1]{#1} 

\ifnoroot{}

\global\long\def\tA{A}

In this section a number of examples are presented. These examples
will demonstrate how combinations of RoCTL{*} operators can be used,
and contrast the meaning of apparently similar combinations. A number
of problem domains will be touched on briefly.

The first example will show how a variant of Chisholm's paradox can
be represented in RoCTL{*}. Example~\ref{exa:ONNO} examines the
difference between the formula $NO\forma$ and the formula $ON\phi$,
and shows how this combination of operators can be used to represent
a contrary-to-duty obligation that is triggered by a failure in the
past. Example~\ref{exa:coordinated-attack} shows how RoCTL{*} may
be used to specify a robust network protocol, in this case relating
to the coordinated attack problem. Example~\ref{exa:cat} uses the
feeding of a cat to show how we can reason about consequences of policies
in RoCTL{*}. These examples frequently use the $\eA$/$\eE$ operator
to form the pair $O\eA$;  Example~\ref{exa:bitflip} exhibits the
simple formula $O(\prone Fe\rightarrow Fw)$ which nests $\eA$/$\eE$
in a less trivial way.  Example~\ref{exa:fuse} also nests $\eA$
in a less trivial way, as it is used to compare the meaning of $\eA G$
with the meaning of $G\eA$. 

In each of these examples, an informal English requirement will be
listed with formal specification as a RoCTL{*} formula. The informal
requirements will have flavor and explanation that may not be expressed
in the formal specification, and thus should not be interpreted as
simple translations from RoCTL{*} to English.
\begin{exam}
\label{exa:Chisholm}We may represent a variant of Chisholm's paradox
\cite[p34--5]{chisholm1963contrary}  as follows:

$OFh$: You must help your neighbour (eventually)

$O\eA\left(\neg Fh\liff Fw\right)$: You must warn your neighbour
that you will not help them iff you will not help them, even if a
single failure occurs.
\end{exam}
Note that $O\left(\neg Fh\li Fw\right)$ would be redundant given
$OFh$, as all failure-free paths would satisfy $Fh$ and thus $O\left(\neg Fh\li Fw\right)$
would be vacuously true. However, $O\eA\left(\neg Fh\li Fw\right)$
is not redundant, as this indicates that even if a single failure
occurs. As with similar defeasible representations of this problem,
the obligation to warn the neighbour is meaningful as it is stronger
than the obligation to help the neighbour.

It may seem that the obligation to eventually help your neighbour
is vacuous, as one could aways claim that they will help their neighbour
sometime later. In RoCTL{*} the obligation is not vacuous, as following
a path where you never help the neighbour violates the norm. A common
sense interpretation of this is, if you plan to never help your neighbour,
then lying about that plan does not satisfy the first obligation,
rather it also violates the second. RoCTL{*} focuses on modelling
and verifying systems. It is reasonable and meaningful to state that
a task must complete in finite time without specifying a deadline.
Additionally, we note that if we have had multiple perfect opportunities
to help our neighbour, and did not do so, the neighbour may become
rightfully suspicious that we do not plan to help them; however, diagnosing
systems on the basis of behaviour is outside the scope of \thispaper{}. 
\begin{exam}
\label{exa:ONNO}Here is an example of a simple Contrary-to-Duty obligation.
This provides a counter example to both $ON\forma\rightarrow NO\phi$
and $NO\forma\rightarrow ON\phi$.
\end{exam}
In some case decisiveness may be more important than making the right
decision. For example, when avoiding collision with an object we may
have the choice of veering right or left. In this case it may be more
efficient to veer to the right, and so we should make this decision.
However, changing our mind could cause a collision, so it is best
to stay with the inferior decision once chosen. We show how we may
formalise such a decision to demonstrate the difference in the meaning
of $ON$ and $NO$ in RoCTL{*}.
\begin{description}
\item [{$ON(Gp)$}] You should commit to the proper decision. (It is obligatory
that by the next step, you will always ``act according to proper
decision'' {[}$p${]})
\item [{$NO\left(G\neg p\vee Gp\right)$}] Once you have made your decision,
you should stick with it. (at the next step it is obligatory that
you will always not $p$ or always $p$)
\end{description}
It is logically consistent with both the above that we do not make
the proper decision ($N\neg p$), as the above only specifies what
should happen not what actually will happen. Once we have made the
wrong decision we cannot satisfy $Gp$, so we should stick with the
wrong decision $G\neg p$. Hence, in this case, both $ON(Gp)$ and
$NO(G\neg p)$ are true.  Likewise $ON(G\neg p)$ and $NO(Gp)$ are
false. This demonstrates how obligations can change with time in RoCTL{*}. 

We will now give an example of a structure $\mfk=\ctlstruct$ that
satisfies these \formulae{}:%
\begin{minipage}[t]{0.55\columnwidth}%
\begin{align*}
\allworlds & =\{u,v,w,w'\}\mbox{,}\\
\ra & =\left\{ (u,v),(v,v),\left(u,w'\right),\left(w',w\right)(w,w)\right\} \mbox{,}\ \\
\valuation(v) & =\left\{ p\right\} ,\quad\valuation(w)=\valuation\left(u\right)=\emptyset,\quad\valuation(w')=\left\{ \Viol\right\} \mathfullstop
\end{align*}
\end{minipage}%
\begin{minipage}[t]{0.3\columnwidth}%
\raisebox{-1in}{\input{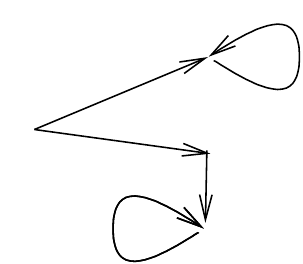tex_t}}%
\end{minipage}

Let  $\patha$ be the fullpath $\left\langle u,w,w',w',\dots\right\rangle $
corresponding to making the wrong decision. We see that $M,\patha_{\geq1}\forces\neg p$,
so for every failure-free path $\pi$ starting at $\sigma_{1}$ we
have $M,\pi\forces\neg p$ and hence $M,\patha_{\geq1}\forces O\neg p\wedge\neg Op$.
Thus $M,\patha\forces NO\neg p\wedge N\neg Op$. As $N$ is its own
dual it follows that $M,\patha\forces\neg NOp$. 

Let $\pathb=\left\langle v,v,\ldots\right\rangle $. We see that $M,\pathb\forces p$.
We see that $\SP(u)=\left\{ \left\langle u,v,v,\ldots\right\rangle \right\} $.
Hence $M,\patha\forces ONp$ and it follows that $M,\patha\forces\neg O\neg Np$
and so $M,\patha\forces\neg ON\neg p$.

Hence $M,\patha\forces\lAnd{ONp}{\neg NOp}$ and so $M,\patha\nforces\lImplies{ON\forma}{NO\forma}$
where $\forma=p$. Likewise $M,\patha\forces\lAnd{NO\neg p}{\neg ON\neg p}$,
so $M,\patha\nforces\lImplies{NO\forma}{ON\forma}$ where $\forma=\neg p$.

It is well known that simple combinations of deontic and temporal
logics can represent contrary-to-duty obligations of the form ``If
you have previously done $\phi$, you should do $\psi$''. We now
give an example of a contrary-to-duty obligation RoCTL{*} can express
where time is not central to the obligation.
\begin{exam}
\label{exa:coordinated-attack}In the coordinated attack problem we
have two generals $X$ and $Y$. General $X$ wants to organise an
attack with $Y$. A communication protocol will be presented such
that  a coordinated attack will occur if no more than one message
is lost. 
\end{exam}
The coordinated attack problem requires that the both generals know
that the other will attack despite the possibility that any message
could be lost. This is known to be impossible. We will show how we
can specify a policy on RoCTL{*} that specifies a weaker variant of
the coordinated attack problem where we can achieve a coordinated
attack provided no more than one message is intercepted (and both
generals are willing assume that no more than one message will be
lost). 
\begin{description}
\item [{$AG\left(s_{X}\rightarrow ONr_{Y}\right)$:}] If $X$ sends a message,
$Y$ should (in an ideal world) receive it at the next step. Note
that it may not actually be the case that the message arrives as it
may be intercepted.
\item [{$AG\left(\neg s_{X}\rightarrow\neg Nr_{Y}\right)$:}] If $X$ does
not send a message now, $Y$ will not receive a message at the next
step.
\item [{$AG(f_{X}\rightarrow AGf_{X})$:}] If $X$ commits to an attack,
$X$ cannot withdraw.
\item [{$AG(f_{X}\rightarrow\neg s_{X})$:}] If $X$ has committed to an
attack, it is too late to send messages.
\item [{$A\left(\neg f_{X}Wr_{X}\right)$:}] $X$ cannot commit to an attack
until $X$ has received a message (which would contain plans from
$Y$).
\item [{$A\left(\neg r_{X}Ws_{Y}\right)$:}] $X$ will not receive a message
until $Y$ sends one.
\end{description}
Similar constraints to the above also apply to $Y$. Below we add
a constraint requiring $X$ to be the general planning the attack
\begin{description}
\item [{$A\left(\neg s_{Y}Wr_{Y}\right)$:}] General $Y$ will not send
a message until $Y$ has received a message. 
\end{description}
No protocol exists to satisfy the original coordination problem, since
an unbounded number of messages can be lost. Here we only attempt
to ensure correct behaviour if one or fewer messages are lost. 
\begin{description}
\item [{$A\left(s_{X}Ur_{X}\right)$:}] General $X$ will send plans until
a response is received.
\item [{$AG\left(r_{X}\rightarrow f_{X}\right)$:}] Once general $X$ receives
a response, $X$ will commit to an attack.
\item [{$A\left(\neg r_{Y}W\left(r_{Y}\wedge\left(s_{Y}\wedge Ns_{Y}\wedge NNf_{Y}\right)\right)\right)$:}] Once
general $Y$ receives plans, $Y$ will send two messages to $X$ and
then commit to an attack.
\end{description}
Having the formal statement of the policy above and the semantics
of RoCTL{*} we may prove that the policy $\hat{\phi}$ is consistent
and that it implies correct behaviour even if a single failure occurs:
\begin{align*}
\hat{\forma}\rightarrow O\eA F(f_{X}\wedge f_{Y})\fstop
\end{align*}

Indeed, we will shown that such issues can be decided in finite time
in Section~\ref{sec:A-Linear-reduction-into-QCTL*}.

For a more thorough specification of the Coordinated Attack problem,
see for example \cite{halpern1990knowledge}. 
\begin{exam}
\label{exa:cat}We have a cat that does not eat the hour after it
has eaten. If the cat bowl is empty we might forget to fill it. We
must ensure that the cat never goes hungry, even if we forget to fill
the cat bowl one hour. At the beginning of the first hour, the cat
bowl is full. We have the following atoms:\end{exam}
\begin{description}
\item [{$b$}] ``The cat bowl is full at the beginning of this hour''
\item [{$d$}] ``This hour is feeding time''
\end{description}
We can translate the statements above into RoCTL{*} statements:
\begin{enumerate}
\item $\tA G(d\rightarrow\neg\tN d)$: If this hour is feeding time, the
next is not.
\item \label{enu:might-fail}$AG((d\vee\neg b)\rightarrow\eE\tN\neg b)$:
If it is feeding time or the cat bowl was empty, a single failure
may result in an empty bowl at the next step
\item $AG((\neg d\wedge b)\rightarrow\tN b)$: If the bowl is full and it
is not feeding time, the bowl will be full at the beginning of the
next hour.
\item \label{enu:must-feed}$\dO\eA G\left(d\rightarrow b\right)$: It is
obligatory that, even if a single failure occurs, it is always the
case that the bowl must be full at feeding time. 
\item $b$: The cat bowl starts full.
\end{enumerate}
Having formalised the specification it can be proven that the specification
is consistent and that the policy implies $O\eA GONb$, indicating
that the bowl must be filled at every step (in case we forget at the
next step), unless we have already failed twice. The formula $AGONb\rightarrow O\eA G\left(d\rightarrow b\right)$
can also be derived, indicating that following a policy requiring
us to always attempt to fill the cat bowl ensures that we will not
starve the cat even if we make a single mistake. Thus following this
simpler policy is sufficient to discharge our original obligation.\begin{exam}
\label{exa:bitflip}Say that a bit ought to flip at every step, but
might fail to flip at any particular step. This may be represented
with the RoCTL{*} statement 
\begin{eqnarray*}
AGO\left(b\leftrightarrow\neg Nb\right)\wedge AG\prone\left(b\leftrightarrow Nb\right)\mathcomma
\end{eqnarray*}
which is satisfied by the following model: 
\end{exam}
\begin{center}
\input{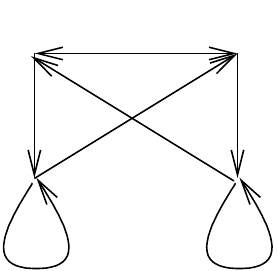tex_t}
\par\end{center}

Then we may derive the following statements:
\begin{description}
\item [{$O\eA(\left(b\wedge Nb\right)\rightarrow NG\left(b\leftrightarrow\neg Nb\right))$}] If
a single failure occurs, and the bit fails to flip at the next step,
it will flip continuously from then on.
\item [{$O\eA FG\left(b\leftrightarrow\neg Nb\right)$}] Even if a single
failure occurs, there will be time at which the bit will flip correctly
from then on.
\end{description}
However, we will not be able to derive $OF\eA G\left(b\leftrightarrow\neg Nb\right)$,
as this would mean that there was a time at which a failure could
not cause the bit to miss a step.

\begin{exam}
\label{exa:fuse} Say a system has a battery that can sustain the
system for a single step, even if a failure occurs (the fuse blows).
Let $\phi$ represent ``the system has power now and at the next
step''. Then, even if a single failure occurs, it will always be
the case that even if a deviating event occurs the system will have
power now and at the next step ($O\tG\eA\phi$). It would not follow
that even if a single failure occurred the system would always have
power ($O\eA\tG\phi$); the battery power would only last one step
after the fuse blew. If we also specified that the fuse was an electronic
fuse that automatically reset, then if a single failure occurs, the
system would only have to rely on battery power for one step. Then,
if the fuse only blows once then system will always have power ($\eA\tG\forma$).
As with the $A$ operator in CTL{*}, $\eA G\forma\rightarrow G\eA\forma$
is valid in RoCTL{*} but $G\eA\forma\rightarrow\eA G\forma$ is not.\end{exam}

\section{Technical Preliminaries }

\label{sec:machinery}\label{sec:Preliminaries}

In this section we will provide definitions and background results
that will be used in this paper.  In \prettyref{sub:CTL*-and-QCTL*}
we will define CTL{*} and its syntactic extension QCTL{*}. In \prettyref{sub:Automata}
we will define various forms of Automata. In \prettyref{sub:Bisimulations}
we will define Bisimulations. We will discuss expressive equivalences
\prettyref{sec:Expressive-Equivalences}, in particular between LTL
and automata.

\subsection{\label{sub:CTL*-and-QCTL*}Trees, LTL, CTL{*} and QCTL{*} }

In \thispaper{} we will also briefly consider Linear Temporal Logic
(LTL), CTL{*} and an extension QCTL{*} of CTL{*}. For the purposes
of this paper we will define CTL{*} to be a syntactic restriction
of RoCTL{*} excluding the $O$ and $\robust$ operator.
\begin{defi}
Where $p$ varies over $\Var$, we define CTL{*} \formulae{} according
to the following abstract syntax
\begin{align*}
\forma & :=p\BNFbar\neg\forma\BNFbar\lAnd{\forma}{\forma}\BNFbar\lUntil{\forma}{\phi}\BNFbar N\phi\BNFbar A\phi\mbox{ .}
\end{align*}

\end{defi}
We likewise define LTL to be the restriction of CTL{*} without the
$A$ operator.
\begin{defi}
Where $p$ varies over $\Var$, we define LTL \formulae{} according
to the following abstract syntax
\begin{align*}
\forma & :=p\BNFbar\neg\forma\BNFbar\lAnd{\forma}{\forma}\BNFbar\lUntil{\forma}{\phi}\BNFbar N\phi\mbox{ .}
\end{align*}

\end{defi}
In turn we define QCTL{*} as an extension of CTL{*} with a $\forall$
operator. 
\begin{defi}
\label{QCTL*-syntax}A\emph{ QCTL{*}}\logicindex{QCTL*} formula has
the following syntax: 
\begin{align*}
\forma & :=p\BNFbar\neg\forma\BNFbar\lAnd{\forma}{\forma}\BNFbar\lUntil{\forma}{\phi}\BNFbar N\phi\BNFbar A\phi\BNFbar\Qp\phi\mathfullstop
\end{align*}
The semantics of $p$, $\neg$, $\wedge$, $U$, $N$, and $A$ are
the same as in CTL{*} and RoCTL{*}. Before defining the Kripke semantics
for QCTL{*} we need to define the concept of a $p$-variant. Informally
a $p$-variant $M'$ of a structure $M$ is a structure that identical
except in the valuation of the $p$ atom. 
\end{defi}
 
\begin{defi}
\label{def:p-variant}Given some CTL-structure $M=(\allworlds,\access,\valuation)$
and some $p\in\Var$, a $p$-\uline{variant} of $M$ is some structure
$M=(\allworlds,\access,\valuation')$ where $\valuation'(w)\setdiff\{p\}=\valuation(w)\setdiff\{p\}$
for all $w\in\allworlds$.

Under the Kripke semantics for QCTL{*}, $\Qp\phi$ is defined as
\begin{align*}
M,b\forces\Qp\alpha\Longleftrightarrow & \textrm{ For every }p\textrm{-variant }M'\textrm{ of }M\\
 & \quad\textrm{ we have }M',b\forces\alpha\ .
\end{align*}

\end{defi}
In \thispaper{} we will use the \uline{tree-semantics} for QCTL{*}.
These semantics are the same as the Kripke semantics except that,
whereas the Kripke semantics evaluates satisfiability over the class
$\classC$ of CTL-structures, the tree semantics evaluate satisfiability
over the class $\classCt$ of CTL-structures which are trees (see
\prettyref{def:tree} below for a formal definition of trees). This
changes which \formulae{} are satisfiable in the logic as, unlike
CTL{*}~\citep{em83}, QCTL{*} is sensitive to unwinding into a tree
structure~\citep{kup95}. Note that the atom $p$ in $\Qp$ often
called a \uline{variable}.
\begin{theorem}
\label{thm:tree-QCTL-decidable}The tree-semantics for QCTL{*} are
decidable. \citep{Fr06}
\end{theorem}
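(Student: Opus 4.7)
The plan is to reduce satisfiability of QCTL{*} under tree semantics to the non-emptiness problem for parity tree automata (equivalently, to satisfiability of S2S, the monadic second-order theory of the infinite binary tree, which is decidable by Rabin's theorem). The starting observation is that a QCTL{*} formula $\phi$ is satisfiable in tree semantics iff there exists a tree CTL-structure $T$ with $T,\mathrm{root}\forces\phi$. Up to bisimulation I may assume the tree has bounded branching, and then encode such structures as labeled infinite trees over a fixed finite alphabet whose labels record the local valuation of each free atom of $\phi$. This matches the standard input model for tree automata.

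The central step is to show, by induction on the structure of $\phi$, that one can effectively build a parity tree automaton $A_\phi$ over the alphabet of valuations of the free atoms of $\phi$ that accepts exactly the trees satisfying $\phi$. Atomic propositions, Boolean connectives and the temporal/path operators $N$, $U$, $A$ are handled by the classical translation of CTL{*} into alternating/nondeterministic parity tree automata (Emerson--Jutla, Muller--Schupp). The essential new case is $\Qp\phi$: inductively $A_\phi$ reads labels that include a bit for $p$, and I want an automaton reading only the remaining labels that accepts $T$ iff every $p$-extension of $T$ is accepted by $A_\phi$. Using $\Qp\phi\equiv\neg\exists p\,\neg\phi$, this is realized by complementation, then projection to hide the $p$-component (which corresponds to an existential guess of a $p$-labeling along the tree), then a further complementation. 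Tree automata are closed under all three operations. Finally, $\phi$ is satisfiable iff $A_\phi$ has non-empty language, and emptiness of parity tree automata is decidable.

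The main obstacle is complementation: each occurrence of $\Qp$ forces one complementation of a tree automaton, which in general requires Safra-style determinization or alternation removal and induces an exponential blow-up in state space together with an increase in parity index. Consequently $A_\phi$ can be non-elementary in the quantifier alternation depth of $\phi$, which explains the high inherent complexity of QCTL{*} tree satisfiability; however, every construction step is effective and closure under complementation and projection is unconditional, so decidability is preserved. The only subtlety I would want to verify carefully is that the tree-semantics notion of a $p$-variant matches exactly the projection operation on tree automata, but this is immediate once the alphabet encoding above has been fixed.
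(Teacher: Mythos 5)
The paper does not actually prove this statement; it is quoted directly from the cited reference \citep{Fr06}, so there is no internal proof to compare against. Your overall strategy --- compile $\phi$ into a parity tree automaton by structural induction, realise $\forall p$ as complementation--projection--complementation, and decide emptiness --- is indeed the standard route taken in the literature for this result, and the observation that tree-semantics $p$-variants correspond exactly to projection over an alphabet carrying a $p$-bit is correct.

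There is, however, one genuine gap: the step ``up to bisimulation I may assume the tree has bounded branching.'' QCTL{*} under tree semantics is \emph{not} bisimulation invariant --- this is precisely why the tree semantics and Kripke semantics differ, and the paper itself notes that QCTL{*} is sensitive to unwinding. For example, $\exists p\left(ENp\wedge EN\neg p\right)$ holds at a node iff it has at least two successors, so bisimilar trees of different branching degrees are distinguished by quantified formulas, and you cannot pass to a bounded-branching bisimulation quotient without changing the truth value of $\phi$. To make the reduction to automata over a fixed finite branching alphabet sound, you need a separate (and non-trivial) argument that satisfiability over all trees coincides with satisfiability over trees of some branching degree computable from $\phi$, or else you must work directly with automata over unboundedly branching trees (amorphous or symmetric automata, or MSO over the full infinitely-branching tree). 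Once that is supplied, the remainder of your argument goes through.
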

We now provide the formal definition of a tree.
\begin{defi}
\label{def:tree}We say $T=\ctlstruct$ is a $\Var$-labelled \uline{tree},
for some set $\Var$, iff\end{defi}
\begin{enumerate}
\item $\allworlds$ is a non-empty set of nodes
\item for all $x,y,z\in\allworlds$ if $\tuple{x,z}\in\ra$ and $\tuple{y,z}\in\ra$
then $x=y$
\item there does not exist any cycle $x_{0}\ra x_{1}\cdots\ra x_{0}$ through
$\ra$
\item there exists a unique node $x$ such that for all $y\in\allworlds$,
if $y\ne x$ there exists a sequence $x\ra x_{1}\cdots\ra y$ through
$\ra$. We call the node $x$ the \uline{root} of the tree $T$
\item the valuation $g$ (or labelling) is a function from $\allworlds$
to $2^{\Var}$, that is for each $w\in\allworlds$, $g\left(w\right)\subseteq\Var$\end{enumerate}
\begin{defi}
We define the height of a finite tree $T=\treestruct$ as follows:
we let $\func{root}\left(T\right)$ be the root of the tree $T$.
We let $\heightT\left(T\right)=\height\left(\rootT\left(T\right)\right)$
where $\height$ is a function from $\allworlds$ to $\natnum$ such
that for all $x\in\allworlds$, we let $\height\left(x\right)$ be
the smallest non-negative integer such that $\height\left(x\right)>\height\left(y\right)$
for all $y$ such that $\tuple{x,y}\in\ra$. 
\end{defi}
For example, a leaf node has a height of 0 since 0 is the smallest
non-negative integer.
\begin{defi}
We say that $v$ is \uline{reachable} from $w$, with respect to
an accessibility relation $\access$, iff there is a path through
$R$ from $w$ to $v$.
\end{defi}

\begin{defi}
We say that a binary relation $R'$ is the \uline{fragment} of
another binary relation $R$ on a set $X$ iff 
\begin{align*}
\Forall{x,y}xR'y\iff x,y\in X\wedge xRy & \mathfullstop
\end{align*}

We say that a function $g'$ is the fragment of another function $g$
on a set $X$ iff $\func{range}\left(g\right)=X\subseteq\func{range}\left(g'\right)$
and $g\left(x\right)=g'\left(x\right)$ for all $x$ in $X$.
\end{defi}

\begin{defi}
We say $C=\left\langle \allworlds_{C},\ra_{C},\valuation_{C}\right\rangle $
is a \uline{subtree} of $T=\ctlstruct$ iff there exists $w\in\allworlds$
such that $\allworlds_{C}$ is the subset of $\allworlds$ reachable
from $w$ and $\ra_{C}$ and $g_{C}$ are the fragments of $\ra$
and $\valuation$ on $\allworlds_{C}$ respectively. We say $C$ is
a \uline{direct subtree} of $T=\treestruct$ if $C$ is a subtree
of $T$ and $\tuple{\textbf{root}\left(T\right),\textbf{root}\left(C\right)}\in\ra$.
\end{defi}

\subsection{\label{sub:Automata}Automata}

In this section we will define some basic terms and properties of
automata that will be used later in this paper. We focus on showing
that we can translate between counter-free automata and LTL \formulae{}.
\begin{defi}
A \uline{Finite State Automaton (FSA)} $\autom=(\alphabet,S,\initial,\transition,F)$
contains
\end{defi}
$\alphabet$: set of symbols (alphabet)

$S$: finite set of automaton states

$\initial$: set of initial states $\subseteq S$

$\transition$ : a transition relation $\subseteq$ $(S\times\alphabet\times S)$

$F$: A set of accepting states $\subseteq S$

We call the members of $\alphabet^{*}$ words. Each transition of
a path through an automaton is labelled with an element $e$ of $\Sigma$.
We say $s_{0}\overset{e_{0}}{\rightarrow}s_{1}\overset{e_{1}}{\rightarrow}\cdots\overset{e_{n-1}}{\rightarrow}s_{n}$
is a path of $\autom$ iff for all $0\leq i<n$ the tuple $\left\langle s_{i},e_{i},s_{i+1}\right\rangle $
is in $\transition$. The label of the path is the word $\left\langle e_{0},e_{1},\ldots,e_{n}\right\rangle $.
We say that a path through an automaton is a run iff $s_{0}\in Q_{0}$.
 A run of an FSA is called accepting if it ends in an accepting state.
We define the language $\lang(\autom)$ recognised by an automaton
to be the set of all words for which there is an accepting run.

\begin{defi}
\textup{We let $L_{p,q}\left(\autom\right)$ be the set of all labels
of paths through $\autom$ from $p$ to $q$.}
\end{defi}
Of particular importance to this paper are counter-free automata.
As will be discussed later we can translate LTL formulas to and from
counter-free automata.
\begin{defi}
A \uline{counter-free} automaton is an automaton such that for
all positive integers $m$, states $s\in S$ and words $u$ in $\alphabet^{*}$,
if $u^{m}\in L_{s,s}$ then $u\in L_{s,s}$~\citep{DiGa08Thomas}.
\end{defi}

\begin{defi}
\label{def:a-determinisation}We define a Deterministic Finite Automaton
(DFA) to be an FSA $\autom=(\alphabet,S,\initial,\transition,F)$
where $\left|Q_{0}\right|=1$ and for every $s$ in $S$ and $e$
in $\alphabet$ there exists exactly one $t\in S$ such that $\left(s,e,t\right)\in\transition$.
\end{defi}

\global\long\def\detm#1{\hat{#1}}

Having given the obvious definition of DFAs as a special case of FSAs,
we will now define a standard determinisation for FSAs.
\begin{defi}
Given an FSA $\autom=(\alphabet,S,\initial,\transition,F)$, we define
the determinisation of $\autom$ to be the DFA $\detm{\autom}=(\alphabet,\detm S,\left\{ \initial\right\} ,\detm{\transition},\detm F)$
with:\end{defi}
\begin{itemize}
\item $\detm S=2^{S}$. Each $\detm s\in\detm S$  represents the set of
states of $\autom$ that $\autom$ could be in now.
\item For each $\detm s,\detm t\in\detm S$, the tuple $\left\langle \detm s,e,\detm t\right\rangle $
is in $\detm{\transition}$ iff $\detm t$ is the maximal subset of
$S$ satisfying $\Forall{t\in\detm t}\Exists{s\in\detm s}\left\langle s,e,t\right\rangle \in\transition$.
\item $\detm s\in\detm F$ iff there is an $s\in\detm s$ such that $s\in F$. 
\end{itemize}
The reason for presenting the above determinisation is to so that
we can show that we can determinise FSA while preseriving counter-free
automata. While this intuitive, it is important to this paper so we
will provide a formal proof.
\begin{lemma}
\label{lem:preserve-counter-free}If $\autom$ is counter-free then
the determinisation $\detm{\autom}$ produced by the above procedure
is counter-free.\end{lemma}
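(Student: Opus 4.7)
The plan is to describe the transition behaviour of $\detm{\autom}$ via the image-function $f_u : 2^S \to 2^S$ defined by $f_u(T) = \{t : \exists s \in T,\ s R_u t\}$, where $R_u$ denotes the word-$u$ step relation of $\autom$, i.e.\ $s R_u t$ iff $s \overset{u}{\rightarrow} t$ in $\autom$. Since $\detm{\autom}$ is a DFA, $u^m \in L_{T,T}(\detm{\autom})$ is equivalent to $f_u^m(T) = T$, so counter-freeness of $\detm{\autom}$ reduces to the claim: for every $m \geq 1$, every $u \in \alphabet^*$ and every $T_0 \subseteq S$, if $f_u^m(T_0) = T_0$ then $f_u(T_0) = T_0$. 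I shall prove this implication.

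Set $T_i := f_u^i(T_0)$, so $T_m = T_0$. Because $f_u$ is monotone in the subset order, a strict inclusion $T_0 \subsetneq T_1$ would propagate to $T_0 \subsetneq T_1 \subseteq T_2 \subseteq \cdots \subseteq T_m = T_0$, a contradiction; by the same argument $T_1 \subsetneq T_0$ is impossible. Hence it suffices to prove $T_0 \subseteq T_1$, which together with $T_1 \not\subsetneq T_0$ will force $T_0 = T_1$.

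The key closure lemma is that $I := \bigcap_{i=0}^{m-1} T_i$ is $R_u$-closed: if $s \in I$ and $s R_u s'$, then $s' \in f_u(T_i) = T_{(i+1) \bmod m}$ for every $i$, so $s' \in I$. The main new ingredient is then to manufacture, for every $t \in T_0$, a self-loop inside $T_0$ lying on the predecessor side of $t$. Using $T_0 = f_u^m(T_0)$ repeatedly, pick $s_1 \in T_0$ with $s_1 R_u^m t$, then $s_2 \in T_0$ with $s_2 R_u^m s_1$, and so on. By pigeonhole on the finite set $T_0$, some $s_j = s_k$ with $j < k$, and composing along the chain gives $s_k R_u^{(k-j)m} s_k$. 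Counter-freeness of $\autom$, applied to the word $u$ and the state $s_k$, then yields $s_k R_u s_k$. Combining $s_k \in T_0$ with this self-loop gives $s_k \in f_u(T_0) = T_1$; iterating, $s_k \in T_i$ for every $i$, so $s_k \in I$.

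Finally, fix any concrete $R_u$-path $s_k = q_0 R_u q_1 R_u \cdots R_u q_{km} = t$ witnessing $s_k R_u^{km} t$. By the closure lemma each $q_i$ lies in $I$, and in particular $t \in I \subseteq T_1$. As $t \in T_0$ was arbitrary, $T_0 \subseteq T_1$, completing the proof. The main obstacle I anticipate is precisely the backward-pigeonhole step: the content of the argument is that counter-freeness of the underlying NFA is exactly what converts the long $u^{(k-j)m}$-loop at $s_k$, produced purely from finiteness, into the single-step self-loop $s_k R_u s_k$ needed to feed the closure lemma and transport membership from $T_0$ into $T_1$.
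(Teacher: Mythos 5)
Your proof is correct, and it is worth noting that it is actually \emph{more} careful than the paper's own argument. The paper argues by contradiction and, from $u^m\in\detm L_{\detm s,\detm s}$, immediately asserts that $u^m\in L_{s,s}$ for \emph{every} $s\in\detm s$; but the deterministic transition is the image function, and a set mapping to itself under $u^m$ does not by itself give each individual state a $u^m$-self-loop (a $3$-cycle on $u$ with $\detm s$ the whole state set and $m=2$ shows the inference is not purely set-theoretic). Your backward-pigeonhole construction is exactly the missing ingredient: it manufactures \emph{some} state $s_k\in T_0$ with $u^{(k-j)m}\in L_{s_k,s_k}$, lets counter-freeness of $\autom$ collapse that to $u\in L_{s_k,s_k}$, and then your closure lemma for $I=\bigcap_i T_i$ transports an arbitrary $t\in T_0$ into $I\subseteq T_1$ along the witnessing $R_u$-path from $s_k$ to $t$ --- whereas the paper needs the self-loop at every state of $\detm s$ to push each element forward individually. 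The two proofs share their endgame (single-step self-loops feed $\detm s_0\subseteq\detm s_1\subseteq\cdots\subseteq\detm s_0$, forcing equality), but your decomposition into monotonicity of $f_u$, the $R_u$-closed intersection $I$, and the pigeonhole on $R_{u^m}$-predecessors is a genuinely different and fully rigorous route; the paper's version is shorter but leaves its key step unjustified.
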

\begin{proof}
Say that $\detm{\autom}$ is not counter-free. Thus there exists $u$,
$m$ and $\detm s$ such that $u^{m}\in\detm L_{\detm s,\detm s}$
but $u\notin\detm L_{\detm s,\detm s}$. 

Note that we have a cycle such that the word $u$ takes us from $\detm s_{0}=\detm s$
to $\detm s_{1}$, from $\detm s_{1}$ to $\detm s_{2}$ and so on
 back to $\detm s_{0}=\detm s$, or more formally: $u\in\bigcap_{i<m}L_{\detm s_{i},\detm s_{i+1}}$
and $u\in L_{\detm s_{m-1},\detm s_{0}}$. Note also that $\detm s\subseteq\mfw$,
and we see that $u^{m}\in L_{s,s}$ for all $s\in\detm s$. As $\autom$
is counter-free is it also the case that $u\in L_{s,s}$ for all $s\in\detm s$.
As $u\in L_{s,s}$ and $s\in\detm s_{0}$ it follows that $s\in\detm s_{1}$;
we may repeat this argument to show that as $s\in\detm s_{1}$ it
must also be the case that $s\in\detm s_{2}$ and so on. Thus $\detm s_{0}\subseteq\detm s_{1}\subseteq\cdots\subseteq\detm s_{0}$
and so $\detm s_{0}=\detm s_{1}=\cdots=\detm s_{0}$. We see $L_{\detm s_{0},\detm s_{1}}=L_{\detm s,\detm s}$
and since $u\in L_{\detm s_{0},\detm s_{1}}$ it follows that $u\in L_{\detm s,\detm s}$,
but we have assumed that $u\notin\detm L_{\detm s,\detm s}$. Hence
by contradiction, $\detm{\autom}$ is counter-free.
\end{proof}
We will use the fact that the determinisation is counter-free to generalise
the following theorem to non-deterministic automata.
\begin{theorem}
\label{thm:Translating-a-counter-free}Translating a counter-free
DFA into an LTL formula results in a formula of length at most $m2^{2^{\bigO\left(n\ln n\right)}}$
where $m$ is the size of the alphabet and $n$ is the number of states
\citep{Wilke99classifyingdiscrete}.
\end{theorem}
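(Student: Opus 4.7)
The plan is to go through the classical equivalence between counter-free automata, aperiodic transition monoids, and LTL, following Wilke's algebraic approach. Since this is a known result being imported via citation, the proposal is really a sketch of the Wilke-style argument and the accounting that produces the stated bound.

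First, I would associate to the DFA $\autom$ its transition monoid $M$: each word $u \in \alphabet^*$ induces a map $\delta_u : S \to S$, and $M$ is the set of all such maps under composition. Standard facts give $|M| \leq n^n = 2^{\bigO(n \ln n)}$, and counter-freeness of $\autom$ translates into aperiodicity of $M$ (for every $x \in M$ there exists $k$ such that $x^k = x^{k+1}$). I would then reduce the problem to producing, for every pair $(a,b) \in M \times M$, an LTL formula $\phi_{a,b}$ satisfied at a position of a word exactly when the finite prefix preceding that position maps to the element $b$ that extends $a$ in the monoid; the final LTL formula recognising $\lang(\autom)$ is then a disjunction over accepting monoid elements and can be read off $\phi_{1,b}$ for $b$ corresponding to accepting states. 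This reduces the alphabet-size factor $m$ to a single outer multiplicative cost coming from how the formula references letters.

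The core step is to build the family $\{\phi_{a,b}\}$ by induction on a Schützenberger/Green-relations decomposition of the aperiodic monoid $M$. Aperiodicity gives a well-founded $\mathcal{J}$-order on $M$ whose depth is at most $|M|$; at each stage one writes the behaviour of multiplication by a letter inside a $\mathcal{J}$-class using only the strictly lower $\mathcal{J}$-classes plus a finite combinatorial glue (this is exactly where \emph{star-freeness}, rather than mere regularity, is used, since without aperiodicity one would need a Kleene-star here). Each inductive step combines the already-built formulas for lower classes using boolean connectives, $\tN$ and $\tU$, and the induced formula size at each level multiplies by a factor polynomial in $|M|$ and by a single factor $m$ for the letter read.

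Finally, I would do the size bookkeeping. With at most $|M|$ levels and a per-level multiplicative blow-up of roughly $|M|^{\bigO(1)} \cdot m$, the size of $\phi_{a,b}$ is bounded by $m \cdot |M|^{\bigO(|M|)} = m \cdot 2^{\bigO(|M| \ln |M|)}$, and plugging $|M| \leq 2^{\bigO(n \ln n)}$ gives the claimed bound $m \cdot 2^{2^{\bigO(n \ln n)}}$. The main obstacle is precisely this accounting: one has to show that the inductive construction really only uses strictly $\mathcal{J}$-lower components (so that aperiodicity prevents a Kleene-star from sneaking in) and that the multiplicative cost per level is genuinely polynomial in $|M|$ rather than exponential, since an exponential-per-level cost would add a third tower to the final bound. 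The remaining ingredients — \prettyref{lem:preserve-counter-free} for determinisation and the standard passage between DFAs and their transition monoids — are routine.
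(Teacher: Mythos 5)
Your proposal is correct in outline but takes a genuinely different route from the one the paper relies on. The paper does not prove this theorem itself: it imports it from \citep{Wilke99classifyingdiscrete}, and the proof it sketches in \prettyref{sub:Finite-DFAs-to} is Wilke's direct double induction on the automaton --- if $u^{\autom}[Q]=Q$ for every word $u$ the case is trivial by counter-freeness, and otherwise one fixes a letter $b$ with $b^{\autom}[Q]\subsetneq Q$, splits $L_{\QtoQ}^{\autom}$ by the number of occurrences of $b$, and recurses on an automaton $B$ with a strictly smaller alphabet and an automaton $C$ with strictly fewer states. You instead pass to the transition monoid and run a Sch\"utzenberger-style induction on $\mathcal{J}$-classes; both decompositions invoke aperiodicity at the same essential point (to exclude a counter/Kleene star), and your bookkeeping $|M|\leq n^{n}$ followed by $|M|^{\bigO(|M|)}$ does land on the same $2^{2^{\bigO(n\ln n)}}$ shape. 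The trade-off is that Wilke's induction makes the dependence on the two parameters transparent, whereas in your route the two delicate points are exactly the ones the bound hinges on: that the per-level cost is polynomial in $|M|$ (which you flag but do not establish), and that the alphabet contributes a single outer factor of $m$ rather than a factor per level --- a per-level factor would give $m^{|M|}$, which is not of the form $m\cdot2^{2^{\bigO(n\ln n)}}$ when $m$ is treated as an independent parameter, and avoiding this is precisely why the cited proof inducts on the alphabet size directly. Since the entire content of the theorem is quantitative, your skeleton is a legitimate alternative, but as written it still rests on unproved accounting at these two points.
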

One minor note is that \citep{Wilke99classifyingdiscrete} uses stutter-free
operators so their $\left(\alpha U\beta\right)$ is equivalent to
our $N\left(\alpha U\beta\right)$; however, this is trivial to translate.

As the determinisation from \prettyref{def:a-determinisation} has
$2^{n}$ states where $n$ is the number of states in the original
FSA, \prettyref{cor:counter-free-FSA-to-LTL-3EXP} below follows from
\prettyref{lem:preserve-counter-free} and \prettyref{thm:Translating-a-counter-free}.
\begin{corollary}
\label{cor:counter-free-FSA-to-LTL-3EXP}Translating a counter-free
FSA into an LTL formula results in a formula of length at most $m2^{2^{\bigO\left(2^{n}n\right)}}$
where $m$ is the size of the alphabet and $n$ is the number of states.
\end{corollary}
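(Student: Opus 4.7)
The plan is to chain together the powerset determinisation of \prettyref{def:a-determinisation}, the preservation property of \prettyref{lem:preserve-counter-free}, and the singly-exponential state bound of \prettyref{thm:Translating-a-counter-free}, accepting one extra exponential in the state count as the cost of moving from an FSA to a DFA.

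First I would take the counter-free FSA $\autom$ with $n$ states over an alphabet of size $m$ and form its determinisation $\detm{\autom}$ exactly as defined in \prettyref{def:a-determinisation}. By construction $\detm{\autom}$ is a DFA over the same alphabet with state set $2^{S}$, hence at most $n' = 2^{n}$ states, and the usual subset-construction argument guarantees $\lang(\detm{\autom}) = \lang(\autom)$ so any LTL formula equivalent to $\detm{\autom}$ is also equivalent to $\autom$. Crucially, \prettyref{lem:preserve-counter-free} tells me $\detm{\autom}$ remains counter-free, so \prettyref{thm:Translating-a-counter-free} is directly applicable and yields an equivalent LTL formula of length at most
\[
m \cdot 2^{2^{\bigO(n' \ln n')}} = m \cdot 2^{2^{\bigO(2^{n} \ln 2^{n})}} = m \cdot 2^{2^{\bigO(2^{n} n)}},
\]
where I have used $\ln 2^{n} = n \ln 2 = \bigO(n)$ to collapse the innermost factor.

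There is no real obstacle beyond bookkeeping: the conceptual content sits entirely in \prettyref{lem:preserve-counter-free}, which ensures that the exponential blowup induced by determinisation does not come at the additional cost of losing counter-freeness, so that \prettyref{thm:Translating-a-counter-free} can be applied off the shelf to $\detm{\autom}$. The only thing to watch carefully is the substitution $n' = 2^{n}$ inside the double exponential, which contributes exactly one extra level of exponential growth and yields the claimed triply-exponential bound.
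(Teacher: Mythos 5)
Your proposal is correct and follows exactly the paper's own route: the paper derives the corollary by observing that the determinisation of \prettyref{def:a-determinisation} has $2^{n}$ states, that \prettyref{lem:preserve-counter-free} preserves counter-freeness, and then substituting into \prettyref{thm:Translating-a-counter-free}. Your bookkeeping step $\ln 2^{n}=\bigO\left(n\right)$ is the same calculation the paper leaves implicit.
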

We now define shorthand for discussing a variant of an automaton
starting at a different state.
\begin{defi}
\label{def:shorthand-As}Given an automaton $\autom=(\alphabet,S,\initial,\transition,F)$,
we use $\autom^{s}$ as shorthand for $(\alphabet,S,\left\{ s\right\} ,\transition,F)$
where $s\in S$. We say that an automaton $\autom$ accepts a word
\uline{from} state $s$ if the automata $\autom^{s}$ accepts the
word.
\end{defi}

\subsubsection{Automata on Infinite Words}

In \thispaper{} we use automata as an alternate representation of
temporal logic \formulae{}. LTL is interpreted over infinitely long
paths, and so we are particularly interested in automata that are
similarly interpreted over infinitely long runs. We will define an
infinite run now.
\begin{defi}
We call the members of $\alphabet^{\omega}$ infinite words. We say\textup{
$s_{0}\overset{e_{0}}{\rightarrow}s_{1}\overset{e_{1}}{\rightarrow}\cdots$
is an infinite path of $\autom$ iff for all }$i\geq0$ the tuple
$\left\langle s_{i},e_{i},s_{i+1}\right\rangle $ is in $\transition$\textup{.
The label of the path is $\left\langle e_{0},e_{1},\ldots\right\rangle $.
}An infinite run $\rho$ of $\autom$ is a path starting at a state
in $\initial$.  
\end{defi}
There are a number of different types of automata that can be interpreted
over infinite runs. These are largely similar to FSA, but have different
accepting conditions. \index{Buchi@\buchi{} automata}\emph{\buchi{}
automata} are extensions of finite state automata to infinite worlds.
A \buchi{} automaton is similar to an FSA, but we say an infinite
run is accepting iff a state in $F$ occurs infinitely often in the
run.

\begin{defi}
\label{def:path-to-word}For a fixed structure $M$, a fullpath $\sigma$
through $M$, and a set of state \formulae{} $\Phi$ we let $g_{\Phi}\left(\sigma_{\leq n}\right)=\left(w_{0},w_{1},\ldots,w_{n}\right)$
and $g_{\Phi}\left(\sigma_{\leq n}\right)=\left(w_{0},w_{1},\ldots\right)$
where $w_{i}=\left\{ \phi\colon\:\phi\in\Phi\wedge M,\sigma_{i}\forces\phi\right\} $
for each non-negative integer $i$.
\end{defi}
We are interested in counter-free automata because it is known that
a language $L$ is definable in \logicindex{LTL}LTL iff $L$ is accepted
by some counter-free \buchi{} automaton~\citep{DiGa08Thomas} (see
\prettyref{thm:LTL=00003Dcounter-free}).

It is well known that we can represent a CTL{*} formula as an LTL
formula over a path, where that path includes state formula as atoms; this is commonly used in model checking~\citep{ModelChecking,318620,clarke1986automatic}.
Recall that \prettyref{thm:LTL=00003Dcounter-free} states that a
language $L$ is definable in LTL iff $L$ is accepted by some counter-free
\buchi{} automaton~\citep{DiGa08Thomas}; thus we can also express
this LTL formula as a counter-free \buchi{} automaton. 

Formally, for any CTL{*} formula $\phi$ there exists a set of state
\formulae{} $\Phi$ and a counter-free automaton $\autom=(2^{\Phi},\States,\initial,\transition,F)$
such that $\autom$ accepts $g_{\Phi}\left(\sigma\right)$ iff $M,\sigma\forces\phi$. 
\begin{defi}
\label{def:equiv-autom2CTL}We say an automaton \textup{$\autom=(2^{\Phi},\States,\initial,\transition,F)$
is equivalent to a formula $\phi$ iff for all structures $M$ and
fullpaths $\sigma$ through $M$ we have: 
\begin{eqnarray*}
\left(\Forall{M,\sigma}M,\sigma\forces\phi\right)\iff\left(\autom\mbox{ accepts }g_{\Phi}\left(\sigma\right)\right)\mathfullstop
\end{eqnarray*}
}
\end{defi}

\subsubsection{Alternating Tree Automata}

 Our succinctness proof in \prettyref{sec:Succinctness} uses results
that show CTL{*} can be translated to tree automata. 

We will define a type of tree automata called \emph{symmetric alternating
automata} (SAA) (see for example~\citep{kupferman00automatatheoretic}),
these are a subclass of alternating automata, and can also be referred
to as just alternating automata (see for example~\citep{DBLP:journals/tcs/Dam94}).

Every node, in the run of an SAA on an input structure $M$, represents
a world of $M$. However, a world $w$ in the input structure $M$
may occur many times in a run. Where a non-deterministic automata
would non-deterministically pick a next state, an SAA non-deterministically
picks a conjunction of elements of the form $\left(\square,q\right)$
and $\left(\lozenge,q\right)$; alternatively we may define SAA as
deterministically picking a Boolean combination of requirements of
this form, see for example~\citep{kupferman00automatatheoretic}.
Alternating automata can also be thought of as a type of parity game,
see for example~\citep{GrThWi02}. An element of the form $\left(\square,q\right)$/$\left(\lozenge,q\right)$
indicates for every/some child $u$ of the current world $w$ of the
input structure $M$, a run on $M$ must have a branch which follows
$u$ and where $q$ is the next state. Before defining SAA we will
first define parity acceptance conditions.

\global\long\def\run{L}

\begin{defi}
A \uline{parity acceptance condition} $F$ of an automaton $(\alphabet,S,\initial,\transition,F)$
is a map from $S$ to $\natnum$. We say that a path satisfies the
parity condition $F$ iff the largest integer $n$, such that $F\left(q\right)=n$
for some $q$ that occurs infinitely often on the path, is even.
\end{defi}
We can now define SAA\@.
\begin{defi}
A \emph{symmetric alternating automata} (\uline{SAA}) is a tuple
\begin{align*}
(\alphabet,S,\initial,\transition,F)
\end{align*}
 where $\Sigma,S$ and $S_{0}$ are defined as in \buchi{} automata,
and 

$\transition$ : a transition function $\subseteq(S\times\alphabet\times2^{\left\{ \square,\lozenge\right\} \times S})$
\end{defi}
We define the acceptance condition $F$ of an SAA to be a parity acceptance
condition, but note that we can express \buchi{} parity conditions
as parity acceptance conditions. The SAA accepts a run iff every infinite
path through the run satisfies $F$.

A run $\run=\left\langle \allworlds_{\run},\ra_{\run},\valuation_{\run}\right\rangle $
of the SAA on a $\Var$-labelled pointed value structure $\pvs$ is
an $\allworlds\times S$-labelled tree structure satisfying the following
properties. Where $g_{\run}\left(\func{root}\left(\run\right)\right)=\tuple{w,q}$,
it is the case that $q\in S_{0}$ and $w=a$. For every $w_{\run}$
in $\allworlds_{\run}$, where $\tuple{w,q}=g_{\run}\left(w_{\run}\right)$
and $e=g\left(w\right)$, there exists some set $X\in2^{\left\{ \square,\lozenge\right\} \times S}$
such that $\tuple{q,e,X}\in\transition$ and
\begin{enumerate}
\item For each $\stateb\in S$ such that $\left(\square,r\right)\in X$,
for \emph{each} $u$ such that $w\ra u$ there must exist $u_{\run}$
such that $w_{\run}\ra_{L}u_{\run}$ and $\left(u,r\right)\in\valuation_{\run}\left(u_{\run}\right)$. 
\item For each $\stateb\in S$ such that $\left(\lozenge,r\right)\in X$,
for \emph{some} $u$ such that $w\ra u$ there must exist $u_{\run}$
such that $w_{\run}\ra_{L}u_{\run}$ and $\left(u,r\right)\in\valuation_{\run}\left(u_{\run}\right)$.\end{enumerate}

\begin{theorem}
\label{thm:CTL2AA-single-exponential}Given a CTL{*} formula $\psi$
we can construct an SAA $\autom_{\psi}$ with a number of states that
is singly exponential in the length of $\psi$. \end{theorem}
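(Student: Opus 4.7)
The plan is to proceed by induction on the structure of $\psi$, following the standard alternating-automata construction for CTL{*} (as in Kupferman--Vardi--Wolper style arguments) and adapted to the SAA definition given above. For atomic propositions and Boolean combinations the construction is straightforward: atoms give a constant-size SAA, while alternating automata are closed under conjunction, disjunction, and negation with at most a linear increase in states (negation is handled by dualising $\square$ with $\lozenge$ and exchanging the parity ranks), so these cases contribute only linearly to the final size.

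The interesting case is $A\phi$, where $\phi$ is a path formula. First I would identify the maximal proper state subformulas $\chi_1, \dots, \chi_k$ of $\phi$ and, by the induction hypothesis, obtain SAAs $\autom_{\chi_1}, \dots, \autom_{\chi_k}$ each of size singly exponential in the length of the corresponding $\chi_i$. Next I would treat each $\chi_i$ as a fresh proposition and view $\phi$ as a pure LTL formula $\phi'$ over the alphabet $2^{\Phi}$ where $\Phi = \{\chi_1, \dots, \chi_k\}$. Applying the standard linear translation from LTL to alternating word automata (Vardi's construction) yields an alternating word automaton $\mathcal B_{\phi'}$ whose size is linear in $|\phi'|$. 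I would then lift $\mathcal B_{\phi'}$ to an SAA over the input structure by replacing each word transition requiring the word to continue with a state $q$ by the tree transition $(\square,q)$, thereby encoding the universal path quantifier $A$. Finally the sub-automata $\autom_{\chi_i}$ are glued in: whenever the lifted automaton reads a symbol requiring $\chi_i$ to hold, it spawns a parallel run of $\autom_{\chi_i}$ at that node.

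To bound the size, note that the lifting step adds only the states of $\mathcal B_{\phi'}$ (linear in $|\phi|$) on top of the disjoint union of the $\autom_{\chi_i}$; so if $T(n)$ denotes the state count for a formula of length $n$, the recurrence satisfies $T(n) \leq c\cdot n + \sum_i T(|\chi_i|)$, which solves to a function at most singly exponential in $n$ (in fact polynomial, but a singly exponential upper bound is all that is needed). The parity acceptance condition is inherited from the Büchi-style acceptance on $\mathcal B_{\phi'}$, with ranks suitably shifted when embedding sub-automata so that the maximum rank seen infinitely often on any path through the composed run correctly reflects acceptance of the original formula.

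The main obstacle I expect is bookkeeping for the parity ranks across the recursive substitution: one must ensure that an accepting run of a sub-automaton $\autom_{\chi_i}$ does not spuriously satisfy the acceptance condition of the outer automaton, and vice versa. The standard remedy is to stratify the ranks by the induction level so that ranks used at an inner level cannot interfere with the acceptance decision at any outer level; with this in place the correctness argument reduces to checking, via a routine simulation, that a run tree exists iff $M, \rootT(M) \forces \psi$, and the state-count bound then follows immediately from the recurrence.
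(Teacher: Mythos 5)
There is a genuine gap in the step that handles the path quantifier. You propose to take the linear-size alternating \emph{word} automaton $\mathcal{B}_{\phi'}$ for the LTL formula $\phi'$ and ``lift'' it to a tree automaton for $A\phi'$ by replacing each successor obligation $q$ with $(\square,q)$. This does not compute $A\phi'$. The problem is that the disjunctive (nondeterministic) choices in $\mathcal{B}_{\phi'}$ are resolved \emph{once per node of the run}, and after lifting they are therefore resolved \emph{uniformly for all children}, whereas $A\phi'$ allows each path to satisfy $\phi'$ for a different reason. Concretely, take $\psi=A(Np\vee Nq)$: the word automaton for $Np\vee Nq$ nondeterministically moves to a ``check $p$'' state or a ``check $q$'' state, so the lifted automaton accepts a node iff all children satisfy $p$ or all children satisfy $q$; a node with one child labelled $\{p\}$ and another labelled $\{q\}$ satisfies $\psi$ but is rejected. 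This is precisely the subtlety that forces the Kupferman--Vardi--Wolper construction to first convert the path formula (or its negation) into a \emph{nondeterministic} B\"uchi word automaton --- an exponential step --- before running it along branches; the single exponential in the theorem comes from exactly that step, not from the recurrence you write down. Your parenthetical remark that the construction is ``in fact polynomial'' is a further warning sign: a polynomial-size alternating parity tree automaton for every CTL{*} formula would give an EXPTIME satisfiability procedure, contradicting the known 2EXPTIME-hardness of CTL{*}.

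For comparison, the paper sidesteps the word-automaton issue entirely: it cites Dam's singly exponential translation of CTL{*} into the $\mu$-calculus (whose states are sets of formulas) followed by Wilke's linear translation of $\mu$-calculus into alternating tree automata, so the exponential is paid in the first step and the automaton construction itself is linear. Your overall inductive architecture (substituting maximal state subformulas by fresh atoms and composing sub-automata with stratified priorities) is sound and is the standard direct route, but to repair it you must insert the LTL-to-nondeterministic-B\"uchi conversion before descending into the tree, and then the size bound becomes singly exponential rather than polynomial.
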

\begin{proof}
\citet{DBLP:journals/tcs/Dam94} provides a translation of CTL{*}
\formulae{} into equivalent $\mu$-calculus. The nodes are sets of
\formulae{}, so this is a singly exponential translation.

There are a number of translations of $\mu$-calculus into alternating
automata. Wilke gives a simple translation that does not assume that
the tree has any particular structure~\citep{Wilke00alternatingtree}.
The states in the resulting automata are sub\formulae{} of the $\mu$-calculus
formula. Hence the translation into alternating automata is linear.
\end{proof}
The translation via $\mu$-calculus above is sufficient for \thispaper{}.
There are translations that result in more optimised model checking
and decision procedure results~\citep{kupferman00automatatheoretic}.

\subsection{\label{sub:Bisimulations}Bisimulations}

An important concept relating to structures is bisimilarity, as two
bisimilar structures satisfy the same set of modal \formulae{}. 
We credit \citet{milner1980calculus} and \citet{ParkDavid1981} for
developing the concept of bisimulation.
\begin{defi}
\label{def:PVS}Where  $M=\ctlstruct$ is a structure and $a\in\allworlds$,
we say that $M_{a}$ is a Pointed Valued Structure (\uline{PVS}).

\end{defi}
We now provide a definition of a bisimulation.
\begin{defi}
\label{def:bisimulation}Given a PVS $\pvs$ and a PVS $\pvsB$ we
say that a relation $\bisimulation$ from $S$ to $\hat{\allworlds}$
is a \uline{bisimulation} from $\pvs$ to $\pvsB$ iff\end{defi}
\begin{enumerate}
\item $\left(\pvsW,\pvsWB\right)\in\bisimulation$ 
\item for all $\left(u,\hat{u}\right)\in\bisimulation$ we have $g\left(u\right)=\hat{g}\left(\hat{u}\right)$.
\item for all $\left(u,\hat{u}\right)\in\bisimulation$ and $v\in uR$ there
is some $\hat{v}\in\hat{u}\hat{R}$ such that $\left(v,\hat{v}\right)\in\bisimulation$.
\item for all $\left(u,\hat{u}\right)\in\bisimulation$ and $\hat{v}\in\hat{u}\hat{R}$
there is some $v\in uR$ such that $\left(v,\hat{v}\right)\in\bisimulation$.
\end{enumerate}
Bisimulations can be used to define bisimilarity.
\begin{defi}
\label{def:bisimilar}We say that $\pvs$ and $\pvsB$ are \uline{bisimilar}
iff there exists a bisimulation from $\pvs$ to $\pvsB$.
\end{defi}

\begin{defi}
\label{def:bisimulation-invariant}We say that a formula $\phi$ of
some logic L is \uline{bisimulation invariant} iff for every bisimilar
pair of PVS's $\left(M,w\right)$ and $(\hat{M},\hat{w})$ where $M$
and $\hat{M}$ are structures that L is interpreted over, we have
$M,w\forces\phi$ iff $\hat{M},\hat{w}\forces\phi$. We say the logic
L is bisimulation invariant iff every formula $\phi$ of L is bisimulation
invariant.
\end{defi}
Knowing that a logic is bisimulation invariant is useful because we
can take the tree-unwinding of a structure without changing the set
of \formulae{} that it satisfies.

\providecommand{\ifnoroot}[1]{#1} 

\ifnoroot{}

\providecommand{\thispaper}{this thesis}

\subsection{\label{sec:Expressive-Equivalences}Expressive Equivalences}

While \thispaper{} focuses on temporal logic, there are many ways
of defining the languages expressible by LTL\@.  This is very useful,
as it provides us with many ways to model the expressivity of temporal
logics. We are particularly interested in the expressive equivalence
of LTL with counter-free \buchi{} automata. 

In \prettyref{sub:First-Order-Definable-Lanuages}, we will outline
some important results relating to expressive equivalences, focusing
on those presented in \cite{DiGa08Thomas}. There are a number of
reasons we present these here. Firstly, by showing the many results
that \cite{DiGa08Thomas} builds upon we hope to give the reader a
feel for the complexity of attempting to follow approach of \cite{DiGa08Thomas}
in proving that LTL and counter-free \buchi{} automata have the same
expressive power. Secondly, since \cite{DiGa08Thomas} uses many results,
having a map of those results and where to find them in the paper
can be of assistance in following the work of \cite{DiGa08Thomas}.

In \prettyref{sub:Finite-DFAs-to}, we outline the proof of \cite{Wilke99classifyingdiscrete}
that any language recognised by a finite counter-free DFA can be represented
in LTL\@. We note that this result is much weaker than the theorem
of \cite{DiGa08Thomas}. However, this result  is simple and constructive.
This allows us to get an idea as to what the \formulae{} translated
from DFAs might look like, as well as an indication of the length
of the translated \formulae{}.

\subsection{\label{sub:First-Order-Definable-Lanuages}First-Order Definable
Languages}

We here present a summary of some significant results in first order
definable languages. We focus on the survey paper of \cite{DiGa08Thomas},
which provides a very powerful equivalence theorem.
\begin{theorem}
\label{thm:LTL=00003Dcounter-free}For any language L, the following
statements are all equivalent. \cite{DiGa08Thomas}\end{theorem}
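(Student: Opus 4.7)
The plan is to close a cycle of implications through the five standard characterisations of first-order definable languages: (i) definability in LTL, (ii) acceptance by a counter-free \buchi{} automaton, (iii) first-order definability in the signature with $<$, (iv) star-freeness of a regular expression for $L$, and (v) aperiodicity of the syntactic monoid of $L$. Since equivalence is symmetric and transitive, a directed cycle through these properties suffices. I would follow the assembly in \cite{DiGa08Thomas} rather than attempt a new direct argument.

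First I would handle LTL $\Rightarrow$ first-order, which is the simplest link: each temporal connective has a direct translation into a first-order formula over $<$, with one free position variable marking ``now''. The induction on formula structure is routine, using fresh position variables to bound the scope of each $U$ and $N$. Next, for first-order $\Rightarrow$ star-free, I would appeal to the classical McNaughton--Papert construction, which turns each existential quantifier into a concatenation and each negation into a complementation, producing an expression in which the Kleene star is never used.

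For star-free $\Rightarrow$ aperiodic, I would invoke Schütz\-en\-ber\-ger's theorem, induct on the construction of the star-free expression, and isolate the nontrivial case of concatenation; boolean operations preserve aperiodicity of the syntactic monoid essentially for free. From aperiodic $\Rightarrow$ counter-free, the argument observes that the transition monoid of the minimal DFA is a quotient of the syntactic monoid and thus inherits aperiodicity, and that aperiodicity of the transition monoid is exactly the counter-free condition on paths stated in our definition. Finally, for counter-free $\Rightarrow$ LTL, Wilke's construction (\prettyref{thm:Translating-a-counter-free}) gives the translation in the finite-word case, and the infinite-word case extends it using the separation property of star-free $\omega$-languages to decompose the accepted $\omega$-language as a finite union of concatenations $UV^{\omega}$ of star-free languages, each then expressible in LTL.

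The main obstacle will be the counter-free $\Rightarrow$ LTL step lifted to $\omega$-words. Even in the finite case, Wilke's construction already passes through delicate loop-cancellation lemmas reflected in the double-exponential blowup of \prettyref{thm:Translating-a-counter-free}; in the $\omega$-case one must handle the interaction between counter-freeness and the \buchi{} acceptance condition, which depends on the strongly connected component structure of the automaton rather than on finite acceptance. The key technical instrument, and where I would spend most of the effort, is the separation theorem for star-free $\omega$-languages (Gabbay--Pnueli--Shelah--Stavi), which is what lets one recover a single LTL formula from the SCC decomposition. Every other implication in the cycle is comparatively light and can be cited or sketched quickly.
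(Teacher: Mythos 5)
The paper does not actually prove this theorem: it is imported wholesale from \cite{DiGa08Thomas}, and the surrounding text is explicitly only a bibliographic roadmap of where each implication lives in that survey (``we will not reproduce them here''), used later purely as an existence result. So your attempt at a self-contained argument is already doing more than the paper does. Judged as a proof of the statement as written, however, it is incomplete: you close a cycle through only five of the eight listed properties and never touch first-order definability with three variables [5], aperiodic automata [7], or very weak automata [8], so the equivalences involving those items are simply not established. The paper's roadmap, by contrast, routes through all eight (e.g.\ [4]$\implies$[8]$\implies$[7] and [8]$\implies$[3] via Propositions 41 and 43 of the survey).

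There is also a concrete technical gap in the cycle you do propose: several links are finite-word arguments that do not transfer to $\omega$-languages in the form you state them. The step from aperiodicity to counter-freeness via ``the transition monoid of the minimal DFA is a quotient of the syntactic monoid'' presupposes a canonical deterministic acceptor, and no such object exists in the \buchi{} setting --- star-free $\omega$-languages such as the one defined by $FGp$ are not recognizable by \emph{any} deterministic \buchi{} automaton, so there is no minimal DFA to quotient. Likewise McNaughton--Papert and Sch\"utzenberger are finite-word theorems; their $\omega$-analogues hold but require the Ramsey-type factorization machinery that the survey develops in its Sections 6 and 8 and Proposition 34, which your sketch elides. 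You do correctly identify counter-free $\implies$ LTL on $\omega$-words as the genuinely hard step, and Wilke's construction (\prettyref{thm:Translating-a-counter-free}) plus a separation argument is a legitimate instrument there; but as it stands the proposal proves a weaker theorem than the one stated, with at least one link that fails for $\omega$-words as argued.
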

\begin{enumerate}
\item L is first-order definable
\item L is star-free
\item L is aperiodic
\item L is definable in LTL
\item L is first-order definable with at most 3 names for variables
\item L is accepted by a counter-free \buchi{} automata
\item L is accepted by some aperiodic automata
\item L is accepted by some very weak automata
\end{enumerate}
Below we summarise the results that provide the basis for this theorem.
Given that the proofs are numerous and frequently complex we will
not reproduce them here. Further, since we are only interested in
counter-free \buchi{} automata and LTL we do not define the other
terms used in the theorem. Readers are invited to read \cite{DiGa08Thomas}
if they are interested in this detail.

\begin{figure}
\begin{centering}
\input{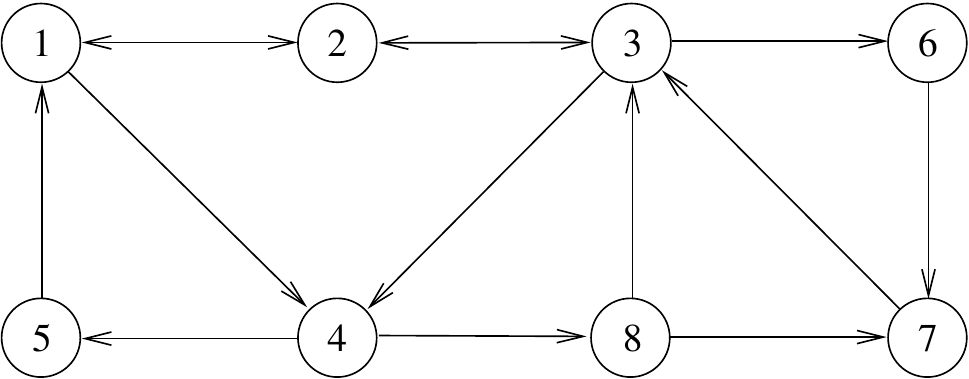tex_t}
\par\end{centering}

\caption{\label{fig:Visual-Summary-of}Visual Summary of Equivalence Results
in \prettyref{thm:LTL=00003Dcounter-free}}
\end{figure}

{[}1{]}$\implies${[}4{]}: This is in essence Kamp's Theorem \cite{phd-kamp}.
Note that Kamp focuses on translating into a temporal logic with past-time
operators; however, this can be translated back into LTL \cite{567462,gabbay1994temporal}.

{[}1{]}$\iff${[}2{]}: \cite{DiGa08Thomas} cites \cite{Perrin:1986:FLS:9118.9126},
and presents a proof in their Section 4, as well as an alternate
proof of {[}1{]}$\implies${[}2{]} in their section 10.2. 

{[}2{]}$\iff${[}3{]}: \cite{Perrin:1984:RRA:645716.665310}, and
\cite[Section 6]{DiGa08Thomas} 

{[}3{]}$\implies${[}4{]}:  This is one of the more complex proofs
of this paper \cite[Section 8]{DiGa08Thomas}. It serves a similar
purpose to Kamp's theorem.

{[}3{]}$\implies${[}6{]}$\implies${[}7{]}$\implies${[}3{]}: This
is their Proposition 34, \cite[p27]{DiGa08Thomas}. This builds on
a number of results discussed in the paper. For example, {[}6{]}$\implies${[}7{]}
is trivial because since any counter-free \buchi{} automaton is periodic,
which is Lemma 29 of \cite[p25]{DiGa08Thomas}.  

{[}4{]}$\implies${[}8{]}$\implies${[}7{]}: This is mentioned at
the top of page 4. {[}4{]}$\implies${[}8{]} is Proposition 41 of
\cite[p35]{DiGa08Thomas}. The proof takes LTL \formulae{} in positive
normal form and provides a simple construction of the corresponding
weak alternating automata. {[}8{]}$\implies${[}7{]} does not appear
to be explicitly stated in the text, but a translation into \buchi{}
automata is given and in the proof of Proposition 43 \cite[p36]{DiGa08Thomas},
it is mentioned that the automata has an aperiodic transition monoid,
and so by definition is an aperiodic automata.

{[}4{]}$\implies${[}5{]}: \cite{DiGa08Thomas} describes this as
trivial and presents a simple proof (Section 7 p12--13).

{[}5{]}$\implies${[}1{]}: Obvious as {[}5{]} is a restriction of
{[}1{]}.

{[}8{]}$\implies${[}3{]}: Proposition 43 of \cite[p36]{DiGa08Thomas}.

We now present a brief outline of the path from counter-free automata
to LTL, and where they are found in \cite{DiGa08Thomas}. First it
is shown that counter-free automata are aperiodic {[}p25, lemma 29{]}.
Translating aperiodic automata into aperiodic monoids is discussed
{[}p28{]}. The most substantial part of the proof is the translation
from aperiodic monoids (or homomorphisms). The set of words and the
concatenation operator can be considered an infinite monoid {[}p13{]}.
We can choose a homomorphism from that infinite monoid to a finite
monoid. They present a factorisation of the words, and we can factorise
words of a language to produce a simplified language. The translation
into LTL has two major steps, translating the simplified language
into LTL, and showing that the existence of an LTL formula for the
simplified language demonstrates the existence of an LTL formula for
the original language.

Translating LTL to counter-free \buchi{} automata would seem significantly
more simple. The obvious powerset construction is counter-free, though
it has a Streett acceptance condition rather than \buchi{}. Note
that \cite{DiGa08Thomas} is used in \thispaper{} only for an existence
result, and so the details are not important to \thispaper{}; following
\prettyref{fig:Visual-Summary-of} counter-clockwise from {[}4{]}
to {[}6{]} is sufficient, even though this is presumably not cleanest
or simplest route possible.

\subsection{\label{sub:Finite-DFAs-to}Finite Counter-free DFAs to LTL}

\selectlanguage{english}%
\global\long\def\preautom{\autom}

\global\long\def\QtoQ{\alpha}

\selectlanguage{british}%
We here outline the proof of \cite{Wilke99classifyingdiscrete}, showing
how we may translate a counter-free DFA into an LTL formula.

For any automaton (or pre-automaton) $\preautom$, word $u$ and state
$q$. We use $u^{\preautom}\left(q\right)$ to represent the current
state of the automaton after starting at state $q$, and reading the
word $u$. For any function $\QtoQ\colon\: Q\rightarrow Q$, we let
the language $L_{\QtoQ}^{\preautom}$ be the set of words $u$ such
that $u^{\preautom}=\QtoQ$. For any set $S$, we let $u^{\preautom}\left[S\right]=\left\{ u^{\preautom}(q)\colon\: q\in S\right\} $.
\begin{theorem}
The language recognised by any counter-free DFA $\autom$ can be expressed
in LTL\@. \cite{Wilke99classifyingdiscrete}
\end{theorem}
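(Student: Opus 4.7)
The plan is to prove this by induction on the structure of the transition monoid of $\autom$, leveraging counter-freeness as aperiodicity. First I would associate to $\autom = (\Sigma, Q, q_{0}, \delta, F)$ its transition monoid $M = \{u^{\autom} : u \in \Sigma^{*}\}$, viewed as a finite monoid under composition. Counter-freeness translates directly into aperiodicity: the defining property $u^{m} \in L_{s,s} \Rightarrow u \in L_{s,s}$ forces, for every $\alpha \in M$, the existence of an $n$ with $\alpha^{n} = \alpha^{n+1}$.

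The key reduction is to show that each language $L_{\alpha}^{\autom} = \{u : u^{\autom} = \alpha\}$ is LTL-definable, since $L(\autom) = \bigcup \{L_{\alpha}^{\autom} : \alpha(q_{0}) \in F\}$ is a finite union and LTL is closed under Boolean combinations. I would prove LTL-definability of each $L_{\alpha}^{\autom}$ by strong induction on the rank $r(\alpha) = |\alpha[Q]|$, with a secondary induction inside each rank level. The base case $r(\alpha) = 1$ is immediate: $\alpha$ collapses every state onto a single $q^{*}$, and $L_{\alpha}^{\autom}$ is captured by a Boolean combination of assertions that each starting state reaches $q^{*}$ at the end. For the inductive step, I would factor any $u$ with $u^{\autom} = \alpha$ as $u = v\,a\,w$, where $v$ is the longest prefix such that $(va)^{\autom}$ still has rank $r(\alpha)$, $a \in \Sigma$ is the letter that causes the drop, and $w$ is the tail. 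LTL's \textbf{until} modality expresses exactly this ``longest prefix maintaining an invariant'' phrasing, and the tail $w$ lies in some $L_{\beta}^{\autom}$ with $r(\beta) < r(\alpha)$, to which the inductive hypothesis applies.

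The main obstacle is the sub-induction within a fixed rank, where we must describe words whose induced transformations all stay in a single $\mathcal{J}$-class of $M$. This is where aperiodicity becomes indispensable: in an aperiodic finite monoid each $\mathcal{H}$-class is trivial, so there is no hidden cyclic group structure that would force us to count modulo a period greater than one — a style of counting that LTL provably cannot perform. With $\mathcal{H}$-classes trivial, membership in $L_{\alpha}^{\autom}$ within its $\mathcal{J}$-class reduces to testing finitely many letter-by-letter conditions, assembled using \textbf{until} together with Boolean connectives. Combining the two levels of induction yields an LTL formula for every $L_{\alpha}^{\autom}$, and the final disjunction over accepting $\alpha$ gives the desired formula for $L(\autom)$. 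Tracking the size of this construction would then recover the quantitative bound of \prettyref{thm:Translating-a-counter-free}.
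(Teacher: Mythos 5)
Your top-level reduction is the same as the paper's (Wilke's): express each language $L_{\alpha}^{\autom}=\{u : u^{\autom}=\alpha\}$ in LTL and take the finite disjunction over the $\alpha$ with $\alpha[Q_{0}]\cap F\neq\emptyset$. But your inductive step has a genuine gap, and it sits exactly where you diverge from Wilke. First, the claim that in the factorisation $u=v\,a\,w$ the tail $w$ lies in some $L_{\beta}^{\autom}$ with $r(\beta)<r(\alpha)$ is false: take $w=\epsilon$, so $w^{\autom}$ is the identity and has full rank. What actually shrinks is not the rank of $w^{\autom}$ as a transformation of $Q$ but the state set on which $w$ needs to be tracked, namely the image of $(va)^{\autom}$. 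Wilke handles this by building a \emph{new} automaton $C$ whose states are $b^{\autom}[Q]$ (a strict subset of $Q$) and whose alphabet consists of blocks ending in $b$, and runs a double induction (shrink the alphabet via the $b$-free automaton $B$, or shrink the state space via $C$); your single induction on $r(\alpha)$ of the original automaton does not decrease. Second, ``$v$ is the longest prefix such that the induced transformation has a given rank'' is a property of whole prefixes, not of positions, and LTL's \textbf{until} cannot test it: \textbf{until} checks letter-based (or previously-constructed) conditions at individual positions. Wilke sidesteps this by fixing a concrete letter $b$ with $b^{\autom}[Q]\subsetneq Q$ and splitting $L_{\alpha}^{\autom}$ by the number of occurrences of $b$ into $L_{0},L_{1},L_{2}$; ``before the first $b$'' and ``after the last $b$'' are letter-delimited contexts and hence LTL-definable.

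Two further points. Your base case $r(\alpha)=1$ is not immediate: if a letter $c$ collapses all states to $q_{0}$ and another letter then cycles $q_{0}\mapsto q_{1}\mapsto q_{0}$, the resulting constant maps have rank $1$ yet their languages require counting modulo $2$, so aperiodicity is needed even here and cannot be waved through. Finally, the appeal to trivial $\mathcal{H}$-classes reducing membership ``to finitely many letter-by-letter conditions'' asserts, rather than proves, the technical heart of the aperiodic-to-LTL direction (this is essentially the hard part of the Diekert--Gastin route from aperiodic monoids to LTL, which the paper explicitly declines to reproduce because of its complexity). As written, the proposal does not constitute a proof; it would need to be restructured around the restricted automata $B$ and $C$ and a letter-based decomposition to go through.
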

Due to the importance of this result to \prettyref{sub:RoCTL*-to-ALTL-and-CTL*},
we will briefly outline their proof. They prove that for all words
$u$ the language $L_{\QtoQ}^{\preautom}$ can be expressed in LTL\@.
It is then clear that the language recognised by $\autom$ can be
expressed by the LTL formula:
\begin{align*}
\bigvee_{\QtoQ\st\alpha[Q_{0}]\cap F\ne\emptyset} & \textrm{LTL}\left(L_{\QtoQ}^{\preautom}\right),
\end{align*}
where $\textrm{LTL}\left(L_{\QtoQ}^{\preautom}\right)$ is the LTL
formula that defines the language $L_{\QtoQ}^{\preautom}$.  

The proof that $L_{\QtoQ}^{\preautom}$ can be expressed in LTL works
by induction, either reducing the state space at the expense of increasing
the alphabet, or shrinking the alphabet without increasing the state
space. 

They note that, since $\preautom$ is counter-free, if $u^{\preautom}\left[Q\right]=Q$
then $u^{\preautom}$ is the identity (that is $u^{\preautom}\left(q\right)=q$
for all $q\in Q$). Hence if $u^{\autom}[Q]=Q$ for all $u$ then
it is trivial to express $L_{\QtoQ}^{\preautom}$ in LTL\@. Otherwise
there is some input symbol $b$ such that $b^{\autom}[Q]$ is a strict
subset of $Q$.

They then define three languages based on $b$; $L_{0}$ the restriction
of $L_{\QtoQ}^{\preautom}$ where $b$ does not occur; $L_{1}$ the
restriction of $L_{\QtoQ}^{\preautom}$ where $b$ occurs precisely
once; and $L_{2}$ the restriction where $b$ occurs at least twice.
Let $B$ be the obvious restriction of $\autom$ such that $b$ is
removed from the input language, and let $\tilde{L}_{\QtoQ}^{B}$
be $L_{\alpha}^{B}\union\{\epsilon\}$. They also define $C$ such
that the language recognised by $C$ is similar to that of $\preautom$
except that the input symbols of $C$ are in essence words that end
in $b$, and so we can restrict the states of $C$ to be $b^{\autom}[Q]$.
Recall that $b^{\autom}[Q]$ is a strict subset of $Q$ and so we
have reduced the number of states. They define a function $h$ to
translate the words of $\preautom$ into words of $C$, and likewise
$h^{-1}$ translates the words of $C$ into words of $\autom$. They
provide the following equalities:
\begin{align*}
L_{0} & =L_{\alpha}^{B},\quad\bigcup_{\alpha=\beta b^{\autom}\beta'}\overset{L_{\beta,\beta'}}{\overbrace{\tilde{L}_{\beta}^{B}b\tilde{L}_{\beta'}^{B}}},\quad L_{2}=\bigcup\overset{L_{\beta,\gamma,\beta'}}{\overbrace{\tilde{L}_{\beta}^{B}bh^{-1}\left(L_{\gamma}^{C}\right)\tilde{L}_{\beta'}^{B}}}
\end{align*}
They let $\Gamma=\Sigma\setdiff\left\{ b\right\} $, and note that
\begin{align*}
L_{\beta,\beta'}=\tilde{L}_{\beta}^{B}b\Gamma^{*}\cap\Gamma^{*}b\tilde{L}_{\beta'}^{B} & \quad L_{\beta,\gamma,\beta'}=\Sigma^{*}b\tilde{L}_{\beta'}^{B}\cap\Gamma^{*}bh^{-1}\left(L_{\gamma}^{C}\right)\Gamma^{*}\cap\tilde{L}_{\beta}^{B}b\Sigma^{*}\mathfullstop
\end{align*}
Since $B$ has a smaller input language, and $C$ has a smaller state
space, we can assume by way of induction that $L_{\alpha}^{B}$, $\tilde{L}_{\beta}^{B}$,
$\tilde{L}_{\beta'}^{B}$ and $L_{\gamma}^{C}$ can be expressed in
LTL\@. It follows that $L_{\QtoQ}^{\preautom}$ can be expressed
in LTL\@. The result then follows from induction. 
\begin{corollary}
\label{cor:Translating-a-counter-free}Translating a counter-free
DFA into an LTL formula results in a formula of length at most $m2^{2^{\bigO\left(n\ln n\right)}}$
where $m$ is the size of the alphabet and $n$ is the number of states.
\cite{Wilke99classifyingdiscrete}
\end{corollary}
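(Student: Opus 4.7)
The plan is to extract the length bound directly from the inductive translation of Wilke sketched immediately before the corollary. The recursion is well-defined by two simultaneous measures: the alphabet size $m$ (reduced when passing to the sub-automaton $B$) and the state-space size $n$ (strictly reduced when passing to $C$, since $b^{\autom}[Q] \subsetneq Q$). Once these termination measures are pinned down, the remaining task is just to bound the size of the formula produced at each recursive step and solve the resulting recurrence.

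Concretely, I would define $U(m,n)$ to be the maximum length, over all counter-free DFAs $\autom$ with $m$ input symbols and $n$ states and over all $\alpha \in Q^Q$, of the LTL formula produced by the construction for $L_{\alpha}^{\autom}$. The top-level formula recognising $\lang(\autom)$ is the disjunction $\bigvee_{\alpha : \alpha[Q_0] \cap F \neq \emptyset} \textrm{LTL}(L_\alpha^{\autom})$, with at most $n^{n}$ disjuncts, so it suffices to bound $U(m,n)$ and multiply by $n^{n}$. Reading off the decomposition $L_\alpha^{\autom} = L_0 \cup \bigcup_{\alpha = \beta b^{\autom} \beta'} L_{\beta,\beta'} \cup \bigcup L_{\beta,\gamma,\beta'}$, each big union has at most $n^{2n}$ respectively $n^{3n}$ terms (one per pair respectively triple of maps in $Q^Q$), and every term is a fixed Boolean combination of at most four sub-formulas drawn from LTL translations of $L_{\alpha'}^B$ and $L_{\gamma}^C$. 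Since the alphabet of $C$ consists of equivalence classes of $B$-words under their induced transition map on $Q$, it is bounded by $n^{n}$. This gives a recurrence of the shape
\begin{align*}
U(m,n) \,\leq\, n^{O(n)} \,\bigl( U(m-1,n) + U(n^{n},\,n-1) \bigr) \,+\, O(m),
\end{align*}
with base cases $U(m,1)=O(m)$ (only the identity transition is possible, so the language is either $\Sigma^{*}$ or $\emptyset$) and $U(0,n)=O(1)$.

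To solve the recurrence I would unroll the $m$-axis first, picking up a factor $n^{O(n)\cdot m}$ from the depth-$m$ chain of $B$-reductions plus an additive contribution from the $C$-branches, and then induct on $n$. Setting $f(n)$ for the resulting bound once the $m$-axis has been collapsed, the $n$-recursion has depth at most $n$, and at each level the alphabet of the $C$-sub-problem is bounded by $n^{n} = 2^{O(n \log n)}$, so the iterated composition gives $f(n) = 2^{2^{O(n\log n)}}$. Reintroducing the linear-in-$m$ contribution from the alphabet (it appears only once, at the outermost application, because the $B$-reductions strip symbols one at a time and each strip contributes only a constant Boolean combination per $\alpha$-decomposition), the final bound is $m\cdot 2^{2^{O(n\log n)}}$, as claimed.

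The main obstacle is the bookkeeping in the double induction: one has to verify that the alphabet growth when passing to $C$ really is at most $n^{n}$ and does not snowball across levels, and that the $n^{O(n)}$ factor per level combines multiplicatively rather than super-exponentially when the two axes are unrolled in the right order. Getting this ordering right is what keeps $m$ as a linear prefactor rather than landing it inside the tower. Everything else is a routine arithmetic check of Wilke's recurrence.
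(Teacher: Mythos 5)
The paper does not actually prove this corollary: both here and in Theorem~\ref{thm:Translating-a-counter-free} the length bound is imported verbatim from \cite{Wilke99classifyingdiscrete}, and the surrounding subsection only sketches the qualitative construction (that each $L_{\QtoQ}^{\preautom}$ is LTL-expressible), never the quantitative analysis. So you are attempting something the paper deliberately leaves to the citation, which is fine in principle --- but your derivation does not close. The problem is the $m$-axis. Your own recurrence has branching factor $n^{O(n)}$ per stripped letter, because each of the up to $n^{2n}$ pairs (resp.\ $n^{3n}$ triples) in the decomposition of $L_{\QtoQ}^{\preautom}$ carries its own copies of the subformulas for $\tilde{L}_{\beta}^{B}$ and $\tilde{L}_{\beta'}^{B}$; unrolling $m$ such strips therefore multiplies lengths by $n^{O(nm)}=2^{O(mn\log n)}$, which is exponential in $m$. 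That is incompatible with the target $m2^{2^{\bigO\left(n\ln n\right)}}$ as soon as $m$ exceeds roughly $2^{O(n\log n)}$, and your assertion that the $m$-dependence ``appears only once, at the outermost application, because each strip contributes only a constant Boolean combination per $\alpha$-decomposition'' contradicts the recurrence you wrote down one paragraph earlier: each strip contributes $n^{O(n)}$ combinations, each containing full recursive subformulas, not a constant number of them.

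To repair this you need an idea you have not supplied. For instance: the $n^{2n}$ pairs reference only $O(n^{n})$ \emph{distinct} languages $\tilde{L}_{\beta}^{B}$, so measuring shared (DAG) size rather than tree size, or showing that the raw alphabet enters only through leaf-level letter tests while the recursion itself only ever distinguishes the at most $n^{n}$ transformations the letters induce, could keep $m$ as a linear prefactor. Absent such an argument, what your computation actually establishes is a bound of the shape $n^{O(nm)}\cdot 2^{2^{O(n\log n)}}$, not the stated one. The rest of your accounting --- the bound $n^{n}$ on the alphabet of $C$, the telescoping of the $n$-axis to $2^{2^{O(n\log n)}}$, the base cases --- looks right, so the gap is localised entirely in how the alphabet recursion is collapsed.
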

One minor note is that \cite{Wilke99classifyingdiscrete} uses stutter-free
operators so their $\left(\alpha U\beta\right)$ is equivalent to
our $N\left(\alpha U\beta\right)$; however, this is trivial to translate.

\ifnoroot{\bibliographystyle{alpha}
\bibliography{/home/john/svn/PhD/csbibtex}
}

\global\long\def\funcC{\beta}

\section{Bisimulation Invariance}

\label{sec:bisim}\label{sub:Bisimulation-invariance-of-RoCTL*}

\renewcommand{\CTLv}{CTL}\renewcommand\RoCTLvstruct{RoCTL-structure}\renewcommand\RoCTLvstarLogic{RoCTL*} 

Recall that bisimulation invariance was defined in \prettyref{def:bisimulation-invariant}.
We shall now begin to prove some basic lemmas necessary to show that
RoCTL{*} is bisimulation invariant. First we will prove that \RoCTLvstarLogic{}
is bisimulation invariant, and define bisimulations on RoCTL-structures.
Before reading the following definition recall the definition of a
PVS, or pointed valued structure, from \prettyref{def:PVS}.
\begin{defi}
Let $\bisimulation$ be any bisimulation from some PVS $M_{w}$ to
another PVS $\hat{M}_{\hat{w}}$. We define $\bisimulation^{\omega}$
to be a binary relation from fullpaths through $M$ to fullpaths though
$\hat{M}$ such that $\left(\sigma,\hat{\sigma}\right)\in\bisimulation^{\omega}$
iff $\left(\sigma_{i},\hat{\sigma}_{i}\right)\in\bisimulation$ for
all $i\in\natnum$. We say that a PVS $M_{w}$ is a Ro\CTLv{}-model
iff $M$ is a \RoCTLvstruct{}.
\end{defi}
It is important that for a path $\sigma$ though $M$ we can find
a similar path $\hat{\sigma}$ through $\hat{M}$. We will now show
that this is the case.
\begin{lemma}
\label{lem:bi-construct-path}Let $\bisimulation$ be any bisimulation
from some Ro\CTLv{}-model $M_{w}$ to another Ro\CTLv{}-model $\hat{M}_{\hat{w}}$.
For any fullpath $\sigma$ where $\sigma_{0}=w$ through $M$ there
exists a fullpath $\hat{\sigma}$ through $\hat{M}$ such that $\left(\sigma,\hat{\sigma}\right)\in\bisimulation^{\omega}$
and $\hat{\sigma}_{0}=\hat{w}$; likewise for any fullpath $\hat{\sigma}$
where $\hat{\sigma}_{0}=\hat{w}$ through $\hat{M}$ there exists
a fullpath $\sigma$ through $M$ such that $\left(\sigma,\hat{\sigma}\right)\in\bisimulation^{\omega}$
and $\sigma_{0}=w$.\end{lemma}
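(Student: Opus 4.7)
The plan is to construct $\hat{\sigma}$ from $\sigma$ by a step-by-step matching argument that exploits the forth condition of the bisimulation $\bisimulation$, and symmetrically construct $\sigma$ from $\hat{\sigma}$ using the back condition. Since fullpaths are infinite, this will be a recursive (inductive) construction requiring the Axiom of Dependent Choice to make the infinitely many choices coherent.

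First I would set up the forward direction. Suppose $\sigma = \langle \sigma_0, \sigma_1, \ldots \rangle$ is a fullpath through $M$ with $\sigma_0 = w$. Define $\hat{\sigma}_0 = \hat{w}$; by hypothesis $(\sigma_0,\hat{\sigma}_0) = (w,\hat{w}) \in \bisimulation$. Now suppose inductively that $\hat{\sigma}_0,\dots,\hat{\sigma}_i$ have been chosen so that $\hat{\sigma}_j \hat{R} \hat{\sigma}_{j+1}$ for all $j<i$ and $(\sigma_j,\hat{\sigma}_j) \in \bisimulation$ for all $j\le i$. Since $\sigma$ is a fullpath, $\sigma_i R \sigma_{i+1}$; clause~(3) of \prettyref{def:bisimulation} (the forth condition) then yields some $\hat{\sigma}_{i+1}$ with $\hat{\sigma}_i \hat{R} \hat{\sigma}_{i+1}$ and $(\sigma_{i+1},\hat{\sigma}_{i+1}) \in \bisimulation$, which extends the construction. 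Applying Dependent Choice to this recursive description produces the required fullpath $\hat{\sigma}$ through $\hat{M}$, and by construction $(\sigma_i,\hat{\sigma}_i) \in \bisimulation$ for every $i$, i.e.\ $(\sigma,\hat{\sigma}) \in \bisimulation^{\omega}$.

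The reverse direction is entirely symmetric: given $\hat{\sigma}$ starting at $\hat{w}$, build $\sigma$ by the same recursion but invoking clause~(4) of \prettyref{def:bisimulation} (the back condition) at each step to extend the chosen prefix. Seriality of both $R$ and $\hat{R}$ is never needed for the extension step itself, because at stage $i$ we already have a specified successor $\sigma_{i+1}$ (respectively $\hat{\sigma}_{i+1}$) in the given fullpath whose image under the bisimulation we are matching; seriality only matters insofar as it guarantees the existence of the fullpaths we start from.

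There is no substantive obstacle here: the lemma is essentially the classical Hennessy--Milner-style construction transferred to fullpaths. The only point that needs mild care is making the informal ``choose $\hat{\sigma}_{i+1}$ for each $i$'' rigorous, which is why I would cite Dependent Choice explicitly; everything else is an immediate consequence of the defining clauses of bisimulation and the fact that a fullpath is just an $\omega$-sequence of $R$-transitions.
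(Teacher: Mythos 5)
Your proposal is correct and matches the paper's own proof essentially exactly: both construct $\hat{\sigma}$ inductively from $\hat{\sigma}_0=\hat{w}$ using the forth condition of the bisimulation at each step, and handle the converse direction symmetrically via the back condition. The explicit appeal to Dependent Choice is a formalization detail the paper leaves implicit, but it does not change the argument.
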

\begin{proof}
We construct $\hat{\sigma}$ from $\sigma$ as follows: let $\hat{\sigma}_{0}=\hat{w}$.
Once we have chosen $\hat{\sigma}_{i}$ we choose $\hat{\sigma}_{i+1}$
as follows: since $\left(\sigma_{i},\hat{\sigma}_{i}\right)\in\bisimulation$
and $\sigma_{i+1}\in\sigma_{i}R$ there is some $\hat{v}\in\hat{\sigma}_{i}\hat{R}$
such that $\left(\sigma_{i+1},\hat{v}\right)\in\bisimulation$; we
let $\hat{\sigma}_{i+1}=\hat{v}$. We may construct $\sigma$ from
$\hat{\sigma}$ likewise.
\end{proof}
The following lemma is similar; however, we are specifically attempting
to construct deviations.
\begin{lemma}
\label{lem:bi-construct-deviation}Let $\bisimulation$ be a bisimulation
from some Ro\CTLv{}-model $M_{w}$ to another Ro\CTLv{}-model $\hat{M}_{\hat{w}}$.
Let $\left(\sigma,\hat{\sigma}\right)\in\bisimulation^{\omega}$.
Given a deviation $\hat{\pi}$ from $\hat{\sigma}$ we can construct
a fullpath $\pi$ such that $\pi$ is a deviation from $\sigma$ and
$\left(\pi,\hat{\pi}\right)\in\bisimulation^{\omega}$.\end{lemma}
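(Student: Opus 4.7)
The plan is to build $\pi$ in three pieces: the prefix up to position $i$ (which we simply copy from $\sigma$), a carefully chosen ``branching'' world $\pi_{i+1}$, and then a bisimilar tail obtained by invoking Lemma~\ref{lem:bi-construct-path}. Since $\hat{\pi}$ is a deviation from $\hat{\sigma}$, fix $i$ such that $\hat{\pi}$ is an $i$-deviation, so that $\hat{\pi}_{\leq i} = \hat{\sigma}_{\leq i}$ and $\hat{\pi}_{\geq i+1} \in S(\hat{\pi}_{i+1})$. Set $\pi_{\leq i} := \sigma_{\leq i}$; then for every $j \le i$ we have $(\pi_j, \hat{\pi}_j) = (\sigma_j, \hat{\sigma}_j) \in \bisimulation$, since $\hat{\pi}_j = \hat{\sigma}_j$ and $(\sigma, \hat{\sigma}) \in \bisimulation^{\omega}$.

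The delicate step is choosing $\pi_{i+1}$ so that both $\sigma_i \ra \pi_{i+1}$ and $(\pi_{i+1}, \hat{\pi}_{i+1}) \in \bisimulation$. Here I would use the backward (``zig'') condition (item 4 of Definition~\ref{def:bisimulation}): since $(\sigma_i, \hat{\pi}_i) \in \bisimulation$ and $\hat{\pi}_i \hat{\ra} \hat{\pi}_{i+1}$, there exists $v$ with $\sigma_i \ra v$ and $(v, \hat{\pi}_{i+1}) \in \bisimulation$. Take $\pi_{i+1} := v$. This is the one place where we genuinely need the bisimulation rather than just being on the same path, and it is really the only obstacle worth naming.

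For the tail, observe that $(\pi_{i+1}, \hat{\pi}_{i+1}) \in \bisimulation$, so $\bisimulation$ is still a bisimulation between the PVSs $M_{\pi_{i+1}}$ and $\hat{M}_{\hat{\pi}_{i+1}}$. Applying Lemma~\ref{lem:bi-construct-path} to the fullpath $\hat{\pi}_{\geq i+1}$ (through $\hat{M}$, starting at $\hat{\pi}_{i+1}$) produces a fullpath $\tau$ through $M$ starting at $\pi_{i+1}$ with $(\tau, \hat{\pi}_{\geq i+1}) \in \bisimulation^{\omega}$. Define $\pi_{\geq i+1} := \tau$, so the whole path $\pi$ is well-defined and $(\pi, \hat{\pi}) \in \bisimulation^{\omega}$ by construction.

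It remains to confirm $\pi$ is a deviation from $\sigma$, which amounts to checking $\pi_{\geq i+1} \in S(\pi_{i+1})$, i.e.\ failure-freeness. This falls out of bisimulation invariance of propositional atoms: for each $k > i+1$ we have $(\pi_k, \hat{\pi}_k) \in \bisimulation$, hence $\valuation(\pi_k) = \hat{\valuation}(\hat{\pi}_k)$ by item 2 of Definition~\ref{def:bisimulation}, and since $\hat{\pi}_{\geq i+1}$ is failure-free we get $\Viol \notin \valuation(\pi_k)$. Thus $\pi$ is an $i$-deviation from $\sigma$ bisimilar to $\hat{\pi}$, as required.
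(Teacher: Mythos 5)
Your proof is correct and follows essentially the same route as the paper's: fix the deviation index $i$, keep the prefix $\sigma_{\leq i}$, use Lemma~\ref{lem:bi-construct-path} (driven by the backward clause of the bisimulation) to build a bisimilar tail, and transfer failure-freeness via the valuation-agreement clause. The only cosmetic difference is that you unroll the first backward step explicitly to pick $\pi_{i+1}$ before invoking the path-construction lemma at $i+1$, whereas the paper invokes it directly at $i$.
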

\begin{proof}
As $\hat{\pi}$ is a deviation from $\hat{\sigma}$, it is the case
that $\hat{\pi}$ is an $i$-deviation from $\hat{\sigma}$ for some
non-negative integer $i$. Since $\left(\sigma_{i},\hat{\pi}_{i}\right)\in\bisimulation$
we can construct a fullpath $\tau$ such that $\left(\tau,\hat{\pi}_{\geq i}\right)\in\bisimulation^{\omega}$
and $\tau_{0}=\sigma_{i}$. We see that $\sigma_{\leq i-1}\cdot\tau$
is a fullpath through $M$, we call this fullpath $\pi$. Since $\hat{\pi}_{\geq i+1}$
is failure-free $\tau_{\geq1}$ is failure-free and thus $\pi_{\geq i+1}$
is failure-free. Thus $\pi$ is a deviation from $\sigma$.
\end{proof}
We will now state and prove the truth lemma.
\begin{lemma}
\label{lem:bi-path-invariant}Let $M_{w}$ and $\hat{M}_{\hat{w}}$
be a pair of arbitrary Ro\CTLv{}-models and let $\bisimulation$
be a bisimulation from $M_{w}$ to $\hat{M}_{\hat{w}}$. Then for
any $\left(\sigma,\hat{\sigma}\right)\in\bisimulation^{\omega}$,
and for any formula $\phi$ it is the case that $M,\sigma\forces\phi\iff\hat{M},\hat{\sigma}\forces\phi$. \end{lemma}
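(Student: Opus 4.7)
The plan is a standard structural induction on $\phi$, proved simultaneously for all pairs $(\patha,\hat{\patha})\in\bisimulation^{\omega}$ and all bisimulations $\bisimulation$ (so that in modal cases I can invoke the induction hypothesis on freshly constructed pairs). The base case $\phi=p$ is immediate from clause~(2) of \prettyref{def:bisimulation}, which forces $g(\patha_{0})=\hat{g}(\hat{\patha}_{0})$, and the Boolean cases $\neg,\wedge$ are trivial. For $N\phi$ and $\phi U\psi$ the key observation is that if $(\patha,\hat{\patha})\in\bisimulation^{\omega}$ then $(\patha_{\geq i},\hat{\patha}_{\geq i})\in\bisimulation^{\omega}$ for every $i$, so the induction hypothesis applies directly to each suffix.

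The modal cases $A$, $O$, and $\eA$ are the substantive ones, and they all follow the same schema: given a witness path in one model, build a bisimilar witness path in the other and apply the induction hypothesis. For $A\phi$, I take any fullpath $\hat{\pathb}\in\SF(\hat{\patha}_{0})$ and invoke \prettyref{lem:bi-construct-path} to produce a matching fullpath $\pathb$ from $\patha_{0}$ with $(\pathb,\hat{\pathb})\in\bisimulation^{\omega}$; the induction hypothesis then gives $\mfk,\pathb\forces\phi\iff\hat{\mfk},\hat{\pathb}\forces\phi$, and symmetry finishes this direction. The $O\phi$ case is the same construction, but restricted to failure-free paths; this is where I use clause~(2) of the bisimulation definition once more, noting that $\Viol\in\valuation(u)\iff\Viol\in\hat{\valuation}(\hat{u})$ whenever $(u,\hat{u})\in\bisimulation$, so the fullpath $\pathb$ produced by \prettyref{lem:bi-construct-path} is failure-free iff $\hat{\pathb}$ is, and hence $\pathb\in S(\patha_{0})\iff\hat{\pathb}\in S(\hat{\patha}_{0})$.

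The $\eA$ case is the main obstacle and the reason \prettyref{lem:bi-construct-deviation} was set up in advance. Given $\hat{\pathb}\in\delta(\hat{\patha})$, \prettyref{lem:bi-construct-deviation} hands me a deviation $\pathb\in\delta(\patha)$ with $(\pathb,\hat{\pathb})\in\bisimulation^{\omega}$, and the induction hypothesis finishes the forward direction; the reverse direction is symmetric (bisimilarity is symmetric, so $\bisimulation^{-1}$ is a bisimulation from $\hat{\mfk}_{\hat{w}}$ to $\mfk_{w}$ and the same lemma applies). The ``and $\mfk,\patha\forces\forma$'' conjunct in the semantics of $\eA\forma$ is handled by the induction hypothesis applied to the pair $(\patha,\hat{\patha})$ itself. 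The main delicacy is simply keeping track that every path I construct lives in the right model and that the failure-freeness side-conditions are transferred across $\bisimulation$; once the two auxiliary lemmas are in hand, this is really the only content of the proof.
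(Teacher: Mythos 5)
Your proposal is correct and follows essentially the same route as the paper: the paper phrases the induction as a shortest-counterexample argument, but otherwise discharges the $A$, $O$ and $\eA$ cases exactly as you do, via \prettyref{lem:bi-construct-path}, the preservation of $\Viol$ under clause~(2) of the bisimulation definition, and \prettyref{lem:bi-construct-deviation}. The only cosmetic difference is that the paper's initial without-loss-of-generality assumption plays the role of your explicit appeal to the symmetry of the bisimulation relation in the reverse directions.
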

\begin{proof}
For contradiction, let $\phi$ be the shortest formula such that there
exists a pair $\sigma$, $\hat{\sigma}$ of fullpaths in $\bisimulation^{\omega}$
not satisfying $M,\sigma\forces\phi\iff\hat{M},\hat{\sigma}\forces\phi$.
Without loss of generality we can assume that $M,\sigma\forces\phi$
but $\hat{M},\hat{\sigma}\nforces\phi$. Consider the possible forms
of $\phi$.

$\phi=p$: Since $M,\sigma\forces p$ we know that $p\in g\left(\sigma_{0}\right)$.
As $\bisimulation$ is a bisimulation and $\left(\sigma_{0},\hat{\sigma}_{0}\right)\in\bisimulation$
we know that $p\in g\left(\sigma_{0}\right)$. Hence $\hat{M},\hat{\sigma}\forces p$.
 This contradicts our assumption that $\hat{M},\hat{\sigma}\nforces\phi$. 

$\phi=N\psi$: Since $M,\sigma\forces N\psi$, we know that $M,\sigma_{\geq1}\forces\psi$,
and since $\phi$ is the shortest counter example, we know that $\hat{M},\hat{\sigma}_{\geq1}\forces\psi$.
We see that $\hat{M},\hat{\sigma}\forces\phi$.

$\phi=\formc U\psi$: Since $M,\sigma\forces\formc U\psi$, we know
that a non-negative integer $i$ such that $\mfk,\patha_{\geq i}\forces\formb$
and for all non-negative $j$ less than $i$ we have $\mfk,\patha_{\geq j}\forces\theta$.
As $\psi$ and $\formc$ are shorter than $\phi$ we know $\hat{\mfk},\hat{\patha}_{\geq i}\forces\formb$
and $\hat{\mfk},\hat{\patha}_{\geq j}\forces\theta$. Thus $\hat{\mfk},\hat{\patha}\forces\theta U\psi$.

$\phi=A\psi$: Since $\hat{M},\hat{\sigma}\nforces A\psi$ we know
there exists $\hat{\pi}$ such that $\hat{M},\hat{\pi}\nforces\psi$.
From \prettyref{lem:bi-construct-path} we know that there exists
a path $\pi$ such that $\left(\pi,\hat{\pi}\right)\in\bisimulation$.
As $\psi$ is shorter than $\phi$ we know that $M,\pi\nforces\psi$.
Hence $M,\sigma\nforces A\psi$.

$\phi=O\psi$: Since $\hat{M},\hat{\sigma}\nforces O\psi$ we know
there exists $\hat{\pi}$ such that $\hat{M},\hat{\pi}\nforces\psi$
and $\hat{\pi}$ is failure-free. From \prettyref{lem:bi-construct-path}
we know that there exists a path $\pi$ such that $\left(\pi,\hat{\pi}\right)\in\bisimulation$.
As $\psi$ is shorter than $\phi$ we know that $M,\pi\nforces\psi$.
As $\hat{\pi}$ is failure-free, for all $i>0$ we know $\Viol\notin g\left(\hat{\pi}_{i}\right)$,
from the definition of a bisimulation we know that $\Viol\notin g\left(\pi_{i}\right)$.
Hence $\pi$ is failure-free and $M,\sigma\nforces O\psi$.

$\phi=\eA\psi$: Since $\hat{M},\hat{\sigma}\nforces\eA\psi$ we know
there exists $\hat{\pi}$ such that $\hat{M},\hat{\pi}\nforces\psi$
and $\hat{\pi}$ is either $\hat{\sigma}$ or a deviation from $\hat{\sigma}$.
If $\hat{\pi}=\hat{\sigma}$ then $M,\sigma\nforces\psi$ and $M,\sigma\nforces\eA\psi$.
If $\hat{\pi}$ is an $i$-deviation from $\hat{\sigma}$ then from
\prettyref{lem:bi-construct-deviation} we know there is a deviation
$\pi$ from $\sigma$ such that $(\pi,\hat{\pi})\in\bisimulation^{\omega}$.
We see that $M,\pi\nforces\psi$ and thus $M,\sigma\nforces\robust\psi$.

By contradiction we know that no such $\phi$ exists. 
\end{proof}

\begin{lemma}
\label{lem:RoCTL*v-is-bisimulation-invariant}\label{lem:RoCTL*-is-bisimulation-invariant}\RoCTLvstarLogic{}
is \index{bisimulation invariant}bisimulation invariant. \end{lemma}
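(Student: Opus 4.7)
The plan is to derive this lemma directly from the machinery already assembled, since all the real work has been done in the preceding two results. Specifically, Lemma~\ref{lem:bi-path-invariant} already gives invariance along matched pairs of fullpaths, and Lemma~\ref{lem:bi-construct-path} allows us to produce such matched pairs starting from any world we like. The final step of lifting from path-satisfaction to world-satisfaction is routine.

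Concretely, I would fix an arbitrary RoCTL{*} formula $\phi$ and an arbitrary pair of bisimilar Ro\CTLv{}-models $M_{w}$ and $\hat{M}_{\hat{w}}$, together with a witnessing bisimulation $\bisimulation$. I want to establish the biconditional $M,w\forces\phi \iff \hat{M},\hat{w}\forces\phi$, so by symmetry it suffices to prove one direction. Suppose $M,w\forces\phi$. By the definition of truth at a world, there is a fullpath $\sigma$ in $M$ with $\sigma_{0}=w$ such that $M,\sigma\forces\phi$. Applying Lemma~\ref{lem:bi-construct-path} to $\sigma$ yields a fullpath $\hat{\sigma}$ in $\hat{M}$ with $\hat{\sigma}_{0}=\hat{w}$ and $(\sigma,\hat{\sigma})\in\bisimulation^{\omega}$. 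Then Lemma~\ref{lem:bi-path-invariant} gives $\hat{M},\hat{\sigma}\forces\phi$, and since $\hat{\sigma}_{0}=\hat{w}$, we conclude $\hat{M},\hat{w}\forces\phi$. The reverse direction uses the symmetric construction in Lemma~\ref{lem:bi-construct-path}.

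There is no real obstacle at this stage: the hard work sits inside Lemma~\ref{lem:bi-path-invariant}, whose $\eA$-case in turn relies on Lemma~\ref{lem:bi-construct-deviation} to transport a witnessing deviation across the bisimulation, and whose $O$-case uses the fact that bisimilar worlds agree on the atom $\Viol$ so that failure-freeness is preserved. The only thing to be slightly careful about is confirming that the definition of truth at a world is existential over fullpaths through that world, so that a single matched pair of fullpaths suffices in each direction rather than a universal match.
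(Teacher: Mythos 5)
Your proposal is correct and follows essentially the same route as the paper: unfold the existential definition of truth at a world, transport the witnessing fullpath across the bisimulation via Lemma~\ref{lem:bi-construct-path}, and apply Lemma~\ref{lem:bi-path-invariant} to transfer satisfaction. If anything, you are slightly more careful than the paper's own write-up, which cites only Lemma~\ref{lem:bi-path-invariant} where the path construction step is also needed.
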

\begin{proof}
Consider any RoCTL{*} formula $\phi$. Let $\bisimulation$ be a bisimulation
from some pair of PVS's $\left(M,w\right)$ and $(\hat{M},\hat{w})$,
and say that $M,w\forces\phi$ but $\hat{M},\hat{w}\nforces\phi$.
Recall that under RoCTL{*} we define truth at a world as follows:
\begin{align*}
\mfk,w\forces\forma & \tiff\Exists{\pathb\st\pi_{0}=w}\mfk,\pathb\forces\forma\mathfullstop
\end{align*}
From \prettyref{lem:bi-path-invariant} we know that there exists
a fullpath $\hat{\pi}$ through $\hat{M}$ such that $\hat{\pi}=\hat{w}$
and $\hat{M},\hat{\pi}\forces\phi$. Hence $\hat{M},\hat{w}\forces\phi$.
Thus we see that for any bisimilar pair of PVS's $\left(M,w\right)$
and $(\hat{M},\hat{w})$ we have 
\begin{eqnarray*}
\left(M,w\right)\forces\phi & \iff & (\hat{M},\hat{w})\forces\phi\mathfullstop
\end{eqnarray*}
By definition we see that $\phi$ is bisimulation invariant. Since
$\phi$ is an arbitrary RoCTL{*} formula, we see that RoCTL{*} is
bisimulation invariant.
\end{proof}

\section{Reduction into QCTL{*}}

\label{sec:qctl}\label{sec:A-Linear-reduction-into-QCTL*}\providecommand{\thispaper}{this thesis}

\ifdefined\ifnoroot
\else

\providecommand\ifnoroot[1]{#1}

\ifnoroot{

}

\fi

\global\long\def\tP{\tau_{\prone}}
\global\long\def\tX{\tau}
\global\long\def\suffixQCTL{\star}
\global\long\def\QCTLof#1{#1^{\suffixQCTL}}
\global\long\def\Qforces{\QCTLof{\forces}}

\global\long\def\deviatenow{\formd}
\global\long\def\subforma{S_{\phi}}
\global\long\def\deviationtype{\epsilon}
\global\long\def\tprone{\tX^{\prone}}

\global\long\def\dthat{\hat{\deviatenow\deviationtype}}
\global\long\def\devnow{d}

\global\long\def\MQ{M}
\global\long\def\pathQCTL{\sigma}
\global\long\def\pathbQCTL{\pathb}

\global\long\def\fQ{\phi^{\suffixQCTL}}
 In this section we will present a translation of RoCTL{*} (and \RoCTLvstarLogic{})
\formulae{} into QCTL{*} such that the \formulae{} are satisfiable
in the tree semantics of QCTL{*} iff they are satisfiable in RoCTL{*}.
As we have shown that RoCTL{*} is bisimulation invariant in \prettyref{lem:RoCTL*v-is-bisimulation-invariant},
in this section we will assume that all structures are tree structures.
We will use $\Qforces$ to indicate $\forces$ is being interpreted
according to the semantics of tree QCTL{*}. 
\begin{defi}
We define a translation function $\tau^{O}$ from QCTL{*} \formulae{}
to QCTL{*} \formulae{} such that for any formula $\fQ$ 
\begin{align*}
\tX^{O}(\fQ)=A\left(NG\neg\Viol\rightarrow\fQ\right)
\end{align*}
\end{defi}
\begin{lemma}
\label{lem:O-to-QCTL}Say that $\forma$ is a RoCTL{*} formula and
$\fQ$ is a QCTL{*} formula such that for all $M$ and $\patha$ it
is the case that $M,\patha\forces\forma$ iff $\MQ,\pathQCTL\Qforces\fQ$.
Then, for all $M$ and $\patha$ it is the case that $M,\patha\forces O\forma$
iff $\MQ,\pathQCTL\Qforces\tau^{O}\left(\fQ\right)$.\end{lemma}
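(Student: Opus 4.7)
The plan is to unfold both sides of the claimed equivalence using the definitions of the operators and then appeal directly to the hypothesis that $\phi^\star$ mirrors $\phi$. First I would fix $M$ and $\sigma$ and rewrite $\MQ,\pathQCTL\Qforces\tX^{O}(\fQ)$ as $\MQ,\pathQCTL\Qforces A(NG\neg\Viol\rightarrow\fQ)$. Using the QCTL{*} clause for $A$ (which agrees with the CTL{*}/RoCTL{*} clause), this is equivalent to: for every fullpath $\pathbQCTL$ with $\pathbQCTL_{0}=\pathQCTL_{0}$, if $\MQ,\pathbQCTL\Qforces NG\neg\Viol$ then $\MQ,\pathbQCTL\Qforces\fQ$.

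The key intermediate observation is that $\MQ,\pathbQCTL\Qforces NG\neg\Viol$ holds iff $\pathbQCTL$ is failure-free. Unwinding the semantics, $\MQ,\pathbQCTL\Qforces NG\neg\Viol$ means $\MQ,\pathbQCTL_{\geq 1}\Qforces G\neg\Viol$, i.e.\ for every $j\geq 0$, $\Viol\notin g(\pathbQCTL_{j+1})$; equivalently, for every $i>0$, $\Viol\notin g(\pathbQCTL_{i})$, which is exactly the definition of failure-freeness. Note that this deliberately places no constraint on $\pathbQCTL_{0}$, matching the definition of failure-freeness, so there is no off-by-one issue to worry about.

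Combining the previous two steps, $\MQ,\pathQCTL\Qforces\tX^{O}(\fQ)$ is equivalent to: for every failure-free fullpath $\pathbQCTL$ starting at $\pathQCTL_{0}$, $\MQ,\pathbQCTL\Qforces\fQ$. By the hypothesis of the lemma, $\MQ,\pathbQCTL\Qforces\fQ$ iff $M,\pathbQCTL\forces\phi$, so this is equivalent to: for every $\pathbQCTL\in S(\pathQCTL_{0})$, $M,\pathbQCTL\forces\phi$. By the RoCTL{*} clause for $O$, this is precisely $M,\pathQCTL\forces O\phi$, completing the argument.

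The proof is almost entirely bookkeeping; the only spot requiring any care is verifying that $NG\neg\Viol$ on a path captures failure-freeness exactly (in particular that the initial world is correctly left unconstrained), and that the hypothesis on $\phi,\fQ$ is applied on the quantified path $\pathbQCTL$ rather than on the reference path $\pathQCTL$. I do not anticipate a genuine obstacle.
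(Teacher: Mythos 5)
Your proposal is correct and follows essentially the same route as the paper's own proof: both hinge on the observation that a fullpath satisfies $NG\neg\Viol$ iff it is failure-free, and then apply the inductive hypothesis on the universally quantified path. Your version is, if anything, slightly more careful in checking that the equivalence between $NG\neg\Viol$ and failure-freeness holds in both directions (the paper states only one direction of the implication in each half).
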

\begin{proof}
$\left(\Longrightarrow\right)$ Say that $M,\patha\forces O\forma$.
Then for all failure-free paths $\pi$ starting at $\sigma_{0}$,
$M,\pi\forces\phi$ and so $M,\pi\Qforces\fQ$. By definition, a path
is failure-free iff for all $i>0$ we have $\Viol\notin\valuation\left(\patha_{i}\right)$.
Since every path that satisfies $NG\neg\Viol$ is failure-free we
see that every path that starts at $\sigma_{0}$ satisfies $NG\neg\Viol\rightarrow\fQ$.
Hence $M,\sigma\Qforces A\left(NG\neg\Viol\rightarrow\fQ\right)$.

$\left(\Longleftarrow\right)$ Say that $M,\patha\Qforces A\left(NG\neg\Viol\rightarrow\fQ\right)$.
Then every path starting at $\sigma_{0}$ satisfies $NG\neg\Viol\rightarrow\fQ$.
A path that satisfies $NG\neg\Viol$ is failure-free, so every failure-free
path starting at $\sigma_{0}$ satisfies $\fQ$, and hence $\phi$.
Thus $M,\sigma_{0}\forces O\forma$. 
\end{proof}
We let $\deviatenow$ be the (Q)CTL{*} formula $NNG\neg\Viol$. Thus $\deviatenow$ does not specify whether the previous or next
transitions are failures, but requires that all transitions after
the next one be successes. The $\deviatenow$ formula is used to represent the requirement that
all transitions after a deviation must be successes. 

We define a translation function $\tprone$ from QCTL{*} \formulae{}
to QCTL{*} \formulae{} such that for any formula $\fQ$ and for some
atom $y$ not in $\fQ$: 
\begin{align*}
\tprone\left(\fQ\right)= & \Qy\left[Gy\rightarrow E\left[\left(Gy\vee F\left(y\wedge\deviatenow\right)\right)\wedge\fQ\right]\right].
\end{align*}
Note that for $\tprone\left(\fQ\right)$ to hold, $E\left[\left(Gy\vee F\left(y\wedge\deviatenow\right)\right)\wedge\fQ\right]$
must hold for all possible atoms $y$ that satisfy $Gy$, including
the case where $y$ is true only along the current fullpath $\pathQCTL$.
The diagram below shows a fullpath $\pathbQCTL$ that satisfies $F\left(y\wedge\deviatenow\right)$
for all such $y$.

\begin{center}
\begin{center}
\input{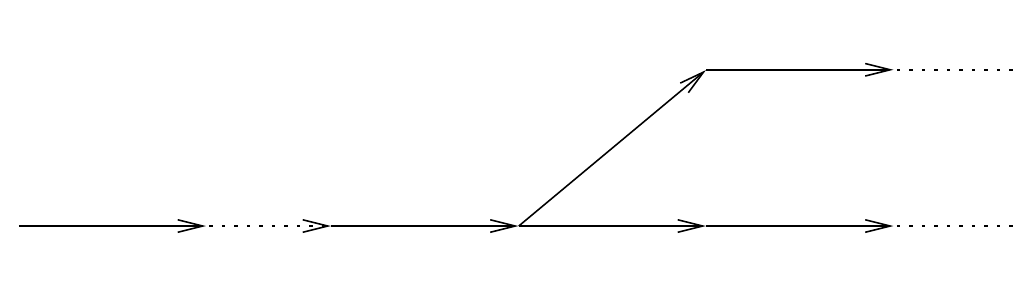tex_t}
\par\end{center}
\par\end{center}

Recall from \prettyref{def:p-variant} that a $p$-variant of a structure
$M$ is a structure $M^{p}$ which values the atom $p$ differently
but is otherwise similar.
\begin{lemma}
\label{lem:Prone-to-QCTL}Say that $\forma$ is a \RoCTLvstarLogic{}
formula and $\fQ$ is a QCTL{*} formula such that for all $M$ and
$\patha$ it is the case that $M,\patha\forces\forma$ iff $\MQ,\pathQCTL\forces\fQ$.
Then, for all $M$ and $\patha$ it is the case that $M,\patha\forces\prone\forma$
iff $\MQ,\pathQCTL\forces\tprone\left(\fQ\right)$.\end{lemma}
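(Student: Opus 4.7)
The plan is to prove both directions of the biconditional by unfolding the definition of $\prone$ as $\neg\eA\neg$: the formula $\prone\phi$ holds at $\patha$ iff $\phi$ holds either at $\patha$ itself or at some deviation of $\patha$. I would also invoke bisimulation invariance (Lemma~\ref{lem:RoCTL*-is-bisimulation-invariant}) to restrict attention to tree structures, which lets me use the fact that distinct nodes along a fullpath can be marked by a single propositional variable. The fresh atom $y$ encodes ``this node lies on the fullpath $\patha$''; the disjunct $Gy$ captures $\pathb = \patha$, and $F(y \wedge \deviatenow)$ with $\deviatenow = NNG\neg\Viol$ captures the deviation case, enforcing $\Viol$-freeness beyond the deviation point.

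For the forward direction, assume $M,\patha \forces \prone\phi$ with witness $\pathb$ (either $\patha$ itself or an $i$-deviation satisfying $\phi$), and fix any $y$-variant under which $Gy$ holds along $\patha$. I would exhibit $\pathb$ as the existential witness in $\tprone(\fQ)$: when $\pathb = \patha$ the disjunct $Gy$ along $\pathb$ is immediate; when $\pathb$ is an $i$-deviation, $\pathb_{\leq i}=\patha_{\leq i}$ gives $y$ at $\pathb_i$, and failure-freeness of $\pathb_{\geq i+1}$ delivers $\deviatenow$ at $\pathb_i$. Because $y$ is fresh in $\fQ$, the $y$-variant preserves the truth of $\fQ$, and the hypothesis on the translation gives $\pathb \Qforces \fQ$.

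For the backward direction, I would choose the specific $y$-variant making $y$ true exactly at the nodes $\patha_0,\patha_1,\ldots$ of the current fullpath. This is a well-defined $y$-variant on the tree, and $Gy$ holds along $\patha$, so the assumption $\MQ,\pathQCTL \Qforces \tprone(\fQ)$ yields a witness path $\pathb$ from $\patha_0$ satisfying $(Gy \vee F(y \wedge \deviatenow)) \wedge \fQ$. Tree structure is the workhorse: any $y$-labelled node reachable from $\patha_0$ must lie on $\patha$, and the unique directed path from $\patha_0$ through $y$-nodes to $\patha_i$ is $\patha_0,\ldots,\patha_i$. Hence in the $Gy$ case $\pathb = \patha$, and in the $F(y \wedge \deviatenow)$ case $\pathb_{\leq i} = \patha_{\leq i}$ for the witnessing index $i$, while $\deviatenow$ at $\pathb_i$ forces $\pathb_{\geq i+1}$ failure-free. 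Either way, $\pathb$ is $\patha$ or a genuine deviation of $\patha$, and $\pathb \forces \phi$ delivers $\patha \forces \prone\phi$.

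The main obstacle is the tree-structure step of the backward direction: verifying carefully that marking $y$ exactly along $\patha$ really does constrain any $y$-path from $\patha_0$ to coincide with an initial segment of $\patha$ (which would fail on a non-tree structure where one could re-enter $\patha$ from off-path nodes), and then matching the index $i$ supplied by $F(y \wedge \deviatenow)$ to the condition $\pathb_{\geq i+1} \in S(\pathb_{i+1})$ in the definition of an $i$-deviation. The translation $NNG\neg\Viol$ yields $\Viol$-free positions from $i+2$ onwards, which must be reconciled with the ``$j>0$'' clause in the definition of failure-free; the off-by-one bookkeeping here is the most delicate part of the argument.
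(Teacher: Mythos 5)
Your proposal is correct and follows essentially the same route as the paper's own proof: the forward direction splits on whether $\phi$ holds on $\patha$ itself or on an $i$-deviation and exhibits that witness path, and the backward direction instantiates the $y$-variant that marks exactly the worlds of $\patha$, using the tree assumption to force the witness path to share the prefix $\patha_{\leq i}$. The off-by-one check you flag does go through: $\deviatenow=NNG\neg\Viol$ evaluated at position $i$ forbids $\Viol$ from position $i+2$ onwards, which is exactly the ``$j>0$'' clause of failure-freeness applied to $\pathb_{\geq i+1}$.
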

\begin{proof}
$\left(\Longrightarrow\right)$ Say that $M,\patha\forces\prone\forma$.
Then $M,\patha\forces\forma$ or there exists a deviation $\pathb$
from $\patha$ such that $M,\pathb\forces\forma$. If $M,\patha\forces\forma$
then $\MQ,\pathQCTL\forces\fQ$ and so
\begin{align*}
\MQ,\pathQCTL\forces & \forall y\left[Gy\rightarrow E\left[Gy\wedge\fQ\right]\right],
\end{align*}
thus $\MQ,\pathQCTL\forces\tprone\left(\fQ\right)$. 

On the other hand, if $M,\patha\nforces\forma$ then, for some $i$,
there exists an $i$-deviation $\pathb$ from $\patha$ such that
$M,\pathb\forces\forma$.  If $Gy$ holds along $\pathQCTL$ then
$y$ holds at $\pathbQCTL_{i}=\pathQCTL_{i}$. As $\pathb$ is an
$i$-deviation, all transitions following $\pathb_{i+1}$ are success
transitions, so $\MQ,\pathbQCTL_{\geq i}\forces\deviatenow$ and $\MQ,\pathbQCTL\forces F\left(y\wedge\deviatenow\right)\wedge\fQ$
from which it follows that $\MQ,\pathQCTL\forces\tprone\left(\fQ\right)$.

$\left(\Longleftarrow\right)$ Say that $\MQ,\pathQCTL\forces\tprone\left(\fQ\right)$.
Then 
\begin{align*}
M^{y},\pathQCTL\forces\left[Gy\rightarrow E\left[\left(Gy\vee F\left(y\wedge\deviatenow\right)\right)\wedge\fQ\right]\right]\,,
\end{align*}
 where $M^{y}$ is any $y$-variant of $\MQ$. Consider an $M^{y}$
for which $y$ is true at a state $w$ iff $w\in\pathQCTL$. Then
$M^{y},\pathQCTL\forces E\left[\left(Gy\vee F\left(y\wedge\deviatenow\right)\right)\wedge\fQ\right]$.
Thus there exists some fullpath $\pi$ such that $\pi_{0}=\pathQCTL_{0}$
and $M^{y},\pi\forces F\left(y\wedge\deviatenow\right)\wedge\fQ$
or $M^{y},\pi\forces Gy\wedge\fQ$. 

If $M^{y},\patha^{y}\forces Gy\wedge\fQ$ then $\pi=\pathQCTL$, so
$\MQ,\pathQCTL\forces\fQ$ and $M,\patha\forces\forma$. If $M^{y},\pi\forces F\left(y\wedge\deviatenow\right)\wedge\fQ$
then there exists a non-negative integer $i$ such that $M^{y},\pi_{\geq i}\forces y\wedge\deviatenow$.
Since $y$ only occurs on the current path $\pi_{\leq i}=\sigma_{\leq i}$
and recall that the formula $\deviatenow$ indicates that we deviate
here. Thus $\pi$ is an $i$-deviation from $\sigma$ and so $M,\patha\forces\prone\forma$.
\end{proof}
We will now combine $\tX^{O}$ and $\tX^{\prone}$ to provide a translation
of \RoCTLvstarLogic{} into QCTL{*}.
\begin{defi}
\label{def:RoCTL-to-QCTL*}We let $\tau$ be a function from formulas
to formulas defined recursively as follows:

\begin{align*}
\tau\left(p\right) & =p\\
\tau\left(\neg\phi\right) & =\neg\tau\left(\phi\right)\\
\tau\left(\phi\wedge\psi\right) & =\tau\left(\phi\right)\wedge\tau\left(\psi\right)\\
\tau\left(\phi U\psi\right) & =\tau\left(\phi\right)U\tau\left(\psi\right)\\
\tau\left(N\phi\right) & =N\tau\left(\phi\right)\\
\tau\left(A\phi\right) & =A\tau\left(\phi\right)\\
\tau\left(O\phi\right) & =\tau^{O}\left(\tau\left(\phi\right)\right)\\
\tau\left(\eA\phi\right) & =\neg\tau^{\prone}\left(\neg\tau\left(\phi\right)\right)\mathfullstop
\end{align*}
\end{defi}
\begin{theorem}
\label{thm:MFinQCTL}For any \index{Logics!RoCTLv@\RoCTLvstarLogic{}}\RoCTLvstarLogic{}
formula $\forma$ of length $n$ we can produce a \logicindex{QCTL*}QCTL{*}
formula $\fQ$ of length $\bigO(n)$ by simple recursive translation
such that for any tree \RoCTLvstruct{} $M$ and fullpath $\sigma$
though $M$ we have $M,\sigma\forces\phi$ iff $M,\sigma\Qforces\fQ$.\end{theorem}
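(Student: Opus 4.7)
The plan is to prove \prettyref{thm:MFinQCTL} by structural induction on $\phi$, with the translation $\tau$ of \prettyref{def:RoCTL-to-QCTL*} already in hand. The statement to verify at each step is the biconditional $M,\sigma \models \phi \iff M,\sigma \mathrel{\Qforces} \tau(\phi)$ for all tree \RoCTLvstruct{}s $M$ and fullpaths $\sigma$ through $M$.

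The base case $\phi = p$ is immediate since $\tau(p) = p$ and atomic propositions have identical semantics in the two logics. For the cases $\phi = \neg \psi$, $\phi = \psi_1 \wedge \psi_2$, $\phi = \psi_1 U \psi_2$, $\phi = N\psi$, and $\phi = A\psi$, I would observe that the semantic clauses for these operators in tree QCTL{*} coincide with those in RoCTL{*}, so the induction hypothesis applied to the direct subformulae yields the biconditional for $\phi$ after a one-line unfolding of the relevant clause. For $\phi = O\psi$, the induction hypothesis gives $M,\sigma \models \psi \iff M,\sigma \mathrel{\Qforces} \tau(\psi)$ for all $\sigma$, and then \prettyref{lem:O-to-QCTL} applied with $\fQ := \tau(\psi)$ delivers $M,\sigma \models O\psi \iff M,\sigma \mathrel{\Qforces} \tau^O(\tau(\psi)) = \tau(O\psi)$. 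For $\phi = \eA\psi$, I would first rewrite $\eA\psi \equiv \neg \prone \neg \psi$, apply the induction hypothesis to $\psi$ (whence to $\neg\psi$), invoke \prettyref{lem:Prone-to-QCTL} with $\fQ := \neg \tau(\psi)$ to handle $\prone\neg\psi$, and finally negate, which matches $\tau(\eA\psi) = \neg \tprone(\neg \tau(\psi))$.

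For the size bound, each clause of $\tau$ adds at most a constant number of symbols around the recursive translations of the immediate subformulae: $\tau^O$ introduces the wrapper $A(NG\neg\Viol \to \cdot)$, and $\tprone$ introduces $\Qy[Gy \to E[(Gy \vee F(y \wedge NNG\neg\Viol)) \wedge \cdot]]$, both of constant overhead. Summing these constants over the parse tree of $\phi$ gives total length $\bigO(n)$.

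The subtle point I expect to be the main obstacle is the treatment of the quantified atom $y$ in the $\eA$ clause: a single symbol $y$ cannot be reused across nested $\eA$ operators without the inner quantifier capturing the outer one's use, which would break the argument of \prettyref{lem:Prone-to-QCTL} (there the proof picks the specific $y$-variant in which $y$ labels exactly the current path, and this requires $y \notin \tau(\psi)$). To handle this cleanly I would parameterise the recursion by a supply of fresh atoms (or, equivalently, use a distinct subscripted $y_i$ for the $i$-th $\eA$ node encountered). Since the number of fresh atoms required is bounded by the number of $\eA$-occurrences in $\phi$, and each atom contributes $\bigO(\log n)$ bits to the encoding (or $\bigO(1)$ if we charge atoms unit cost, as is standard for this sort of succinctness statement), the $\bigO(n)$ length bound is preserved. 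With the fresh-variable convention in place, the induction hypothesis can be applied under each $\eA$ without any capture issue, and \prettyref{lem:Prone-to-QCTL} applies verbatim.
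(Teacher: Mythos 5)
Your proposal is correct and takes essentially the same route as the paper: the paper's own proof is a one-line appeal to \prettyref{lem:O-to-QCTL} and \prettyref{lem:Prone-to-QCTL}, leaving the structural induction and the $\bigO(n)$ bookkeeping implicit, which you have simply written out in full. The variable-capture issue you flag is already built into the paper's definition of $\tprone$ via the side condition that $y$ be an atom not occurring in $\fQ$, so your fresh-atom convention matches the intended reading rather than departing from it.
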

\begin{proof}
From~\prettyref{lem:O-to-QCTL} and~\prettyref{lem:Prone-to-QCTL}
above we see that $M,\sigma\Qforces\tau\left(\phi\right)$ iff $M,\sigma\forces\phi$
where $\tau$ is the translation function from RoCTL{*} \formulae{}
to QCTL{*} \formulae{} from Definition~\ref{def:RoCTL-to-QCTL*}.
\end{proof}
We will also use the above translation to show that it is possible
to decide the satisfiability of RoCTL{*} \formulae{}.
\begin{lemma}
Each \RoCTLvstarLogic{} formula $\phi$ is satisfiable in \RoCTLvstarLogic{}
iff $AGEN\neg\Viol\wedge\tau\left(\phi\right)$ is satisfiable in
the tree semantics of \logicindex{QCTL*}QCTL{*}.\end{lemma}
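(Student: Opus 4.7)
The plan is a standard satisfiability-transfer argument in two directions, leveraging the bisimulation invariance of RoCTL{*} (Lemma~\ref{lem:RoCTL*-is-bisimulation-invariant} and the underlying Lemma~\ref{lem:bi-path-invariant}), the formula-level translation of Theorem~\ref{thm:MFinQCTL}, and the observation that $AGEN\neg\Viol$ internally captures the RoCTL-structure condition ``$S(w)$ is non-empty'' at every reachable world.

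For the forward direction, I would start with a RoCTL-structure $M$ and a fullpath $\sigma$ satisfying $M,\sigma\forces\phi$, take the tree unwinding $T$ rooted at $\sigma_{0}$, and lift $\sigma$ to a corresponding path $\sigma'$ in $T$. The unwinding is bisimilar to $M_{\sigma_{0}}$, and since failure-freeness of a path is preserved by the unwinding, $T$ is itself a tree RoCTL-structure; Lemma~\ref{lem:bi-path-invariant} then gives $T,\sigma'\forces\phi$, and Theorem~\ref{thm:MFinQCTL} yields $T,\sigma'\Qforces\tau(\phi)$. For the conjunct $AGEN\neg\Viol$, at any world $u$ reachable in $T$ the set $S(u)$ is non-empty by the RoCTL-structure condition, so any failure-free fullpath $\pi$ from $u$ has $\Viol\notin g(\pi_{1})$, witnessing $EN\neg\Viol$ at $u$; hence $T,\sigma'\Qforces AGEN\neg\Viol$ in the tree semantics.

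For the converse, assume $T,\sigma\Qforces AGEN\neg\Viol\wedge\tau(\phi)$ in the tree semantics. I would take $M$ to be the subtree of $T$ rooted at $\sigma_{0}$ and verify, by a structural induction on QCTL{*} \formulae{}, that restricting to the reachable part preserves truth at $\sigma$. The condition $AGEN\neg\Viol$ then says that at every world $w$ of $M$ some successor avoids $\Viol$, and by iteratively selecting such successors one constructs a failure-free fullpath from every $w$, so $M$ is a RoCTL-structure. Theorem~\ref{thm:MFinQCTL} then carries $M,\sigma\Qforces\tau(\phi)$ back to $M,\sigma\forces\phi$, confirming satisfiability in RoCTL{*}.

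The main obstacle is the $\forall y$ case of the induction in the converse: QCTL{*} under the tree semantics is not bisimulation invariant, so truth on the subtree does not come for free. The key observation is that each $y$-variant of the subtree extends to a $y$-variant of $T$ that agrees on the reachable part (by reusing $T$'s valuation off the subtree), and conversely each $y$-variant of $T$ restricts to a $y$-variant of the subtree; since $\tau(\phi)$ and $AGEN\neg\Viol$ only refer to behaviour reachable from $\sigma_{0}$, the inductive hypothesis pushes through. Once that bookkeeping is done, everything else reduces to Theorem~\ref{thm:MFinQCTL}.
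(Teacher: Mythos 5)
Your proposal is correct and follows essentially the same route as the paper's proof: both directions reduce to Theorem~\ref{thm:MFinQCTL}, with $AGEN\neg\Viol$ capturing non-emptiness of $S(w)$ at every world reachable from $\sigma_{0}$ and a restriction-to-the-reachable-part argument bridging the two semantics. You simply spell out details the paper dismisses as trivial, notably the tree unwinding via bisimulation invariance in the forward direction and the $y$-variant bookkeeping needed because tree-semantics QCTL{*} is not bisimulation invariant.
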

\begin{proof}
Recall that a valued structure is a \RoCTLvstruct{} iff $\SP\left(w\right)$
is non-empty for each world $w$ in the valued structure. The subformula
$AGEN\neg\Viol$ ensures that the translated formula is satisfiable
on a path $\sigma$ through $M$ only if $\SP\left(w\right)$ is non-empty
on all worlds $w$ reachable from $\sigma_{0}$. It is trivial to
show that removing all worlds not reachable from $\sigma_{0}$ from
$M$ does not affect whether $M,\sigma\forces\phi$. As such this
result follows simply from \prettyref{thm:MFinQCTL}.\end{proof}
\begin{theorem}
\label{thm:RoCTL*-decidable}RoCTL{*}  are decidable.\end{theorem}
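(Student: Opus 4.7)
The plan is to reduce RoCTL{*} satisfiability to QCTL{*} satisfiability under tree semantics, and then invoke the decidability of the latter. All the heavy machinery has already been assembled in the preceding lemmas, so this theorem is essentially a corollary.

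First I would fix an arbitrary RoCTL{*} formula $\phi$ and apply the translation $\tau$ from \prettyref{def:RoCTL-to-QCTL*} to obtain a QCTL{*} formula of length linear in $|\phi|$. By the lemma immediately preceding the theorem, $\phi$ is satisfiable in RoCTL{*} iff the QCTL{*} formula $AGEN\neg\Viol \wedge \tau(\phi)$ is satisfiable in the tree semantics of QCTL{*}. The extra conjunct $AGEN\neg\Viol$ is what enforces the RoCTL-structure condition that every reachable world admits at least one failure-free successor path, i.e.\ that $\SP(w)$ is nonempty; this is precisely what distinguishes a RoCTL-structure from an arbitrary CTL-structure.

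Next I would invoke \prettyref{thm:tree-QCTL-decidable}, which states that satisfiability in the tree semantics of QCTL{*} is decidable. Composing the (computable) translation $\phi \mapsto AGEN\neg\Viol \wedge \tau(\phi)$ with the decision procedure for tree QCTL{*} yields a decision procedure for RoCTL{*} satisfiability.

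There is no real obstacle here; the work has been front-loaded. The only point that deserves a brief comment is the invariance appeal: \prettyref{lem:RoCTL*-is-bisimulation-invariant} guarantees that restricting attention to tree structures is harmless for RoCTL{*}, which is what makes it legitimate to hand the translated formula off to a decision procedure that operates over the class of tree structures rather than arbitrary CTL-structures. With that observation in place, decidability of RoCTL{*} follows immediately.
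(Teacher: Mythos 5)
Your proposal is correct and follows essentially the same route as the paper: appeal to bisimulation invariance to restrict to tree structures, use the preceding lemma reducing RoCTL{*} satisfiability of $\phi$ to tree-QCTL{*} satisfiability of $AGEN\neg\Viol\wedge\tau(\phi)$, and then invoke decidability of the tree semantics for QCTL{*}. If anything, you are slightly more explicit than the paper about the role of the $AGEN\neg\Viol$ conjunct in enforcing the RoCTL-structure condition.
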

\begin{proof}
Recall that every RoCTL{*} formula is a \RoCTLvstarLogic{} formula.
As RoCTL{*} is bisimulation-invariant (\prettyref{lem:RoCTL*-is-bisimulation-invariant})
we can limit our selves to tree-structures without affecting the set
of valid \formulae{}. When we limit ourselves to tree-structures
RoCTL{*} operates over the same structures as QCTL{*} and we see that
for each such structure $M$, and from the previous lemma for every
path $\sigma$ through $M$ we have $M,\sigma\forces\phi$ iff $M,\sigma\Qforces\tau\left(\phi\right)$.
Thus $\phi$ is satisfiable iff $\tau\left(\phi\right)$ is satisfiable. 

As the tree semantics for QCTL{*} are decidable \cite{EmSi84,678409},
it is obvious from \prettyref{thm:MFinQCTL} that RoCTL{*} is decidable.
\end{proof}
We can show that the above translation is also truth-preserving when
using the amorphous semantics for QCTL{*}. The argument is similar
to above, the $\forall$ operator in the amorphous semantics quantifies
over all bisimulations, and some bisimulations are tree unwindings.
These tree-unwindings will have a $y$-variant where $y$ is true
only along the current path $\sigma$ as so the $\left(\Longleftarrow\right)$
direction of the proof in \prettyref{lem:Prone-to-QCTL} works similarly
for the amorphous semantics of QCTL{*}. In the $\left(\Longrightarrow\right)$
direction we have to consider arbitrarily bisimulations under the
amorphous semantics; however, since RoCTL{*} is bisimulation invariant
this does not cause problems.

The amorphous semantics provide a model-checking procedure for RoCTL{*}.
Note that since the models are serial, all tree models have an infinite
number of worlds. On the other hand the amorphous semantics can be
model-checked; for example, by reduction to amorphous automata \cite{FrenchTim2003}.

We will not present the full proof that the above translation is also
truth preserving when the amorphous semantics are used. Firstly it
would be repetitive. The proof for the amorphous semantics is notationally
more complex as it requires bisimulations, but this merely obfuscates
the ideas central to the translation without introducing new fundamental
ideas. Secondly we will get the model checking result for free when
we introduce the translation into CTL{*} presented in \prettyref{thm:RoCTL*->CTL*}.

\subsection{\label{sec:Hybrid-Logic}A Comment on Hybrid Logic}

Even the tree-semantics of QCTL{*} is non-elementary to decide and
no translation into CTL{*} is elementary in length. For this reason
we also investigated other logics to translate RoCTL{*} into. We know
that we can represent RoCTL{*} with a single variable fragment of
a hybrid extension of CTL{*}, by translating $\prone\phi$ into a
formula such as the following: 
\begin{align*}
\phi\vee\exists x.\left(Fx\wedge E\left(\phi\wedge F\left(x\wedge NNG\neg\Viol\right)\right)\right) & \mathcomma
\end{align*}
where $\exists x.\psi$ is the hybrid logic formula indicating that
the exists a valuation of $x$ such that $x$ is true at exactly one
node on the tree and $\psi$ is true. This is still a way away from
producing a decision procedure for RoCTL{*}. There has been considerable
research into single variable fragments of Hybrid Logic recently (see
for example \citep{DBLP:conf/mfcs/KaraWLS09} for a good overview
of the results in this area). However, these fragments do not contain
the $\exists$ operator as a base operator. Although $\exists x.\psi$
can be defined as an abbreviation, this requires two variables. Even
adding a single variable hybrid logic to CTL{*} leads to a non-elementary
decision procedure (see for example \citep{DBLP:conf/mfcs/KaraWLS09}),
and adding two variables to an otherwise quite weak temporal logic
again gives a non-elementary decision procedure \citep{SW07}.  A
potential avenue of research is investigating the complexity of deciding
the fragment of Hybrid CTL{*} (HCTL{*}) where the hybrid component
consists solely of the $\exists$ operator over a single variable,
as the translation of RoCTL{*} into HCTL{*} falls inside this fragment.
Although we have also given a linear translation into the tree-semantics
of QCTL{*} logic, this single variable fragment of HCTL{*} seems much
more restricted than QCTL{*}. Additionally this fragment of HCTL{*}
seems to have a closer relationship with pebble automata. Never-the-less
this avenue does not seem likely to result in an elementary decision
procedure for RoCTL{*}.

\section{$\ATL$}

\label{sec:ALTL}\label{sec:altl}

Here we define a possible extension of LTL allowing automata to be
used as operators, and briefly show to convert an $\ATL$ formula
$\phi$ into an automaton $\autom_{\phi}$. 
\begin{defi}
We define $\ATL$ \formulae{} recursively according to the following
BNF notation, 
\begin{align*}
\forma & ::=p\BNFbar\neg\forma\BNFbar\lAnd{\forma}{\forma}\BNFbar\lUntil{\forma}{\phi}\BNFbar N\phi\BNFbar\autom\mathcomma
\end{align*}
where $p$ varies over $\Var$ and $\autom$ can be any counter-free
FSA that accepts $2^{\Var}$ as input, that is $\alphabet=2^{\Var}$.
Recall from \prettyref{def:path-to-word} that $g_{\Phi}$ is a simple
conversion from fullpaths to words of an automaton. In this section
we will assume that the special atoms required for the translation
are members of $\Var$, and so will use $\Var$ as $\Phi$. The semantics
of $\ATL$  are defined similarly to LTL, with the addition that
$M,\sigma\forces\autom$ iff the automata $\autom$ accepts $g_{\Var}\left(\sigma\right)$,
or in other words. 
\begin{eqnarray*}
\mfk,\patha\forces\autom & \tiff & \exists_{i}\st g_{\Var}\left(\sigma_{\leq i}\right)\in\lang\left(\autom\right)
\end{eqnarray*}

\end{defi}
Note that since automata can be $\ATL$ \formulae{}, the following
definition also gives us a definition of equivalence between \formulae{}
and automata.
\begin{defi}
We say that a pair of \formulae{} $\phi$, $\psi$ are equivalent
($\phi\equiv\psi$) iff for all structures $M$ and paths $\sigma$
through $M$: 
\begin{align*}
M,\sigma\forces\phi\iff & M,\sigma\forces\psi\mathfullstop
\end{align*}

\end{defi}
We will now give a partial translation from $\ATL$ \formulae{} into
automata; we will not define the acceptance condition $F$ since $F$
is discarded when we produce $\autom_{\devi\phi}$ from $\autom_{\phi}$.
\begin{defi}
\label{def:length-ATL}We define the length of an $\ATL$ formula
recursively as follows:
\begin{eqnarray*}
\left|\phi\wedge\psi\right|=\left|\phi U\psi\right| & = & \left|\phi\right|+\left|\psi\right|\\
\left|\neg\phi\right|=\left|N\phi\right| & = & \left|\phi\right|+1\\
\left|p\right| & = & 1\\
\left|(\alphabet,S,\initial,\transition,F)\right| & = & \left|S\right|
\end{eqnarray*}

In some translations we encode state-\formulae{} (e.g. $A\psi$)
into atoms (labelled $p_{A\psi}$). We define the complexity $\left|\phi\right|^{\star}$
of an $\ATL$ formula $\phi$ similarly, except that we define the
complexity $\left|p_{\psi}\right|^{\star}$ of an atom labelled $p_{\psi}$
to be $\left|\psi\right|^{\star}$.
\end{defi}

\begin{lemma}
\label{lem:ALTL-decidable}The satisfiability problem for $\ATL$
is decidable.\end{lemma}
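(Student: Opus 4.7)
The plan is to reduce satisfiability of $\ATL$ to satisfiability of LTL, and then appeal to the well-known decidability of LTL satisfiability. Since each automaton subformula $\autom$ in an $\ATL$ formula is required to be counter-free, Corollary~\ref{cor:counter-free-FSA-to-LTL-3EXP} makes such a reduction plausible: it tells us that each $\autom$ has an LTL equivalent.

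First, I would process a given $\ATL$ formula $\phi$ by structural recursion. The Boolean and temporal connectives of $\ATL$ coincide with those of LTL, so the only non-LTL construct to eliminate is a subformula of the shape $\autom$. For each maximal occurrence of an FSA $\autom$ in $\phi$, I would apply Corollary~\ref{cor:counter-free-FSA-to-LTL-3EXP} to obtain an LTL formula $\phi_\autom$ that captures the same language over $2^\Var$. Substituting each $\phi_\autom$ for the corresponding $\autom$ then yields a pure LTL formula $\phi'$, whose size is non-elementary in $|\phi|$ but nevertheless effectively computable.

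The main obstacle, and the place where some care is required, is reconciling the two different notions of acceptance. The semantics of $\ATL$ declare $M,\sigma\forces\autom$ iff \emph{some finite prefix} $g_\Var(\sigma_{\leq i})$ lies in $\lang(\autom)$, whereas the Corollary gives an LTL formula matching the finite-word language of $\autom$. I would bridge this gap by wrapping the translated formula in an ``eventually'' construct: intuitively, at position $i$ we assert that the prefix just read is accepted. Concretely, following the decomposition used in \prettyref{sub:Finite-DFAs-to}, one takes the disjunction over those transformations $\alpha$ with $\alpha[Q_0]\cap F\ne\emptyset$ of the LTL translations of the languages $L_\alpha^\autom$, and precedes this disjunction by $\top U(\cdot)$ to capture ``some finite prefix is accepting''. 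Verifying that this wrapper preserves truth on every infinite path is routine once the semantics of the Wilke translation on finite words is unpacked.

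Having constructed $\phi'$, I would conclude by invoking the standard decidability of LTL satisfiability, for instance via the translation of LTL into Büchi automata followed by a non-emptiness check. Since $\phi$ is satisfiable iff $\phi'$ is satisfiable, this yields a decision procedure for $\ATL$. The resulting complexity is non-elementary because of the double-exponential blow-up in Corollary~\ref{cor:counter-free-FSA-to-LTL-3EXP}, but the statement we are proving only asserts decidability.
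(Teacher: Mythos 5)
Your proposal is correct and takes essentially the same route as the paper: its proof is exactly your two-step reduction, replacing each counter-free automaton by an equivalent LTL formula via Corollary~\ref{cor:counter-free-FSA-to-LTL-3EXP} and then invoking the decidability of LTL satisfiability. The one place you go beyond the paper --- bridging the finite-prefix acceptance semantics of $M,\sigma\forces\autom$ with an ``eventually'' wrapper --- addresses a subtlety the paper silently elides, though your specific $\top U(\cdot)$ construction is doubtful as stated (until/eventually move to \emph{suffixes} of the infinite path, whereas acceptance concerns \emph{prefixes}); the conclusion nevertheless survives because $\lang\left(\autom\right)\cdot\alphabet^{\omega}$ is star-free whenever $\lang\left(\autom\right)$ is, and is therefore LTL-definable by Theorem~\ref{thm:LTL=00003Dcounter-free}.
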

\begin{proof}
From \prettyref{cor:counter-free-FSA-to-LTL-3EXP} we can replace
each automata in a $\ATL$ formula $\phi$ with an equivalent LTL
formula. This will result in an LTL formula $\phi$' equivalent to
$\phi$. We can then use any decision procedure for LTL to decide
$\phi$.
\end{proof}

\subsection{A partial translation from $\ATL$ into automata}

\label{sub:ALTL-to-automata}Here we define a translation of an $\ATL$
formula $\phi$ into an automaton $\autom_{\phi}$. However, we do
not define a traditional acceptance condition as this is not required
when constructing $\autom_{\devi\phi}$ from $\autom_{\phi}$. In
this section we will use $\stateA$ and $\stateT$ to represent arbitrary
states of $\autom_{\phi}$; we use $\stateX$ and $\stateY$ to represent
arbitrary states of automata in $\phi$. 
\begin{defi}
The closure $\cl\phi$ of the formula $\phi$ is defined as the smallest
set that satisfies the four following requirements:\end{defi}
\begin{enumerate}
\item $\phi\in\cl\phi$
\item For all $\psi\in\cl\phi$, if $\delta\leq\psi$ then $\delta\in\cl\phi$.
\item For all $\psi\in\cl\phi$, $\neg\psi\in\cl\phi$ or there exists $\delta$
such that $\psi=\neg\delta$ and $\delta\in\cl\phi$.
\item If $\autom\in\cl\phi$ then $\autom^{\stateX}\in\cl\phi$ for all
states $\stateX$ of $\autom$.
\end{enumerate}
The states of $\autom_{\phi}$ are sets of \formulae{} that could
hold along a single fullpath. 
\begin{proposition}
\label{prop:The-size-of}The size of the closure set is linear in
$\left|\phi\right|$. 
\end{proposition}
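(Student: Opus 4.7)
The plan is to establish $|\cl\phi| \leq c\,|\phi|$ for an explicit constant $c$ by structural induction on $\phi$, since the four closure clauses each add only a bounded number of new formulas per node of the syntactic tree. The key observation is that clause 4 for automata adds $\autom^{x}$ for each state $x$ of $\autom$, which contributes $|S|$ formulas, matching exactly the definition $|\autom|=|S|$ in \prettyref{def:length-ATL}. Crucially, iterating clause 4 does not cascade: $(\autom^{x})^{y} = \autom^{y}$, so once we have taken all state-variants we are done.

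First I would handle the base case: for an atom $p$, clause 1 gives $p$ and clause 3 gives $\neg p$, so $|\cl p|\leq 2$. Next, for the compound connectives I would show
\begin{align*}
|\cl{\phi \wedge \psi}| &\leq |\cl\phi| + |\cl\psi| + 2, \\
|\cl{\phi U \psi}|      &\leq |\cl\phi| + |\cl\psi| + 2, \\
|\cl{N\phi}|            &\leq |\cl\phi| + 2, \\
|\cl{\neg\phi}|         &\leq |\cl\phi|,
\end{align*}
where the ``$+2$'' accounts for the formula itself and its negation (clause 3). For the automaton case, clause 4 contributes the $|S|$ formulas $\{\autom^{s}\}_{s\in S}$ and clause 3 their negations, so $|\cl\autom|\leq 2|S| = 2|\autom|$; since $(\autom^{x})^{y}=\autom^{y}$, no further elements are forced by iterating clauses 2 and 4.

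To make the induction go through cleanly I would strengthen the hypothesis slightly, proving $|\cl\phi|\leq 4|\phi|-2$ (or any similar form). The strengthening absorbs the additive ``$+2$'' from the Boolean and temporal cases: for $\phi\wedge\psi$ we get $|\cl\phi|+|\cl\psi|+2 \leq 4|\phi|-2 + 4|\psi|-2 + 2 = 4|\phi\wedge\psi|-2$, and similarly for the other connectives. For automata the slack $(4-2)|S|\geq 2$ covers the constant, using $|S|\geq 1$.

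The main obstacle, and really the only non-routine point, is making sure the recursion triggered by clause 4 terminates at $O(|\autom|)$ rather than blowing up: one must verify that the state-variants $\autom^{x}$ are not themselves subject to further duplication, which follows from the remark that $(\autom^{x})^{y}=\autom^{y}$ together with clause 2 being vacuous on automata (automata have no proper sub\formulae{} in the syntactic sense). Once this is checked, the structural induction yields $|\cl\phi| = O(|\phi|)$ as claimed.
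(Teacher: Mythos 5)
Your argument is correct. Note, however, that the paper states Proposition~\ref{prop:The-size-of} without any proof at all, so there is no authorial argument to compare against; your structural induction supplies the missing justification, and it isolates exactly the two points that make the claim non-trivial rather than immediate: that the length of an automaton formula is defined as its number of states (so clause~4 adds only $|S|=|\autom|$ new elements, plus negations), and that the state-variants do not cascade because $(\autom^{x})^{y}=\autom^{y}$ and automata have no proper sub\formulae{}. The only nit is the clause $|\cl{\neg\phi}|\leq|\cl{\phi}|$, which can be off by one when $\phi$ is itself a negation (clause~3 need not put $\neg\phi$ into $\cl{\phi}$ in that case), but your strengthened hypothesis $4\left|\phi\right|-2$ has ample slack to absorb this.
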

\begin{defi}
\label{defn:MPC-Tableau-3}[MPC] We say that $\stateA\subseteq\closure$
is Maximally Propositionally Consistent (\uline{MPC)} iff for all
$\alpha,\beta\in\stateA$\end{defi}
\begin{description}
\item [{(M1)}] if $\beta=\neg\alpha$ then $\beta\in a$ iff $\alpha\notin\stateA$,
\item [{(M2)}] if $\alpha\wedge\beta\in\closure$ then $\left(\alpha\wedge\beta\right)\in\stateA\leftrightarrow\left(\alpha\in\stateA\tand\beta\in\stateA\right)$ \end{description}

\begin{defi}
\label{def:set-of-states-Sphi} The set of states $S_{\phi}$ is
the set of all subsets $\stateA\subseteq\closure$ satisfying: \end{defi}
\begin{description}
\item [{(S1)}] $\stateA$ is MPC\@
\item [{(S2)}] if $\alpha U\beta\in\stateA$ then $\alpha\in\stateA$ or
$\beta\in\stateA$
\item [{(S3)}] if $\neg\left(\alpha U\beta\right)\in\stateA$ then $\beta\notin\stateA$
\item [{(S4)}] $\stateA$ is non-contradictory, i.e. $\bigwedge\stateA$
is satisfiable.
\end{description}
Note that $\ATL$ is decidable, so we can compute whether $\stateA$
is contradictory. We now define a standard temporal successor relation
for LTL formula.

\global\long\def\hues{H_{\phi}}
\global\long\def\tmpsucc{r_{N}}
\global\long\def\rA{r_{A}}
\global\long\def\rX{\tmpsucc}
\global\long\def\myPhi{\Var}
\global\long\def\myatom{p}

\begin{defi}
\label{defn:rX-Tableau}[$\rX$] The temporal successor $\tmpsucc$
relation on states is defined as follows: for all states $\stateA$,
$\stateB$ put $\left(\stateA,\stateB\right)$ in $\rX$ iff the following
conditions are satisfied:\end{defi}
\begin{description}
\item [{(R1)}] $N\alpha\in\stateA$ implies $\alpha\in\stateB$
\item [{(R2)}] $\neg N\alpha\in\stateA$ implies $\alpha\notin\stateB$
\item [{(R3)}] $\alpha U\beta\in\stateA$ and $\beta\notin\stateA$ implies
$\alpha U\beta\in\stateB$
\item [{(R4)}] $\neg(\alpha U\beta)\in\stateA$ and $\alpha\in\stateA$
implies $\neg(\alpha U\beta)\in\stateB$\end{description}
\begin{defi}
We define the transition relation $\transition_{\phi}\subseteq S_{\phi}\times\alphabet\times S_{\phi}$
as follows: a member $\left\langle \stateS,e,\stateT\right\rangle $
of $S_{\phi}\times\alphabet\times S_{\phi}$ is a member of $\transition_{\phi}$
iff \end{defi}
\begin{description}
\item [{(T1)}] $\left\langle \stateS,\stateT\right\rangle \in\rX$
\item [{(T2)}] For each $\myatom\in\Var$, it is the case that $\myatom\in e$
iff $\myatom\in\stateS$ 
\item [{(T3)}] If $\autom^{\stateX}\in\stateS$, and $\stateX$ is not
an accepting state of $\autom^{\stateX}$, then there must exist a
state $\stateY$ of $\autom^{\stateX}$ such that $\autom^{\stateY}\in\stateT$
and $\left\langle \stateX,e,\stateY\right\rangle $ is a transition
of $\autom^{\stateX}$.
\item [{(T4)}] If $\neg\autom^{\stateX}\in\stateS$, then for each state
$\stateY$ of $\autom^{\stateX}$ such that $\left\langle \stateX,e,\stateY\right\rangle $
is a transition of $\autom^{\stateX}$ it must be the case that $\autom^{\stateY}\notin\stateT$.
\end{description}

\begin{defi}
The automata $\autom_{\phi}$ is the tuple $(\alphabet,S_{\phi},Q_{0},\transition_{\phi})$,
where $Q_{0}$ is the set $\left\{ a\colon a\in S_{\phi}\wedge\phi\in a\right\} $.
\end{defi}
Note that the tuple above does not include an acceptance condition.
The automata $\autom_{\phi}$ is used only to generate the automata
$\autom_{\devi\phi}$. The automata $\autom_{\devi\phi}$ reads the
finite prefix $\sigma_{\leq i}=\pi_{\leq i}$ of an $i$-deviation
$\pi$ from $\sigma$ and then reads a state formula indicating that
we can deviate. This in essence splits the deviation into a prefix
and suffix.  For this reason we do not define a standard acceptance
condition, instead we will say that $\autom_{\phi}$ accepts a pair
$\left(\pi,i\right)$ iff there exists a state $\stateS\in S_{\phi}$
such that the automaton can reach state $\stateS$ after reading the
prefix $\pi_{\leq i-1}$, and $\pi_{\geq i}\forces\bigwedge\stateS$.
Or formally:
\begin{defi}
Given a fullpath $\pi$ though some structure $M$, and non-negative
integer $i$, we say that $\autom_{\phi}$ accepts a pair $\left(\pi,i\right)$
iff there exists a state $\stateS\in S_{\phi}$ such that $\pi_{\geq i}\forces\bigwedge\stateS$,
and there exists a path of $\autom_{\phi}$ labelled $g_{\Var}\left(\pi_{\leq i-1}\right)$
which ends in the state $\stateS$. \end{defi}
\begin{lemma}
\label{lem:s-iff-t}For any fullpath $\pi$, integer $j$, pair of
states $\stateS,\stateT$ such that 
\begin{eqnarray*}
\left\langle \stateS,g_{\Var}\left(\pi_{j}\right),\stateT\right\rangle  & \in & \delta_{\phi}
\end{eqnarray*}
 we have $\pi_{\geq j+1}\forces\bigwedge\stateT\implies\pi_{\geq j}\forces\bigwedge\stateS$.\end{lemma}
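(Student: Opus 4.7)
The plan is to strengthen the claim to the biconditional: for every formula $\alpha\in\closure$, $\alpha\in\stateS$ iff $\pi_{\geq j}\forces\alpha$. Since $\stateS\subseteq\closure$, this immediately yields $\pi_{\geq j}\forces\bigwedge\stateS$. The biconditional form is essential because the negation case $\alpha=\neg\beta$ needs the converse of the induction hypothesis applied to $\beta$. A preliminary observation does the same work for $\stateT$: the hypothesis $\pi_{\geq j+1}\forces\bigwedge\stateT$ together with MPC of $\stateT$ (condition (S1) plus (M1)) gives the analogous biconditional at position $j+1$, namely $\beta\in\stateT$ iff $\pi_{\geq j+1}\forces\beta$ for every $\beta\in\closure$.

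The induction proceeds on the parse structure of $\alpha$, with automata treated as leaves. The atomic case $\alpha=p$ follows from (T2) together with the definition of $g_{\Var}$. The boolean cases $\neg\beta$ and $\beta\wedge\gamma$ follow from MPC conditions (M1) and (M2) combined with the induction hypothesis. For $N\beta$, the forward direction uses (R1) to put $\beta\in\stateT$ and then applies the $\stateT$-biconditional; the reverse direction assumes $\neg N\beta\in\stateS$ via MPC of $\stateS$ and uses (R2) together with the $\stateT$-biconditional to obtain a contradiction. The until case $\beta U\gamma$ is analogous: the forward direction splits by (S2), and in the branch where $\beta\in\stateS$ but $\gamma\notin\stateS$ applies (R3) to carry $\beta U\gamma$ into $\stateT$; the reverse direction takes the least witness of the until along $\pi_{\geq j}$, applies the induction hypothesis to $\beta$ and $\gamma$ at intermediate positions, and appeals to (S3) together with (R4) to derive a contradiction under the assumption $\neg(\beta U\gamma)\in\stateS$.

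The main obstacle is the automaton case $\alpha=\autom^{\stateX}$. Structural induction does not apply directly because $\autom^{\stateY}$ for a successor state $\stateY$ of $\stateX$ is not a proper subformula of $\autom^{\stateX}$; both sit at the same level of the closure via clause~4 of the closure definition. For the forward direction, if $\stateX$ is already accepting the conclusion follows from the acceptance semantics; otherwise (T3) supplies a state $\stateY$ with $\autom^{\stateY}\in\stateT$ and $\langle\stateX,g_{\Var}(\pi_{j}),\stateY\rangle$ a transition of $\autom^{\stateX}$, and prepending this transition to the accepting run of $\autom^{\stateY}$ witnessed by $\pi_{\geq j+1}\forces\autom^{\stateY}$ produces an accepting run from $\stateX$ on a prefix of $\pi_{\geq j}$. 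For the reverse direction, assuming $\neg\autom^{\stateX}\in\stateS$, condition (T4) combined with MPC of $\stateT$ forces $\neg\autom^{\stateY}\in\stateT$ for every $\stateY$ reachable from $\stateX$ on the symbol $g_{\Var}(\pi_{j})$; any hypothetical accepting prefix of $\pi_{\geq j}$ starting at $\stateX$ would single out such a $\stateY$ with $\pi_{\geq j+1}\forces\autom^{\stateY}$, contradicting the $\stateT$-biconditional. The subtle point here is aligning the prefix-acceptance semantics of $\autom$ (which allows arbitrarily long look-ahead) with the purely one-step bookkeeping of (T3) and (T4), and non-contradictoriness (S4) is what ultimately rules out pathological configurations where $\stateX$ is simultaneously accepting and $\neg\autom^{\stateX}\in\stateS$.
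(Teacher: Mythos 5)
Your proof is correct and follows essentially the same route as the paper's: the paper takes a shortest formula of $\stateS$ falsified at $\pi_{\geq j}$ and runs the same case analysis over M1--M2, S1--S4, R1--R4 and T1--T4, which is just the contrapositive packaging of your biconditional induction, and it handles the automaton case exactly as you do (via T3/T4 and the hypothesis $\pi_{\geq j+1}\forces\bigwedge\stateT$, never recursing into $\autom^{\stateY}$ at position $j$). The only nit is your phrase ``induction hypothesis at intermediate positions'' in the until case --- only positions $j$ and $j+1$ carry states, so the hypothesis cannot be invoked further along the path --- but the ingredients you cite (least witness, S3, R4) reconstruct the one-step unfolding the paper uses, so nothing is lost.
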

\begin{proof}
For contradiction assume that this lemma is false. Then $\pi_{\geq j+1}\forces\bigwedge\stateT$
and $\pi_{\geq j}\nforces\bigwedge\stateS$. Since $\pi_{\geq j}\nforces\bigwedge\stateS$
then there exists some $\psi\in\stateS$ such that $\pi_{\geq j}\nforces\psi$.
We assume without loss of generality that $\psi$ is the shortest
such formula. We now consider each possible form of $\psi$, in each
case recall that $\psi\in\stateS$, $\pi_{\geq j+1}\forces\bigwedge\stateT$
and $\left\langle \stateS,g_{\Var}\left(\pi_{j}\right),\stateT\right\rangle \in\delta_{\phi}$.

$\psi=\neg\neg\alpha$ From M1 and $\psi\in\stateS$ we get $\alpha\in\stateS$
and since $\alpha$ is shorter than $\neg\neg\alpha$ it follows that
$\pi_{\geq j}\forces\alpha$ and so $\pi_{\geq j}\forces\neg\neg\alpha$.
However, by assumption $\pi_{\geq j}\nforces\psi$.

$\psi=p$: From T2 we know that as $p\in\stateS$, we have $p\in g_{\Var}\left(\pi_{j}\right)$
and so $\pi_{\geq j}\forces p$. But by assumption $\pi_{\geq j}\nforces\psi$. 

$\psi=\neg p$: From M1 we know that $p\notin\stateS$, and from T2
we have $p\notin g_{\Var}\left(\pi_{j}\right)$ and so $\pi_{\geq j}\forces\neg p$.

$\psi=\alpha\wedge\beta$: As $\stateS$ is MPC we see that $\alpha,\beta\in\stateS$.
As we have assumed that $\psi$ is the shortest formula that provides
a counterexample we see that $\pi_{\geq j}\forces\alpha$ and $\pi_{\geq j}\forces\beta$.
Hence $\pi_{\geq j}\forces\alpha\wedge\beta$.

$\psi=\neg\left(\alpha\wedge\beta\right)$: As $\stateS$ is MPC we
see that $\alpha\wedge\beta\notin\stateS$. It follows that $\alpha\notin\stateS$
or $\beta\notin\stateS$. Without loss of generality, assume $\alpha\notin\stateS$.
Thus $\pi_{\geq j}\nforces\alpha$ and $\pi_{\geq j}\nforces\left(\alpha\wedge\beta\right)$.
Hence $\pi_{\geq j}\forces\neg\left(\alpha\wedge\beta\right)$.

$\psi=N\theta$: We see that if $\pi_{\geq j}\nforces N\theta$ then
$\pi_{\geq j+1}\nforces\theta$, but we see that from R1 that $\theta\in\stateT$.
By contradiction $\theta$ cannot be of the form $N\theta$.

$\psi=\neg N\theta$: We see that if $\pi_{\geq j}\nforces\neg N\theta$
then $\pi_{\geq j+1}\forces\theta$, but we see that from R2 that
$\theta\notin\stateT$. 

$\psi=\alpha U\beta$: We see that if $\alpha U\beta\in\stateS$ then
from S2 either $\alpha\in\stateS$ or $\beta\in\stateS$. Since $\pi_{\geq j}\nforces\alpha U\beta$
it follows that $\pi_{\geq j}\forces\neg\beta$. As $\neg\beta$ is
shorter than $\psi$ we have $\neg\beta\in\stateS$ and so $\beta\notin\stateS$.
Since $\beta\notin\stateS$, from R3 we have $\alpha U\beta\in\stateT$
and so $\pi_{\geq j+1}\forces\alpha U\beta$. As $\alpha\in\stateS$
and $\alpha$ is shorter than $\psi$ we see that $\pi_{\geq j}\forces\alpha$.
As $\pi_{\geq j}\forces\alpha$ and $\pi_{\geq j+1}\forces\alpha U\beta$
we see that $\pi_{\geq j}\forces\alpha U\beta$.

$\psi=\neg\left(\alpha U\beta\right)$ We see that if $\neg\left(\alpha U\beta\right)\in\stateS$
then from S3 we have $\beta\notin\stateS$ and thus $\neg\beta\in\stateS$.
As $\neg\beta$ is shorter than $\psi$ we have $\pi_{\geq j}\forces\neg\beta$.
Since $\pi_{\leq j}\nforces\neg\left(\mbox{\ensuremath{\alpha}}U\beta\right)$
we have $\pi_{\geq j}\forces\alpha U\beta$; as $\pi_{\geq j}\forces\neg\beta$
it follows that $\pi_{\geq j}\forces\alpha$. Thus $\alpha\in\stateS$,
and from R4 we know $\neg\left(\alpha U\beta\right)\in\stateT$ and
hence $\pi_{\geq j+1}\forces\neg\left(\alpha U\beta\right)$. As $\pi_{\geq j}\forces\neg\beta$
it follows that $\pi_{\geq j}\forces\neg\left(\alpha U\beta\right)$.
By contradiction, $\psi$ cannot be of the form $\neg\left(\alpha U\beta\right)$.

$\psi=\autom^{x}$:  If $x$ is an accepting state of $\autom^{x}$,
then we see that $\autom^{x}$ is satisfied on all fullpaths through
$M$, including $\pi_{\geq j}$ and so $x$ is not an accepting state.
We see from T3 that there exists a state $y$ of $\autom^{x}$ such
that $\autom^{y}\in\stateT$ and $\left\langle x,g_{\Var}\left(\pi_{j}\right),y\right\rangle $
is a transition of $\autom^{x}$. As $\pi_{\geq j+1}\forces\bigwedge\stateT$
we see $\pi_{\geq j+1}\forces\autom^{y}$. We can prepend the state
$x$ and the symbol $g_{\Var}\left(\pi_{j}\right)$ to the accepting
path for $\autom^{y}$ to construct an accepting path for $\autom^{x}$,
so we see that $\pi_{\geq j}\forces\autom^{x}$.

$\psi=\neg\autom^{x}$:  Since $\pi_{\geq j}\nforces\psi$ we see
$\pi_{\geq j}\forces\autom^{x}$. Thus there must exist a state $y$
of $\autom^{x}$ such that $\left\langle x,g_{\Var}\left(\pi_{j}\right),y\right\rangle $
is in the transition relation of $\autom^{x}$ and $\pi_{\geq j+1}\forces\autom^{y}$.
However, from T4 and M1, we see that $\neg\autom^{y}\in\stateT$,
and since $\pi_{\geq j+1}\forces\bigwedge\stateT$, we have $\pi_{\geq j+1}\forces\neg\autom^{y}$\@. 

We have considered all possible forms of $\psi$ and in each case
got a contradiction. By contradiction this lemma must be true.
\end{proof}
We will now state the lemma demonstrating the correctness of the translation.
\begin{lemma}
\label{lem:accepts,pi,i}For any fullpath $\pi$ though $M$, and
non-negative integer $i$, the automata $\autom_{\phi}$ accepts the
pair $\left(\pi,i\right)$ iff $\pi\forces\phi$. 
\end{lemma}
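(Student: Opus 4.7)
The plan is to prove both directions, using the hypothesis lemma (Lemma~\ref{lem:s-iff-t}) for the backward direction and an explicit construction of a canonical run for the forward direction.

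For the backward direction ($\autom_\phi$ accepts $(\pi,i) \Rightarrow \pi\forces\phi$), suppose there is a path $\stateS_0 \stateS_1 \cdots \stateS_i$ in $\autom_\phi$ labelled $g_\Var(\pi_{\leq i-1})$ with $\stateS_0 \in Q_0$ and $\pi_{\geq i} \forces \bigwedge \stateS_i$. I would apply Lemma~\ref{lem:s-iff-t} iteratively: since $\langle \stateS_{i-1}, g_\Var(\pi_{i-1}), \stateS_i\rangle \in \transition_\phi$ and $\pi_{\geq i}\forces \bigwedge \stateS_i$, the lemma gives $\pi_{\geq i-1}\forces \bigwedge \stateS_{i-1}$. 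Descending down to $j=0$ yields $\pi \forces \bigwedge \stateS_0$, and since $\stateS_0 \in Q_0$ means $\phi \in \stateS_0$, we conclude $\pi \forces \phi$.

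For the forward direction ($\pi\forces\phi \Rightarrow \autom_\phi$ accepts $(\pi,i)$), I would define the canonical sequence $\stateS_0,\stateS_1,\ldots$ by
\begin{align*}
\stateS_j = \{\psi \in \closure \colon \pi_{\geq j}\forces \psi\}\mathfullstop
\end{align*}
First I would verify each $\stateS_j \in S_\phi$: the MPC conditions (M1,M2) are immediate from truth in classical logic; S2 and S3 follow from the standard fixed-point unfolding of $U$; and non-contradictoriness (S4) holds because $\pi_{\geq j}$ simultaneously witnesses all members of $\stateS_j$. Since $\pi\forces\phi$, we have $\phi\in\stateS_0$, so $\stateS_0 \in Q_0$. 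Next, I would show $\langle \stateS_j, g_\Var(\pi_j), \stateS_{j+1}\rangle \in \transition_\phi$ by checking T1--T4. Conditions T1 (i.e.~R1--R4) and T2 are routine from the definition of $\stateS_j$, since each condition just reflects that $\pi_{\geq j}$ and $\pi_{\geq j+1}$ satisfy the corresponding temporal unfoldings.

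The main obstacle will be verifying T3 and T4, the conditions governing automata sub\formulae{}. For T3, if $\autom^x \in \stateS_j$ then $\pi_{\geq j}\forces \autom^x$, so either $x$ is already accepting (in which case T3 is vacuous), or there is a finite accepting run of $\autom^x$ on some prefix of $g_\Var(\pi_{\geq j})$; taking its first transition yields a state $y$ with $\langle x, g_\Var(\pi_j), y\rangle$ in the transition relation of $\autom^x$ and $\pi_{\geq j+1}\forces \autom^y$, so $\autom^y \in \stateS_{j+1}$ as required. For T4, if $\neg\autom^x \in \stateS_j$ then $\pi_{\geq j}\nforces \autom^x$, so for every $y$ with $\langle x, g_\Var(\pi_j), y\rangle$ a transition of $\autom^x$ we have $\pi_{\geq j+1}\nforces \autom^y$, hence $\neg \autom^y \in \stateS_{j+1}$ and by M1 $\autom^y \notin \stateS_{j+1}$. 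Having established all this, $\stateS_0\stateS_1\cdots\stateS_i$ is a valid path in $\autom_\phi$ labelled $g_\Var(\pi_{\leq i-1})$ ending at $\stateS_i$ with $\pi_{\geq i}\forces \bigwedge \stateS_i$, so $\autom_\phi$ accepts $(\pi,i)$.
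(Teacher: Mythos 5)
Your proposal is correct and uses essentially the same two ingredients as the paper's proof: Lemma~\ref{lem:s-iff-t} applied repeatedly along the run for the ``accepts implies forces'' direction, and the canonical states $\stateS_j=\{\psi\in\closure\colon\pi_{\geq j}\forces\psi\}$ with verification of S1--S4 and T1--T4 for the converse. The only difference is organisational --- the paper packages both directions into a single induction on $i$ while you separate them --- and your indexing of the transition labels ($g_{\Var}(\pi_j)$ from $\stateS_j$ to $\stateS_{j+1}$) is in fact the consistent choice.
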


\begin{proof}
We first show that this lemma holds for $i=0$. 

$\left(\Longrightarrow\right)$ We let $\stateS$ be the maximal subset
of $\cl\phi$ such that for each $\psi\in\stateS$ we have $\pi\forces\psi$.
We see that $\stateS$ satisfies S1--4 and so $\stateS\in S_{\phi}$.
We see $\phi\in\stateS$ and so $\stateS\in Q_{0}$. Clearly $\pi\forces\bigwedge\stateS$.

$\left(\Longleftarrow\right)$ By definition each $\stateS\in Q_{0}$
includes $\phi$ and so clearly if $\pi\nforces\phi$ then $\pi\nforces\bigwedge\stateS$. 

Say that the lemma holds for $i=j$, where $j$ is some non-negative
integer. We now show that the lemma holds for $i=j+1$.

$\left(\Longleftarrow\right)$ Say that $\pi\forces\phi$. Since the
lemma holds for $i=j$, we see that there exists a state $\stateS\in S_{\phi}$
such that $\pi_{\geq j}\forces\bigwedge\stateS$ and there exists
a path of $\autom_{\phi}$ labelled $g_{\Var}\left(\pi_{\leq j-1}\right)$
which ends in the state $\stateS$. We let $\stateT$ be the maximal
subset of $\cl\phi$ such that for each $\psi\in\stateT$ we have
$\pi_{\geq j+1}\forces\psi$. Again, we see that $\stateT$ satisfies
S1--4 and so $\stateT\in S_{\phi}$.  We now show that $\left\langle \stateS,g_{\Var}\left(\pi_{j+1}\right),\stateT\right\rangle \in\transition_{\phi}$.

\begin{description}
\item [{T1}] Say that $N\alpha\in\stateS$, then since $\pi_{\geq j+1}\forces\bigwedge\stateS$
it is clear that $\pi_{\geq j+1}\forces\alpha$. Since either $\alpha$
or its negation is in $\stateT$ and $\pi_{\geq j+1}\forces\bigwedge\stateT$,
we see that $\alpha$ in $\stateT$. We see that $\tmpsucc$ is a
standard temporal successor function, and so a similar argument can
be made for R2--4.
\item [{T2}] We see from the semantics that, for each atom $p$, we have
$\pi_{\geq j+1}\forces p$ iff $p\in g_{\Var}\left(\pi_{j+1}\right)$.
Since either $p$ or $\neg p$ in $\stateT$, we again see that $p\in\stateT$
iff $p\in g_{\Var}\left(\pi_{j+1}\right)$. 
\item [{T3}] It is clear that if an automaton accepts a word ``abcd$\cdots$z''
starting at a state $x$ then there must exist state $y$ from which
the automata accepts the word ``bcd$\cdots$z'', and such that $\left\langle \stateX,a,\stateY\right\rangle $
is in the transition relation. Again, $\stateT$ contains either $\autom^{y}$
or $\neg\autom^{y}$. Since $\pi_{\geq j+1}\forces\bigwedge\stateT$
and $\pi_{\geq j+1}\forces\autom^{y}$ it is clear that $\autom^{y}\in\stateT$.
\item [{T4}] This is the converse of T3. We see that if there exists a
state $y$ from which the automata accepts the word ``bcd$\cdots$z'',
and $\left\langle \stateX,a,\stateY\right\rangle $ is in the transition
relation then the automata accepts a word ``abcd$\cdots$z''. Say
that $\neg\autom^{x}\in\stateS$, then $\pi_{\geq j}\forces\neg\autom^{x}$
and so $\pi_{\geq j+1}\nforces\autom^{y}$ for any $y$ reachable
from $x$ in $\autom^{x}$ by reading the symbol $g_{\Var}\left(\pi_{j+1}\right)$.
Yet again, $\stateT$ contains either $\autom^{y}$ or $\neg\autom^{y}$.
Since $\pi_{\geq j+1}\forces\bigwedge\stateT$ and $\pi_{\geq j+1}\nforces\autom^{y}$
it is clear that $\neg\autom^{y}\in\stateT$.\end{description}

$\left(\Longrightarrow\right)$ Say that $\pi\nforces\phi$, but that
the automata $\autom_{\phi}$ accepts the pair $\left(\pi,j+1\right)$.
Then there exists a path through $\autom_{\phi}$ labelled $g_{\Phi}\left(\pi_{\leq j}\right)$
ending at  a state $\stateT$ such that $\pi_{\geq j+1}\forces\bigwedge\stateT$;
let $\stateS$ be the state immediately preceding $\stateT$ along
that path. Since $\pi\nforces\phi$ and the lemma holds for $i=j$
we see that $\pi_{\geq j}\nforces\bigwedge\stateS$. From \prettyref{lem:s-iff-t},
we get a contradiction. 
\end{proof}

\begin{defi}
We say that an $\ATL$ formula is counter-free if all automata contained
in the formula are counter-free.
\end{defi}
Although we know that every LTL formula is equivalent  to some counter-free
automata in that they accept precisely the same strings/paths \citep{DiGa08Thomas},
note that it is not the case that no non-counter free automata is
equivalent to an LTL formula. For example, the following automata
accepts the same paths that satisfy $Gp$, yet it is not counter free
as $pp\in L_{a,a}$ but $p\notin L_{a,a}$.

\begin{center}
\input{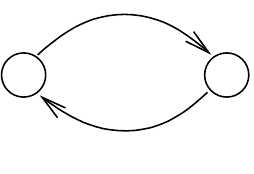tex_t} 
\par\end{center}

We cannot assume that $\autom_{\phi}$ is counter free simply because
$\phi$ is equivalent to an LTL formula. We will now prove that $\autom_{\phi}$
is counter-free. Although we have not defined a traditional acceptance
condition for $\autom_{\phi}$, for the purposes of the next lemma
we will say that the automata accepts a word $g_{\Var}\left(\pi\right)$
iff $\autom_{\phi}$ accepts $\left(\pi,i\right)$ for all $i\geq0$. 
\begin{lemma}
\label{lem:powerset-is-counter-free}If $\phi$ is counter-free then
the automata $\autom_{\phi}$ is counter-free.\end{lemma}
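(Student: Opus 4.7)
The plan is to prove $\autom_\phi$ is counter-free by analysing the transition relation componentwise along the syntactic decomposition of $\cl\phi$. Suppose $u^m \in L_{\stateS,\stateS}(\autom_\phi)$ for some $\stateS \in S_\phi$, word $u = a_1 a_2 \cdots a_k$, and $m \geq 1$, witnessed by a path $\stateS = q_0, q_1, \ldots, q_{mk} = \stateS$. The goal is to construct another path $\stateS = r_0, r_1, \ldots, r_k = \stateS$ labelled $u$ by choosing the contribution of each kind of subformula to each $r_i$ independently.

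First, the propositional content of each $q_i$ is forced by the letter $a_{(i \bmod k) + 1}$ via T2, and then boolean combinations are determined by M1 and M2. Hence $q_i$ and $q_{i+k}$ automatically agree on atoms and boolean combinations, so the same propositional content can be used in $r_i$. For $N$-formulas, the tableau rules R1 and R2 make $N\alpha \in q_i$ equivalent to $\alpha \in q_{i+1}$, so once the rest is chosen the $N$-content is forced. For each sub-automaton $\autom$ occurring in $\phi$, T3 and T4 impose on the sets $T_i = \{x : \autom^x \in q_i\}$ precisely the transition relation of the determinisation of $\autom$ given in Definition~\ref{def:a-determinisation}: T3 says that every non-accepting element of $T_i$ must have an $\autom$-successor in $T_{i+1}$, and T4 says that $y \in T_{i+1}$ forces all $\autom$-predecessors of $y$ under $a_{i+1}$ to lie in $T_i$. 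Since $\phi$ is counter-free by hypothesis, each such $\autom$ is counter-free, so Lemma~\ref{lem:preserve-counter-free} gives that its determinisation is counter-free; hence $T_0 = T_{mk}$ implies the existence of a $T$-path from $T_0$ back to $T_0$ labelled $u$, supplying the $\autom^x$-content of each $r_i$.

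Given these pieces I would assemble each $r_i$ and verify, clause by clause, that the resulting $r_i \subseteq \cl\phi$ satisfies S1--S4 and that every triple $\langle r_i, a_{i+1}, r_{i+1}\rangle$ satisfies T1--T4. Once verified this gives the desired $u$-cycle at $\stateS$, contradicting the assumption that $u \notin L_{\stateS,\stateS}(\autom_\phi)$.

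The main obstacle will be the $U$-formulas. Whether $\alpha U \beta \in q_i$ depends not merely on the local letter but on whether the eventuality $\beta$ is realised somewhere in the future, and the tableau rules R3, R4 together with S2, S3 only constrain this propagation locally; the real consistency check is S4 (non-contradiction), which is a global semantic condition. Along the $u^m$-cycle the $U$-choices are coherent with some satisfying extension guaranteed by S4; when shortening to a single $u$-cycle, I must argue that a compatible coherent choice exists for the $r_i$ and that they remain non-contradictory. I expect this to follow by extracting a suitably periodic extension from the one realising the original $u^m$-cycle, so that $u$-cycle-consistent $U$-resolutions can be read off, but this is the delicate step where the structure of $\autom_\phi$ and the counter-freeness hypothesis on $\phi$ must be combined most carefully.
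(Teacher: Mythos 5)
Your plan goes in a genuinely different direction from the paper, and the step you yourself flag as delicate is exactly where it breaks down. You propose to shorten a $u^m$-cycle at $\stateS$ to a $u$-cycle by choosing the contribution of each kind of subformula to each $r_i$ \emph{independently}, but the states of $\autom_{\phi}$ do not decompose that way: M1 forces each $r_i$ to be a maximal choice over all of $\cl\phi$, and S4 requires the \emph{conjunction} of everything you have chosen -- propositional content, $N$-formulas, automaton formulas and $U$-formulas together -- to be jointly satisfiable. Picking the automaton component from a cycle of the determinisation (via Lemma~\ref{lem:preserve-counter-free}) and the $U$-component from some ``suitably periodic extension'' gives no guarantee that the assembled set is non-contradictory, nor that $r_0=r_k=\stateS$ exactly (the $U$- and $N$-content of $r_k$ must coincide with that of $\stateS$, not merely be consistent with it). Also, your claim that $q_i$ and $q_{i+k}$ automatically agree on Boolean combinations is only true for purely propositional formulas; M1/M2 determine membership of a compound formula only relative to its temporal or automaton constituents, which are not fixed by the letter. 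So as written this is a proof sketch with the crux missing, not a proof.

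The paper sidesteps all of this with a short semantic argument. Each state $\stateS$ is a satisfiable set of $\ATL$ \formulae{} whose member automata are counter-free (they differ from automata in $\phi$ only in initial state), so $\bigwedge\stateS$ is equivalent to an LTL formula; hence the language accepted from $\stateS$ is LTL-definable and therefore aperiodic (Theorem~\ref{thm:LTL=00003Dcounter-free}). If $u^{m}\in L_{\stateS,\stateS}$ but $u\notin L_{\stateS,\stateS}$, then using S4 to pick a word $w$ accepted from $\stateS$ but from no other state (distinct states are separated by a formula and its negation), one gets that $\autom_{\phi}^{\stateS}$ accepts $u^{im}\cdot w$ but not $u^{im+1}\cdot w$ for all $i$, contradicting aperiodicity. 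If you want to salvage your structural approach you would essentially have to re-prove this aperiodicity fact by hand for the $U$-formulas; the semantic route gets it for free, and I would recommend adopting it.
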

\begin{proof}
Each state is a set of $\ATL$ formula, by taking the conjunction
of these \formulae{} we get an $\ATL$ formula $\psi$. Each automata
$\autom^{2}$ in $\psi$ comes from some automata $\autom^{1}$ in
$\phi$, and $\autom^{1}$ differs from $\autom^{2}$ only in the
initial states. Since $\autom^{1}$ is counter-free we see that $\autom^{2}$
is counter-free. Since each automata in $\ATL$ is counter-free we
can find an equivalent LTL formula, and so $\psi$ is equivalent to
some LTL formula $\psi'$.

If $\autom_{\phi}$ is not counter-free then there exists a positive
integer $m$, state $\stateS\in S_{\Phi}$ and word $u$ in $\alphabet^{*}$
such that $u^{m}\in L_{\stateS,\stateS}$ and $u\notin L_{\stateS,\stateS}$.
Since the states are non-contradictory we know that $\autom_{\phi}^{\stateS}$
accepts some word $w$. For any state $\stateT$ there exists some
formula $\formc$ such that $\neg\formc\in\stateS$ and $\formc\in\stateT$
or visa-versa. As such $\autom_{\phi}^{\stateT}$ does not accept
the word $w$. Since $u\notin L_{\stateS},_{\stateS}$ and $u^{m}\in L_{\stateS,\stateS}$
we see that $\autom_{\phi}^{\stateS}$ does not accept $u\cdot w$
but it does accept $u^{m}\cdot w$. By induction we discover that
for all non-negative $i$ the automaton $\autom_{\phi}^{\stateS}$
does not accept $u^{im+1}\cdot w$ but it does accept $u^{im}\cdot w$.
We see that any automaton that accepts this language must have a counter,
yet $\autom_{\phi}^{\stateS}$ is equivalent to an LTL formula and
so the language must be accepted by some counter-free automata. By
contradiction we know that $\autom_{\phi}$ is counter-free.
\end{proof}

\section{Translation into CTL{*}}

\label{sec:ctlstar}\foreignlanguage{english}{\label{sec:Translation-RoCTL*->CTL*.}}We
now present a translation from RoCTL{*} into CTL{*}. Note that $\prone\phi$
indicates that $\phi$ holds on the current path or a deviation. As
a convenience we use a psuedo-operator $\devi$ which indicates that
$\phi$ holds on a deviation. In \prettyref{sub:ALTL-to-automata}
we presented a translation from $\ATL$ into an automaton $\autom_{\phi}$;
In \prettyref{sub:devi} we will show how to construct an automaton
$\autom_{\devi\phi}$ which accepts iff $\autom_{\phi}$ would accept
on a deviation from the current path, and then translate $\prone\phi$
into $\phi\vee\autom_{\devi\phi}$. In \prettyref{sub:RoCTL*-to-ALTL-and-CTL*}
we combine these translations to provide a translation of RoCTL{*}
into $\ATL$ and then into CTL{*}.

\subsection{$\autom_{\phi}$ to $\autom_{\devi\phi}$}

\label{sub:devi}

\global\long\def\ssfo{\Var}
\global\long\def\ssfd{\Var}

In this section we will show how to construct an automaton $\autom_{\devi\phi}$
from $\autom_{\phi}$.  Where $\autom_{\phi}$ is equivalent to $\phi$,
the automaton $\autom_{\devi\phi}$ is equivalent to $\devi\phi$.
Note that the remainder of input from the current path is irrelevant
once the deviation has occurred. Thus we may define $\autom_{\devi\phi}$
as accepting finite words terminated by a state formula indicating
that a deviation has occurred, and hence define $\autom_{\devi\phi}$
as a finite automaton.
\begin{defi}
Where $\autom_{\phi}=\left\langle 2^{\ssfo},S,\initial,\transition,F\right\rangle $
is a counter-free automaton for $\phi$, we create a finite automaton
$\autom_{\devi\phi}=\left\langle 2^{\ssfd},S_{\devi},\initial,\transition_{\devi},F_{\devi}\right\rangle $
for $\devi\phi$, where\end{defi}
\begin{enumerate}
\item $\Psi=\left\{ \psi_{s}\colon\: s\in S\right\} $, where $\psi_{s}$
is the following state formula: 
\begin{eqnarray*}
E\left(\bigwedge s\wedge NNG\neg\Viol\right)
\end{eqnarray*}
$\psi_{s}$ is roughly equivalent to saying ``if we are in state
$s$, we can deviate here''.
\item We add a state $s_{F}$ indicating that there existed an accepting
deviation from this path and so we shall accept regardless of further
input. This input relates to the original path rather than the deviation
and is thus irrelevant. As such, $S_{\devi}=S\union\sF$ and $F_{\devi}=\left\{ \sF\right\} $.
\item $\transition_{\devi}$ is the relation that includes $\transition$
but at each state also gives the option to branch into $s_{F}$ when
a deviation is possible and remain in that state regardless of the
input along the current path. That is, $\transition_{\devi}$ is the
minimal relation satisfying:

\begin{enumerate}
\item \label{enu:every-run}If for every tuple $\left\langle s,e,t\right\rangle $
in $\delta$ the tuple $\left\langle s,e,t\right\rangle \in\delta_{\devi}$.
This is to ensure that wherever $g_{\ssfo}\left(\sigma\right)$ is
a run of $\autom_{\phi}$, it is also the case that $g_{\ssfd}\left(\sigma\right)$
is a run of $\autom_{\devi\phi}$.
\item \label{enu:accept-on-deviate}For each $s\in S$ and each $e_{\devi}\in2^{\ssfd}$
such that $p_{\psi_{s}}\in e_{\devi}$ we have $\left\langle s,e_{\devi},\sF\right\rangle $
in $\transition_{\devi}$.
\item For each $e_{\devi}$ in $2^{\ssfd}$ we have $\left\langle \sF,e_{\devi},\sF\right\rangle $
in $\transition_{\devi}$.
\end{enumerate}
\end{enumerate}
The translation above is broadly similar to the translation presented
in \cite{DBLP:conf/time/McCabe-DanstedFRP09}, but we translate the
$\devi$ operator instead of the $\eE$ operator so that we can use
finite automata.

\global\long\def\ssfo{\Var}
\global\long\def\ssfd{\Var}

\global\long\def\vPp{g_{\Var}}

We fix $M$ to be some structure  such that for all worlds $w$,
\formulae{} $\psi$ and all atoms labelled $p_{E\psi}$, we have
$M,w\forces p_{E\psi}$ iff there exists a path $\sigma$ starting
at $w$ such that $M,\sigma\forces\psi$. Recall that $\autom_{\phi}=\left\langle 2^{\myPhi},S,\initial,\transition,F\right\rangle $
is the translation of $\phi$ into an automaton, and $\autom_{\devi\phi}=\left\langle 2^{\Var},S_{\devi},\initial,\transition_{\devi},F_{\devi}\right\rangle $
is the automaton constructed from $\autom_{\phi}$.

Here we present a lemma demonstrating that the translation of $\devi$
is correct.
\begin{lemma}
\label{lem:devi-correct}For any  fullpath $\sigma$ and $\ATL$
formula $\phi$ it is the case that $M,\sigma\forces\autom_{\devi\phi}$
iff there exists a deviation $\pi$ from $\sigma$ such that $M,\pi\forces\phi$.\end{lemma}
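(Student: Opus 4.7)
The plan is to prove both directions by bridging between $\autom_\phi$ and $\autom_{\devi\phi}$ using \prettyref{lem:accepts,pi,i}, which characterises when $\autom_\phi$ accepts a pair $(\pi,j)$ in terms of $\pi\forces\phi$. The new ingredient is the atom $p_{\psi_s}$, which under our fixed $M$ holds at $\sigma_j$ precisely when there is a path from $\sigma_j$ satisfying $\bigwedge s$ together with $NNG\neg\Viol$; this is exactly the data needed to turn an $\autom_\phi$-acceptance of a deviation $(\pi,j)$ into a transition into the sink state $s_F$ at step $j$ of the run on $\sigma$.

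For the forward direction I would assume $M,\sigma\forces\autom_{\devi\phi}$, so by the $\ATL$ semantics some prefix $g_\Var(\sigma_{\leq i})$ is accepted. Because $s_F$ is only reachable by a transition of type \ref{enu:accept-on-deviate}, there is a smallest $j\leq i$ at which the run leaves some state $s\in S$ for $s_F$; equivalently, there is a run of $\autom_\phi$ on $g_\Var(\sigma_{\leq j-1})$ ending in $s$, and $p_{\psi_s}\in g_\Var(\sigma_j)$. Unwinding the definition of $p_{\psi_s}$ in $M$ gives a fullpath $\tau$ starting at $\sigma_j$ with $M,\tau\forces\bigwedge s\wedge NNG\neg\Viol$. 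I then set $\pi := \sigma_{\leq j-1}\cdot\tau$ and check that $\pi_{\geq j+1}$ is failure-free (which is exactly what $NNG\neg\Viol$ delivers, after shifting indices), so $\pi$ is a $j$-deviation from $\sigma$. Finally, the run of $\autom_\phi$ reaching $s$ after $g_\Var(\pi_{\leq j-1})$ together with $M,\pi_{\geq j}\forces\bigwedge s$ witnesses that $\autom_\phi$ accepts $(\pi,j)$, so by \prettyref{lem:accepts,pi,i} we get $M,\pi\forces\phi$.

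The backward direction is essentially the same argument in reverse: given a $j$-deviation $\pi$ with $M,\pi\forces\phi$, \prettyref{lem:accepts,pi,i} supplies a state $s$ such that some run of $\autom_\phi$ on $g_\Var(\pi_{\leq j-1}) = g_\Var(\sigma_{\leq j-1})$ ends in $s$ and $M,\pi_{\geq j}\forces\bigwedge s$. Taking $\tau := \pi_{\geq j}$ and combining $\bigwedge s$ with the failure-freeness of $\pi_{\geq j+1}$ (re-packaged as $NNG\neg\Viol$ on $\tau$) yields $M,\sigma_j\forces p_{\psi_s}$, which unlocks the transition $\langle s,g_\Var(\sigma_j),s_F\rangle$ of $\autom_{\devi\phi}$ by construction item \ref{enu:accept-on-deviate}. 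Lifting the $\autom_\phi$-run to $\autom_{\devi\phi}$ via item \ref{enu:every-run} and appending this transition gives an accepting run on $g_\Var(\sigma_{\leq j})$.

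The main obstacle is purely bookkeeping around the index shift at the deviation point: matching the ``at most one failure not on the original path'' clause of a deviation with the $NN$-prefix in $\psi_s$, and making sure $\pi_j = \sigma_j$ so that both the $\autom_\phi$-prefix and the witnessing path $\tau$ meet cleanly at the same world. Everything else follows mechanically from the prior lemma and the definition of $\autom_{\devi\phi}$; no new properties of counter-freeness or of $\autom_\phi$ beyond those already established are needed here.
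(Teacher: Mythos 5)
Your proposal is correct and follows essentially the same route as the paper's own proof: both directions hinge on \prettyref{lem:accepts,pi,i} together with the observation that $p_{\psi_s}$ holds at the deviation point exactly when a failure-free continuation satisfying $\bigwedge s$ exists, which is what licenses (or is witnessed by) the transition into $\sF$. Your extra care about taking the \emph{first} entry into $\sF$ and about the index shift between $NNG\neg\Viol$ and failure-freeness of $\pi_{\geq j+1}$ is exactly the bookkeeping the paper does implicitly, so there is nothing further to add.
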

\begin{proof}

$\left(\Longleftarrow\right)$ Say that there exists a deviation $\pi$
from $\sigma$ such that $M,\pi\forces\phi$;  then there exists
an integer $i$ such that $\sigma_{\leq i}=\pi_{\leq i}$ and $\pi_{\geq i+1}$
is failure-free. Since $\pi\forces\phi$ we know from \prettyref{lem:accepts,pi,i}
that $\autom_{\phi}$ accepts $\left(\pi,i\right)$, ending in some
state $s$. As $\pi_{\geq i}\forces\bigwedge s$ and $\pi_{\geq i+1}$
is failure-free we see that $\pi_{\geq i}\forces\bigwedge s\wedge NNG\neg\Viol$,
and hence $p_{\psi_{s}}\in g_{\myPhi}\left(\pi_{i}\right)$ and so
$\left\langle s,g_{\ssfo}\left(\pi_{i}\right),\sF\right\rangle \in\transition_{\devi}$.
Thus $M,\sigma\forces\autom_{\devi\phi}$.

$\left(\Longrightarrow\right)$ Say that $M,\sigma\forces\autom_{\devi\phi}$.
Thus there is an accepting run $s_{0}\overset{\vPp\left(\sigma_{0}\right)}{\rightarrow}s_{1}\overset{\vPp\left(\sigma_{1}\right)}{\rightarrow}\cdots\rightarrow\sF$
for $\autom_{\devi\phi}$.

We know from the construction of $\autom_{\devi\phi}$ above that
$p_{\psi_{s_{i}}}\in\vPp\left(\sigma_{i}\right)$. Thus $\sigma_{\geq i}\forces p_{\psi_{s_{i}}}$
and so there exists a fullpath $\pi$ such that $\pi_{\leq i}=\sigma_{\leq i}$
and $\pi_{\geq i}\forces\bigwedge s_{i}\wedge NNG\neg\Viol$. Hence
$\pi_{\geq i+1}$ is failure-free and so $\pi$ is an $i$-deviation
from $\sigma$. Since $\pi_{\geq i}\forces\bigwedge s_{i}$ and $s_{0}\overset{\vPp\left(\sigma_{0}\right)}{\rightarrow}s_{1}\overset{\vPp\left(\sigma_{1}\right)}{\rightarrow}\cdots s_{i-1}$
is a path of $\autom_{\phi}$ we see that $\autom_{\phi}$ accepts
$\left(\pi,i\right)$. From \prettyref{lem:accepts,pi,i} we know
$\pi\forces\phi$.\end{proof}

\begin{lemma}
The automaton $\autom_{\devi\phi}$ is counter-free.

\begin{proof}
Recall that a counter-free automaton is an automaton such that for
all states $s\in S$ and words $u$ in $\alphabet^{*}$, if $u^{m}\in L_{s,s}$
then $u\in L_{s,s}$.

If $s=\sF$ then every word $u$ is in $L_{s,s}$. If $s\ne\sF$ then
every path from $s$ to $s$ in $\autom_{\devi\phi}$ is also a path
from $s$ to $s$ in $\autom_{\phi}$, and $\autom_{\phi}$ is counter-free.\end{proof}

\end{lemma}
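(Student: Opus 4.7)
The plan is to verify the counter-freeness condition directly from the construction of $\autom_{\devi\phi}$, using the fact that the newly added state $\sF$ is a sink. Fix a state $s \in S_{\devi}$, a positive integer $m$, and a word $u \in \alphabet^{*}$ with $u^{m} \in L_{s,s}^{\autom_{\devi\phi}}$; I want to show $u \in L_{s,s}^{\autom_{\devi\phi}}$. I would split into two cases according to whether $s = \sF$ or $s \in S$.

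The case $s = \sF$ is immediate: the construction adds a transition $\langle \sF, e_{\devi}, \sF \rangle$ for every $e_{\devi} \in 2^{\ssfd}$, so $L_{\sF,\sF}^{\autom_{\devi\phi}} = \alphabet^{*}$, and in particular $u \in L_{\sF,\sF}^{\autom_{\devi\phi}}$ regardless of $u$.

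For the case $s \in S$, the key observation is that $\sF$ has no outgoing transitions leading back into $S$; once a run enters $\sF$, it cannot return. Consequently, any path from $s$ to $s$ in $\autom_{\devi\phi}$ must avoid $\sF$ entirely, and therefore uses only transitions drawn from item~\ref{enu:every-run} of the construction, which are precisely the transitions of $\autom_{\phi}$. Since the alphabets coincide (both are $2^{\Var}$), such a path is also a path from $s$ to $s$ in $\autom_{\phi}$, so $L_{s,s}^{\autom_{\devi\phi}} \subseteq L_{s,s}^{\autom_{\phi}}$. Applying counter-freeness of $\autom_{\phi}$ (inherited from \prettyref{lem:powerset-is-counter-free} under the assumption that $\phi$ is counter-free) to $u^{m} \in L_{s,s}^{\autom_{\phi}}$ yields $u \in L_{s,s}^{\autom_{\phi}}$, and then the reverse inclusion --- any path of $\autom_{\phi}$ is still a valid path of $\autom_{\devi\phi}$ by construction --- gives $u \in L_{s,s}^{\autom_{\devi\phi}}$.

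There is no real obstacle here; the argument is essentially a structural observation about sinks. The only point requiring care is confirming that the extra transitions introduced in items~\ref{enu:accept-on-deviate} and the subsequent self-loops at $\sF$ never contribute to a cycle through a state in $S$, which follows because all such transitions either land in $\sF$ or originate at $\sF$. Once this is noted, the two cases above complete the proof with no further calculation.
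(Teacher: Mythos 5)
Your proof is correct and follows essentially the same route as the paper's: the case $s=\sF$ is immediate from the universal self-loops, and for $s\in S$ the fact that $\sF$ is a sink forces every cycle through $s$ to use only transitions inherited from $\autom_{\phi}$, whose counter-freeness then transfers. You merely spell out the reverse inclusion ($u\in L_{s,s}$ of $\autom_{\phi}$ implies $u\in L_{s,s}$ of $\autom_{\devi\phi}$) that the paper leaves implicit.
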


\subsection{RoCTL{*} to $\ATL$ and CTL{*}}

\label{sub:RoCTL*-to-ALTL-and-CTL*}

\global\long\def\RA{\varrho}

Here we define a translation $\RA$ from RoCTL{*} into $\ATL$. It is well known that we can express a CTL{*} formula as an LTL formula
over a path, where that path includes state formula as atoms; this is commonly used in model checking, see for example \citep{ModelChecking,318620,clarke1986automatic}.
This translation likewise replaces state \formulae{} with atoms.
It uses the standard translation of the $O$ operator found in \citep{FrDaRe07book},
and the $f_{\prone}$ translation from \prettyref{def:fprone}. The
translation $\RA$ is defined recursively as follows:
\begin{eqnarray*}
\RA(\phi\wedge\psi) & = & \RA\left(\phi\right)\wedge\RA\left(\psi\right)\\
\RA(\neg\phi) & = & \neg\RA(\phi)\\
\RA(A\phi) & = & p_{A\RA\left(\phi\right)}\\
\RA(O\phi) & = & p_{A\left(NG\neg\Viol\rightarrow\RA\left(\phi\right)\right)}\\
\RA(\robust\phi) & = & \neg f_{\prone}\left(\neg\RA\left(\phi\right)\right)\\
\RA(N\phi) & = & N\RA(\phi)\\
\RA(\phi U\psi) & = & \RA(\phi)U\RA(\psi)
\end{eqnarray*}

\begin{defi}
\label{def:fprone}For any $\ATL$ formula $\phi$, we define $f_{\prone}\left(\phi\right)$
to be $\phi\vee\autom_{\devi\phi}$.\end{defi}
\begin{theorem}
\label{thm:ALTL-truth}The translation $\RA$ of RoCTL{*} into $\ATL$
is truth-preserving if the atoms of the form $p_{A\psi}$ are assumed
to hold precisely at those worlds where $A\psi$ holds.\end{theorem}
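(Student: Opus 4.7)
The plan is to prove the theorem by structural induction on the RoCTL{*} formula $\phi$, showing that $M,\sigma\forces\phi$ iff $M,\sigma\forces\RA(\phi)$ for every fullpath $\sigma$ through $M$, under the standing assumption that every atom $p_{A\psi}$ holds at precisely those worlds where $A\psi$ is true (and, by the discussion preceding \prettyref{lem:devi-correct}, that atoms $p_{E\psi}$ behave dually, since $E\psi\equiv\neg A\neg\psi$).

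The base case (atomic $p\in\Var$) is immediate because $\RA(p)=p$. The Boolean cases ($\neg$, $\wedge$) and the path operators ($N$, $U$) are equally immediate: the clauses for $\RA$ simply push the translation through the connective, and the semantic clauses for $N$ and $U$ in $\ATL$ are defined identically to those in RoCTL{*}, so the induction hypothesis transports truth across the translation without further work.

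The cases for the state-formula operators $A$ and $O$ rely on the atom assumption. For $A\phi$, we use that $A\phi$ is a state formula, hence $M,\sigma\forces A\phi$ iff $M,\sigma_0\forces A\phi$. Applying the induction hypothesis, $M,\sigma'\forces\phi$ iff $M,\sigma'\forces\RA(\phi)$ on every fullpath $\sigma'$ starting at $\sigma_0$, so $M,\sigma_0\forces A\phi$ iff $M,\sigma_0\forces A\RA(\phi)$, and by the atom hypothesis this holds iff $p_{A\RA(\phi)}$ is true at $\sigma_0$, i.e.\ iff $M,\sigma\forces\RA(A\phi)$. The case of $O\phi$ is analogous, using the fact (already exploited in \prettyref{lem:O-to-QCTL}) that a fullpath is failure-free precisely when it satisfies $NG\neg\Viol$, so $O\phi\equiv A(NG\neg\Viol\rightarrow\phi)$.

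The only substantive case is $\robust\phi$, which is where the automaton machinery does the work. First note that by definition $\robust\phi$ holds at $\sigma$ iff $\phi$ holds at $\sigma$ and at every deviation from $\sigma$, equivalently iff no deviation (or the path itself) satisfies $\neg\phi$. The induction hypothesis gives $M,\sigma'\forces\neg\phi$ iff $M,\sigma'\forces\neg\RA(\phi)$ for every fullpath, so it suffices to show $f_\prone(\neg\RA(\phi))$ captures the predicate ``$\neg\RA(\phi)$ holds on $\sigma$ or on some deviation from $\sigma$.'' The disjunct $\neg\RA(\phi)$ handles the current path; \prettyref{lem:devi-correct} applied with $\psi=\neg\RA(\phi)$ gives that $M,\sigma\forces\autom_{\devi(\neg\RA(\phi))}$ iff some deviation $\pi$ from $\sigma$ satisfies $\neg\RA(\phi)$. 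Hence $f_\prone(\neg\RA(\phi))$ is true at $\sigma$ iff $\neg\RA(\phi)$ holds somewhere in $\{\sigma\}\cup\delta(\sigma)$, and negating gives exactly the semantics of $\robust\phi$. The main obstacle, to which the bulk of a careful write-up would be devoted, is checking that \prettyref{lem:devi-correct} really applies to the translated formula: $\RA(\phi)$ must be a counter-free $\ATL$ formula so that $\autom_{\devi(\neg\RA(\phi))}$ is well-defined, which follows from a side induction showing that every automaton introduced by $\RA$ is counter-free (invoking \prettyref{lem:powerset-is-counter-free} for $\autom_{\neg\RA(\phi)}$ and the subsequent remark that $\autom_{\devi\psi}$ inherits counter-freeness), and that the $p_{E\psi_s}$ atoms generated inside $\autom_{\devi(\neg\RA(\phi))}$ are interpreted correctly in $M$, which is exactly the standing assumption on the atoms.
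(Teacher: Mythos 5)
Your proof is correct and follows essentially the same route as the paper's: a structural induction whose only substantive cases are $O$ (handled by the standard equivalence $O\phi\equiv A(NG\neg\Viol\rightarrow\phi)$) and $\robust$ (handled by Lemma~\ref{lem:devi-correct} applied to $\neg\RA(\phi)$), with the $p_{A\psi}$ atom assumption discharging the state-formula cases. The paper states this as a two-line sketch; your write-up merely makes the induction and the counter-freeness side condition explicit, which the paper addresses separately via Lemma~\ref{lem:powerset-is-counter-free} and the lemma following Lemma~\ref{lem:devi-correct}.
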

\begin{proof}
It is easy to see from \prettyref{lem:devi-correct} that $\sigma\forces f_{\prone}\left(\phi\right)$
iff $\sigma\forces\prone\phi$. It is clear that $\sigma\forces O\phi$
iff $\sigma\forces A\left(NG\neg\Viol\rightarrow\phi\right)$ as $NG\neg\Viol$
satisfied precisely on the failure-free paths, this was proven more
formally in \citep{FrDaRe07book,MCD10}. From these facts it is easy
to see that $\RA$ is truth-preserving.\end{proof}
\begin{lemma}
\label{lem:fprone-1exp}The complexity of $f_{\prone}\left(\phi\right)$
is singly exponential in $\left|\phi\right|$.\end{lemma}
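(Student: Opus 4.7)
The plan is a direct calculation tracing through the construction of $\autom_{\devi\phi}$ given in Section~\ref{sub:devi}, appealing chiefly to Proposition~\ref{prop:The-size-of} on the linear size of the closure.

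First I would observe that by Proposition~\ref{prop:The-size-of}, the closure $\cl\phi$ has cardinality $O(|\phi|)$. By Definition~\ref{def:set-of-states-Sphi}, each state in $S_\phi$ is a subset of $\cl\phi$ satisfying S1--S4, so in particular $|S_\phi|\le 2^{|\cl\phi|}=2^{O(|\phi|)}$. The construction of $\autom_{\devi\phi}$ adds only a single new state $\sF$, so $|S_{\devi}|=|S_\phi|+1=2^{O(|\phi|)}$; the new transitions are determined pointwise and add no further states. Since by Definition~\ref{def:length-ATL} the size of an automaton (under both $|\cdot|$ and $|\cdot|^\star$) is measured by the number of states, this already bounds the contribution of the automaton itself by a single exponential in $|\phi|$.

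Next I would account for the fresh atoms $p_{\psi_s}$ (one per $s\in S_\phi$) that appear in the alphabet of $\autom_{\devi\phi}$, where $\psi_s=E(\bigwedge s\wedge NNG\neg\Viol)$. Because $s\subseteq\cl\phi$, the conjunction $\bigwedge s$ has total complexity at most that of $\cl\phi$, giving $|\psi_s|^\star=O(|\phi|^\star)$. This shows that, even if one ``unfolds'' the defining state-formula of each alphabet atom, the per-atom complexity remains polynomial in $|\phi|$; the overall contribution from expanding the alphabet atoms is therefore bounded by $|S_{\devi}|\cdot O(|\phi|^\star)=2^{O(|\phi|)}$.

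Finally, putting the two pieces together, $f_\prone(\phi)=\phi\vee\autom_{\devi\phi}$, so
\[
|f_\prone(\phi)|^\star\;\le\;|\phi|^\star+|\autom_{\devi\phi}|^\star+O(1)\;=\;2^{O(|\phi|)},
\]
which is singly exponential as required. There is no real obstacle: the single exponential blowup comes entirely from the powerset-style tableau construction applied to the linear-sized closure, and no further source of blowup is introduced when passing from $\autom_\phi$ to $\autom_{\devi\phi}$ (which only appends $\sF$, new alphabet atoms of polynomial complexity, and transitions into $\sF$).
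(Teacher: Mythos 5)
Your proposal is correct and follows essentially the same route as the paper's own proof: closure linear in $\left|\phi\right|$, states of $\autom_{\phi}$ are subsets of the closure hence singly exponential in number, $\autom_{\devi\phi}$ adds one state, and the (complexity of the) automaton as an $\ATL$ formula is measured by its number of states. Your extra bookkeeping for the alphabet atoms $p_{\psi_{s}}$ is a small additional care the paper leaves implicit, but it does not change the argument.
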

\begin{proof}
We see from \prettyref{def:set-of-states-Sphi} that the translation
of $\phi$ into $\autom_{\phi}$ results in an automaton that has
a number of states singly exponential in $\left|\phi\right|$. The
automaton $\autom_{\devi\phi}$ has exactly one more state than the
automata $\autom_{\phi}$, and so the number of states in $\autom_{\devi\phi}$
is also singly exponential in $\left|\phi\right|$.  From \prettyref{def:length-ATL},
the length of the $\ATL$ formula $\autom_{\devi\phi}$ is the number
of states in $\autom_{\devi\phi}$, and so $\left|\autom_{\devi\phi}\right|$
is singly exponential in $\left|\phi\right|$. As $f_{\prone}\left(\phi\right)=\autom_{\devi\phi}\vee\phi$
we see that $\left|f_{\prone}\left(\phi\right)\right|^{\star}$ is
singly exponential in $\left|\phi\right|$.\end{proof}
\begin{corollary}
The translation into $\ATL$ is at most $i$-exponential in length,
for \formulae{} with at most $i$ nested $\eA$ operators.

\global\long\def\RC{\mathbf{RC}}
\end{corollary}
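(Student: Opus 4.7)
The plan is to proceed by induction on $i$, the maximum depth of $\eA$-nesting appearing in the RoCTL{*} formula, and to show that for every such formula $\phi$ of length $n$, the complexity $|\RA(\phi)|^{\star}$ is bounded by a tower of exponentials of height $i$ in a polynomial of $n$. Write $T_{i}(n)$ for the maximum of $|\RA(\phi)|^{\star}$ over all RoCTL{*} formulas $\phi$ of size at most $n$ with $\eA$-depth at most $i$, and set $2_{0}^{x}=x$, $2_{k+1}^{x}=2^{2_{k}^{x}}$. The goal is to show $T_{i}(n)\le 2_{i}^{p(n)}$ for some polynomial $p$ independent of $i$.

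For the base case $i=0$, the formula contains no $\robust$ and the recursive definition of $\RA$ acts as a homomorphism on $\wedge$, $\neg$, $N$ and $U$, while the clauses for $A\phi$ and $O\phi$ simply introduce a single encoding atom whose $\star$-complexity unfolds to that of the subformula. Hence $T_{0}(n)$ is linear in $n$.

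For the inductive step, assume the bound for depth $i-1$. Given $\phi$ of depth $i$, every subformula of the form $\robust\psi$ that is not strictly below another $\eA$ has $\psi$ of $\eA$-depth at most $i-1$, so by the induction hypothesis $|\RA(\psi)|^{\star}\le T_{i-1}(|\psi|)$. Applying the clause $\RA(\robust\psi)=\neg f_{\prone}(\neg\RA(\psi))$ together with the preceding Lemma (that $f_{\prone}$ is singly exponential in the complexity of its argument) yields a contribution that is singly exponential in $T_{i-1}(|\psi|)$, and hence at most $2_{i}^{p(|\psi|)}$ for a suitable polynomial $p$. All remaining recursive clauses add only linearly, so summing over the (at most $n$) occurrences of such outermost $\robust\psi$ still leaves $T_{i}(n)$ bounded by $2_{i}^{q(n)}$ for some polynomial $q$, as needed.

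The only real point of care is ensuring that the exponentials from nested $\eA$ operators compose rather than multiply: one must charge each exponential strictly to one level of $\eA$-nesting, which is why the induction is set up on $\eA$-depth rather than on formula length. The application of the $f_{\prone}$ lemma is then immediate and supplies exactly one additional exponential per level, giving the desired $i$-fold tower.
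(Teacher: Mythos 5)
Your proof is correct and follows essentially the same route as the paper: the corollary is stated there without explicit proof as an immediate consequence of Lemma~\ref{lem:fprone-1exp}, the intended reasoning (made explicit in the later proof that $\RC$ is at most $(i+3)$-exponential) being precisely that each nested $\eA$ contributes one single-exponential blowup via $f_{\prone}$ while the remaining clauses are linear. Your explicit induction on $\eA$-depth with the tower bound $T_{i}(n)\le 2_{i}^{p(n)}$ just spells out carefully what the paper leaves implicit, including the correct observation that the exponentials compose because each is charged to one nesting level.
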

\begin{defi}
We define a translation $\RC$ from RoCTL{*} into CTL{*} such that
for each RoCTL{*} formula $\phi$ we let $\RC\left(\phi\right)$ be
the $\ATL$ formula $\RA\left(\phi\right)$ with each atom of the
form $p_{A\psi}$ replaced with $A\psi$, and each automata in $\RA\left(\phi\right)$
replaced with the translation into an equivalent LTL formula referenced
in \prettyref{cor:Translating-a-counter-free}. 
\end{defi}
The following theorem follows from \prettyref{thm:ALTL-truth}. 

\begin{lemma}
\label{lem:Truth->Sat}Where $\tau\left(\phi\right)$ is a truth-preserving
translation from RoCTL{*} to CTL{*}, $\Gamma\left(\phi\right)$ is
both truth and satisfiability preserving, where $\Gamma\left(\phi\right)\equiv\tau\left(\phi\right)\wedge AGEN\neg\Viol$.\end{lemma}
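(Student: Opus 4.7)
The plan is to prove both halves by observing that the extra conjunct $AGEN\neg\Viol$ holds automatically on every RoCTL-structure and, conversely, is just strong enough to let us carve a RoCTL-structure out of any CTL-model of $\Gamma(\phi)$.

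\textbf{Truth-preservation.} I would first show that $M,\sigma\forces AGEN\neg\Viol$ for every RoCTL-structure $M$ and every fullpath $\sigma$ through $M$. Fix any world $w$ reachable from $\sigma_0$. By definition of RoCTL-structure, $S(w)$ is non-empty; pick any $\pi\in S(w)$. Then $\Viol\notin g(\pi_1)$, so $M,\pi\forces N\neg\Viol$, whence $M,w\forces EN\neg\Viol$. Since this holds at every reachable world, $M,\sigma\forces AGEN\neg\Viol$. Combining with the assumption that $\tau$ preserves truth on RoCTL-structures gives $M,\sigma\forces\phi\iff M,\sigma\forces\Gamma(\phi)$.

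\textbf{Satisfiability-preservation, forward direction.} If $\phi$ is satisfied in some RoCTL-structure, then truth-preservation of $\Gamma$ (just proved) immediately yields that $\Gamma(\phi)$ is satisfied in the same structure, hence satisfiable in CTL{*}.

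\textbf{Satisfiability-preservation, backward direction.} Suppose $M,\sigma\forces\Gamma(\phi)$ for some CTL-structure $M$. Let $M'=(\allworlds',\access',\valuation')$ be the restriction of $M$ to the worlds reachable from $\sigma_0$. Because $\access$ is serial and reachability is closed under successors, $\access'$ is serial. The key step is to show $M'$ is a RoCTL-structure, i.e.\ $S(w)$ is non-empty for every $w\in\allworlds'$. Since $M,\sigma\forces AGEN\neg\Viol$, at every reachable $w$ there is a successor $u$ with $\Viol\notin g(u)$. Starting from any $w\in\allworlds'$, inductively choose $w_0=w$ and, given $w_i$ (which is reachable, hence still satisfies $EN\neg\Viol$), choose $w_{i+1}\in w_i\access'$ with $\Viol\notin g(w_{i+1})$; the resulting fullpath lies in $S(w)$. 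So $M'$ is a RoCTL-structure. Truth of the CTL{*} formula $\tau(\phi)$ at $\sigma$ depends only on worlds reachable from $\sigma_0$ (alternatively, $M$ and $M'$ are bisimilar at $\sigma_0$ via the inclusion and CTL{*} is bisimulation-invariant), so $M',\sigma\forces\tau(\phi)$. Truth-preservation of $\tau$ on the RoCTL-structure $M'$ then gives $M',\sigma\forces\phi$, so $\phi$ is satisfiable in RoCTL{*}.

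The only step with any subtlety is the construction of the failure-free fullpath from an arbitrary reachable world using $AGEN\neg\Viol$; the remaining steps are bookkeeping. The restriction-to-reachable-worlds argument is entirely standard, and once $M'$ is seen to be a RoCTL-structure the two truth-preservation properties glue together to give both directions of satisfiability.
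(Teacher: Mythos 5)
Your proposal is correct and follows essentially the same route as the paper's own proof: establish that $AGEN\neg\Viol$ is automatically true on every RoCTL-structure (giving truth-preservation and the forward satisfiability direction), then for the backward direction restrict to the worlds reachable from $\sigma_0$ and use $AGEN\neg\Viol$ to exhibit a failure-free fullpath from each world, showing the restriction is a RoCTL-structure. Your explicit inductive construction of the failure-free fullpath merely spells out a step the paper asserts more tersely ("we can pick a fullpath $\sigma$ starting at $w$ such that $\sigma\forces GN\neg\Viol$"), so there is no substantive difference.
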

\begin{proof}
Consider some RoCTL-structure $M$. Since $\SP\left(w\right)$ is
non-empty for any world $w$ of $M$, there exists some fullpath $\sigma\in\AP\left(w\right)$
such that $M,\sigma\forces N\neg\Viol$. Hence $M,w\forces EN\neg\Viol$.
Since this is true for any arbitrary $w$ we also see that $M,w\forces AGEN\neg\Viol$.
Thus for all fullpaths $\pi$ we have $M,\pi\forces\tau\left(\phi\right)\iff M,\pi\forces\Gamma\left(\phi\right)$,
and so $\Gamma$ is truth-preserving.

If $\phi$ is satisfiable we see that there exists a RoCTL-structure
$M$ and fullpath $\sigma$ through $M$ such that $M,\sigma\forces\phi$.
Hence $M,\sigma\forces\tau\left(\phi\right)$, and as before $M,\sigma\forces\Gamma\left(\phi\right)$.
Thus $\Gamma\left(\phi\right)$ is satisfiable.

Say $\Gamma\left(\phi\right)$ is satisfiable in CTL{*}. Then there
exists some CTL-structure $M$ and fullpath $\sigma$ through $M$
such that $M,\sigma\forces\Gamma\left(\phi\right)$. We can assume
without loss of generality that all worlds in $M$ are reachable from
$\sigma_{0}$, and so for every world $w$ in we have $M,w\forces EN\neg\Viol$.
Thus for every world $w$ we can pick a fullpath $\sigma$ starting
at $w$ such that $\sigma\forces GN\neg\Viol$, and so $\SP\left(w\right)$
is non-empty. By definition $M$ is a RoCTL-structure, and as $M,\sigma\forces\Gamma\left(\phi\right)$
we have $M,\sigma\forces\tau\left(\phi\right)$. Finally, $M,\sigma\forces\phi$,
and so $\phi$ is satisfiable in RoCTL{*}.\end{proof}

\begin{theorem}
\label{thm:RoCTL*->CTL*}The translation $\RC$ into CTL{*} is truth-preserving.
\end{theorem}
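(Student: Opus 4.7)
The plan is to derive the theorem from Theorem~\ref{thm:ALTL-truth} by controlling two further rewrites that turn $\RA(\phi)$ into $\RC(\phi)$: (i) replacing each automaton $\autom_{\devi\psi}$ appearing in $\RA(\phi)$ by an equivalent LTL formula via Corollary~\ref{cor:Translating-a-counter-free}, and (ii) unfolding each fresh abbreviation $p_{A\psi}$ to the state formula $A\psi$ it stands for. I need to show that neither rewrite changes the truth value on any RoCTL-structure $M$ and fullpath $\sigma$.

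For step (i), each automaton that $\RA$ emits has the form $\autom_{\devi\chi}$, and the preceding lemma in Section~\ref{sub:devi} establishes that $\autom_{\devi\chi}$ is counter-free. Corollary~\ref{cor:Translating-a-counter-free} then furnishes an LTL formula accepting exactly the same $\omega$-words over the relevant alphabet, so the replacement preserves the satisfaction relation pointwise on any fullpath. The propositional alphabet over which the automaton runs consists of the atoms appearing in its input (both genuine variables and the auxiliary labels $p_{A\psi}$), and neither the counter-freeness argument nor the LTL conversion depends on how those labels are valuated in the model.

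For step (ii), I proceed by structural induction on $\phi$. Given $M$ I form the conservative expansion $M^*$ that agrees with $M$ on all original atoms and, for each auxiliary proposition $p_{A\psi}$ occurring in $\RA(\phi)$, sets $p_{A\psi}\in g^*(w)$ exactly when $M,w\forces A\psi$. By Theorem~\ref{thm:ALTL-truth}, $M,\sigma\forces\phi$ iff $M^*,\sigma\forces\RA(\phi)$. The Boolean, next, and until cases pass through the induction trivially. For $A\psi$ and $O\psi$, the introduced label $p_{A\RA(\psi)}$ (respectively $p_{A(NG\neg\Viol\rightarrow\RA(\psi))}$) has, at any world $w$, the same truth value as the CTL{*} state formula $A\RC(\psi)$ (respectively $A(NG\neg\Viol\rightarrow\RC(\psi))$), by the induction hypothesis applied to $\psi$. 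For $\robust\psi$, $\RA(\robust\psi)=\neg f_{\prone}(\neg\RA(\psi))=\neg(\neg\RA(\psi)\vee\autom_{\devi\neg\RA(\psi)})$, and Lemma~\ref{lem:devi-correct} tells us the automaton holds on $\sigma$ iff some deviation $\pi$ of $\sigma$ satisfies $\neg\RA(\psi)$, which by the induction hypothesis coincides with $\neg\psi$ on $\pi$; combining with the outer negation yields the RoCTL{*} semantics of $\robust\psi$.

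The main obstacle is keeping the two substitutions in sync through the induction: the labels $p_{A\psi}$ that the automata read as ordinary alphabet symbols are meant to denote CTL{*} state formulas whose truth values, after full $\RC$-translation, themselves depend on nested $\robust$ and $O$ operators. Once the outer induction is organised so that the hypothesis for every proper subformula $\psi$ delivers the world-level equivalence $M,w\forces A\psi\iff M,w\forces A\RC(\psi)$, the substitution $p_{A\psi}\mapsto A\RC(\psi)$ becomes a routine replacement of equivalents under CTL{*} evaluation, and the theorem follows by composing steps (i) and (ii) with Theorem~\ref{thm:ALTL-truth}.
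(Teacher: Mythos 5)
Your proposal is correct and follows essentially the same route as the paper, which dispatches this theorem with the single remark that it ``follows from Theorem~\ref{thm:ALTL-truth}'': you compose that theorem with the two substitutions implicit in the definition of $\RC$ (counter-free automata replaced by equivalent LTL formulas, and the atoms $p_{A\psi}$ unfolded to the state formulas they abbreviate). Your elaboration of the recursive bookkeeping for the nested $p_{A\psi}$ labels is a faithful filling-in of details the paper leaves to the reader, not a different argument.
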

As the RoCTL-structures are precisely those structures where $\SP\left(w\right)$
is non-empty for each world $w$ (see \prettyref{lem:Truth->Sat} for more detail),
we have the following corollary. 
\begin{corollary}
The translation $\RC_{SAT}$ is satisfaction preserving (and truth
preserving) where $\RC_{SAT}\left(\phi\right)\equiv\RC\left(\phi\right)\wedge AGEN\neg\Viol$.\end{corollary}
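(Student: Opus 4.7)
The plan is to observe that this corollary is essentially a direct instantiation of Lemma~\ref{lem:Truth->Sat} combined with Theorem~\ref{thm:RoCTL*->CTL*}. Theorem~\ref{thm:RoCTL*->CTL*} already establishes that $\RC$ is a truth-preserving translation from RoCTL{*} into CTL{*}, and by definition $\RC_{SAT}(\phi) \equiv \RC(\phi) \wedge AGEN\neg\Viol$, which is precisely the form $\Gamma(\phi)$ described in Lemma~\ref{lem:Truth->Sat} when we set $\tau = \RC$.

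First I would invoke Theorem~\ref{thm:RoCTL*->CTL*} to get that $\RC$ is truth-preserving on RoCTL-structures, i.e.\ for every RoCTL-structure $M$ and fullpath $\sigma$ through $M$, $M,\sigma \forces \phi$ iff $M,\sigma \forces \RC(\phi)$. Then I would apply Lemma~\ref{lem:Truth->Sat} with $\tau := \RC$ and $\Gamma := \RC_{SAT}$; this immediately gives that $\RC_{SAT}$ is both truth-preserving and satisfiability-preserving. The key content of Lemma~\ref{lem:Truth->Sat}, which we rely on here, is that the extra conjunct $AGEN\neg\Viol$ exactly characterises (among reachable worlds) the RoCTL-structure condition that $\SP(w)$ is non-empty for every world $w$; so when a CTL{*} model satisfies $\RC_{SAT}(\phi)$ we can view the reachable fragment of the model as a RoCTL-structure and transfer satisfaction of $\phi$ back to RoCTL{*}, and vice versa.

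No new machinery is required beyond citing the two previous results. The only minor point worth making explicit in the write-up is that satisfiability is preserved in both directions: (i) if $\phi$ is RoCTL{*}-satisfiable then the witnessing RoCTL-structure satisfies $AGEN\neg\Viol$ automatically (since $\SP(w)$ is non-empty everywhere), so it also satisfies $\RC_{SAT}(\phi)$ in CTL{*}; and (ii) if $\RC_{SAT}(\phi)$ is CTL{*}-satisfiable then restricting to reachable worlds yields a structure in which every world admits a failure-free continuation, hence a RoCTL-structure, on which $\phi$ holds by truth preservation. There is no real obstacle — the work was done in Lemma~\ref{lem:Truth->Sat} — so the proof reduces to a one-line citation.
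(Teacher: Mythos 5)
Your proposal matches the paper's own argument exactly: the corollary is stated immediately after the remark that RoCTL-structures are precisely those where $\SP(w)$ is non-empty everywhere, and is obtained by instantiating Lemma~\ref{lem:Truth->Sat} with $\tau := \RC$, whose truth-preservation is supplied by Theorem~\ref{thm:RoCTL*->CTL*}. Nothing further is needed.
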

\begin{theorem}
The translation $\RC$ is at most $\left(i+3\right)$-exponential
in the length, for \formulae{} with at most $i$ nested $\eA$ operators.\end{theorem}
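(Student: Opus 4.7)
The plan is a two-phase complexity argument, relying on the two bounds already proved: Lemma~\ref{lem:fprone-1exp} (each application of $f_\prone$ costs at most one exponential in the $\ATL$ complexity $|\cdot|^\star$) and Corollary~\ref{cor:counter-free-FSA-to-LTL-3EXP} (any counter-free FSA with $n$ states is equivalent to an LTL formula of size triply exponential in $n$, up to a factor for the alphabet).

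First I would establish the auxiliary bound that $|\RA(\phi)|^\star$ is at most $i$-fold iteratively exponential in $|\phi|$ whenever $\phi$ has $\robust$-nesting depth at most $i$. This is by direct induction on the structure of $\phi$. The clauses for atoms, Boolean connectives, $N$, $U$, $A$ and $O$ incur only linear blowup, while the clause $\RA(\robust\psi) = \neg f_\prone(\neg \RA(\psi))$ inherits the $(i-1)$-fold iterated exponential bound on $|\RA(\psi)|^\star$ from the inductive hypothesis and then pays the one extra exponential guaranteed by Lemma~\ref{lem:fprone-1exp}.

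Second, I would pass from $\RA(\phi)$ to $\RC(\phi)$ by replacing every atom of the form $p_{A\psi}$ with the state formula $A\psi$ and every automaton with its LTL equivalent, working bottom-up. The outermost automaton in $\RA(\phi)$ has at most $i$-fold exponentially many states, so by Corollary~\ref{cor:counter-free-FSA-to-LTL-3EXP} its LTL expansion has size at most $(i+3)$-fold iteratively exponential in $|\phi|$. Each nested inner automaton lives at strictly smaller $\robust$-nesting depth and hence contributes an LTL translation of size at most $(i+2)$-fold exponential in $|\phi|$, which when grafted into the atom positions of the outer LTL formula is absorbed into the $(i+3)$-fold exponential tier, since at that tier even a singly-exponential multiplicative factor is lost in the $O(\cdot)$ that sits in the exponent.

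The principal obstacle is the bookkeeping for nested automata: the states of an outer automaton are sets of $\ATL$ subformulae that may themselves contain further automata, and one must verify that the cascade of substitutions does not climb yet another tier of the exponential tower. The decisive observation is that an inner automaton is visible to its surrounding outer automaton only through an atomic proposition standing for a state formula; hence the outer state count depends on inner automata only linearly (each is a single leaf operator in the outer formula), and inner LTL formulae are merely grafted in at atom positions after the outer LTL has already been generated. This turns what could have been a multiplicative interaction into a purely additive one on exponential indices, producing precisely the claimed $(i+3)$-exponential bound.
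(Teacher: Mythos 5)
Your proposal is correct and follows essentially the same route as the paper's own proof: it first bounds the $\ATL$ translation by $i$-exponential complexity via \prettyref{lem:fprone-1exp}, then applies the $3$-exponential counter-free-automaton-to-LTL bound of \prettyref{cor:counter-free-FSA-to-LTL-3EXP} to the outermost automaton, and argues that the recursive substitution of inner automata only contributes multiplicative factors that are absorbed at the top exponential tier. The paper makes the identical absorption argument via its remark that $\left(2^{n}\right)^{i}$ remains singly exponential in $n$; your version merely spells out the induction and the bottom-up grafting in more detail.
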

\begin{proof}
From \prettyref{lem:fprone-1exp}, we see that there is at most a
singly exponential blowup per $\eA$ operator. Once we have translated
the whole formula into an $\ATL$ formula $\psi$, we know from \prettyref{cor:counter-free-FSA-to-LTL-3EXP}
that we can translate the automata into LTL \formulae{} with a 3-exponential
blowup. 

The automata are translated into LTL recursively, but the blowup remains
3-exponential. Say $\phi$ is the formula being translated. We see
that the number states in each automaton is no more than the complexity
$\left|\RA\left(\phi\right)\right|^{\star}$ of $\RA\left(\phi\right)$.
Thus with each recursion we multiply the length of the translated
formula by a number 3-exponential in $\left|\RA\left(\phi\right)\right|^{\star}$
which together still results in a 3-exponential blowup (note, for
example the formula $\left(2^{^{^{n}}}\right)^{i}$ is singly exponential
in $n$, not $i$-exponential in $n$).
\end{proof}

\section{Optimality of reduction into CTL{*}}

\label{sec:optimal}\label{sec:Succinctness}\providecommand{\ifnoroot}[1]{#1}

\ifnoroot{

\global\long\def\States{Q}

\global\long\def\state{q}
}

\global\long\def\allnodes{\allworlds}

In the previous section we showed that a satisfaction preserving translation
from RoCTL{*} to CTL{*} exists. In this section we will show that
any satisfaction preserving translation is non-elementary in the length
of the \formulae{}. 

 We will do this by taking a class of labelled trees which we will
call $\tuple{h,l}$-utrees, where $h$ represents the height $h$
and $l$ is the number of bits per label. We will show that the number
$\#(h,l)$, of pairwise non-isomorphic $\tuple{h,l}$-utrees, is non-elementary
in $h$. We will then present ``suffix'' and ``prefix'' encodings
of utrees into RoCTL-structures, and for each pair of utrees will
define $u(T,T')$ to be the structure that results when the prefix
encoding of $T$ is joined/followed by the suffix encoding of $T'$.
For each positive $h$ and $l$ we define a RoCTL{*} formula $f\left(h,l\right)$
such that for any pair of utrees $T$ and $T'$ of height $h$ it
is the case that $u(T,T')$ satisfies $f\left(h,l\right)$ iff $T,T'$
are isomorphic. For an automaton that accepts the tree-unwinding of
$u(T,T')$ iff $T$ and $T'$ are isomorphic, once the automaton has
read the prefix encoding, the state of the automaton must give us
enough information to determine which of $\#(h,l)$ isomorphic equivalence
classes $T$ fell into. As $\#(h,l)$ is non-elementary in $h$, the
number of states in the automata must also be non-elementary in $h$.
Since there are elementary translations of CTL{*} into automata, we
will conclude that there is no elementary translation of RoCTL{*}
into CTL{*}.

\begin{defi}
We define \uline{isomorphism} on finite labelled trees recursively.
We say that $T=\treestruct$ and $T'=\left(\allnodes',\ra',g'\right)$
are isomorphic if $g\left(\func{root}\left(T\right)\right)=g'\left(\func{root}\left(T'\right)\right)$
and there exist orderings $\mathcal{C}=\tuple{C_{1},\ldots,C_{\left|\mathcal{C}\right|}}$
and $\mathcal{C}'=\tuple{C'_{1},\ldots,C'_{\mathcal{\mathcal{\left|C\right|}}}}$
of the direct subtrees of $T$ and $T'$ respectively such that $C_{i}$
and $C'_{i}$ are isomorphic for all $i\in\left[1,\left|\mathcal{C}\right|\right]$.
\end{defi}
We define \emph{utrees} below such that all $\tuple{h,l}$-utrees
have the same number of direct subtrees, which are pairwise non-isomorphic.
For any pair $T,T'$ of $\tuple{h,l}$-utrees, this ensures that if
there is a direct subtree of $T$ that is not isomorphic to any subtree
of $T'$, there must also be a direct subtree of $T'$ that is not
isomorphic to any subtree of $T$. This makes it easier to test whether
a pair of utrees are isomorphic.
\begin{defi}
We define the concept of a \uline{utree} recursively. We fix an
infinite enumerated set $\Var_{\omega}=\left\{ b_{1},b_{2},\ldots\right\} $.
A tree $T=\left\langle \allworlds,\ra,\valuation\right\rangle $ consisting
of a single node $\node$ is a $\tuple{0,l}$-utree iff $g(\node)\subseteq\Var_{l}$
where $\Var_{l}=\left\{ b_{1},b_{2},\ldots b_{l}\right\} $. We let
$\#\left(h,l\right)$ be the number of pairwise non-isomorphic $\tuple{h,l}$-utrees;
then a tree $T$ is a $\tuple{h+1,l}$-utree iff $g(\func{root}\left(T\right))=\emptyset$
and $T$ has $\left\lfloor \#\left(h,l\right)/2\right\rfloor $ direct
subtrees, which are pairwise non-isomorphic $\tuple{h,l}$-utrees.\end{defi}
\begin{exam}
\label{exa:utree}Here is an example $\tuple{1,2}$- \index{utree}utree.
We use ``11'' as shorthand for $b_{1},b_{2}$ and ``01'' as shorthand
for $b_{2}$. 
\end{exam}
\begin{center}
\input{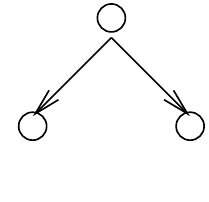tex_t}
\par\end{center}
\begin{lemma}
\label{lem:hash-is-h1-exponential}The function $\#\left(h,l\right)$
is at least $\left(h+1\right)$-exponential in $l$.\end{lemma}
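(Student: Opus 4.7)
The plan is to set up a recurrence for $\#(h,l)$ and then bound it from below by a tower of exponentials of height $h+1$. First I will count isomorphism classes of utrees at the next level up. By construction every $\tuple{h+1,l}$-utree has the same root label ($\emptyset$) and exactly $\lfloor\#(h,l)/2\rfloor$ direct subtrees which are \emph{pairwise non-isomorphic} $\tuple{h,l}$-utrees. Up to isomorphism the utree is therefore determined by the \emph{set} (not multiset) of isomorphism classes of its direct subtrees, giving the exact recurrence
\[
\#(h+1,l) \;=\; \binom{\#(h,l)}{\lfloor \#(h,l)/2 \rfloor}.
\]
The base case is immediate: a $\tuple{0,l}$-utree is a single node labelled by any subset of $\Var_l$, so $\#(0,l)=2^l$, with no two labels yielding isomorphic trees.

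Next I will use the very crude estimate $\binom{2k}{k} = \prod_{i=1}^k \frac{k+i}{i} \geq 2^k$, which gives $\binom{n}{\lfloor n/2 \rfloor} \geq 2^{\lfloor n/2\rfloor}$, and hence
\[
\#(h+1,l) \;\geq\; 2^{\lfloor \#(h,l)/2\rfloor}.
\]
Now define the tower function $T_0(m)=m$ and $T_{k+1}(m)=2^{T_k(m)}$. By induction on $h$ I will show $\#(h,l) \geq T_{h+1}(l-h)$ for all $l > h$. The base case is $\#(0,l)=2^l=T_1(l)$. For the inductive step, assuming $\#(h,l)\geq T_{h+1}(l-h)$, the recurrence gives $\#(h+1,l)\geq 2^{T_{h+1}(l-h)/2}$, so it suffices to verify $T_{h+1}(l-h)/2 \geq T_{h+1}(l-h-1)$, i.e.\ $T_{h+1}(m)\geq 2\,T_{h+1}(m-1)$ for $m=l-h\geq 1$. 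This holds because $T_{h+1}(m)/T_{h+1}(m-1)=2^{T_h(m)-T_h(m-1)}$ and $T_h(m)-T_h(m-1)\geq 1$ for all $h\geq 0$ and $m\geq 1$. Since $T_{h+1}(l-h)$ is an $(h+1)$-fold iterated exponential of a quantity linear in $l$, the conclusion follows.

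The only real subtlety is bookkeeping: I lose a constant from the $\lfloor \cdot /2 \rfloor$ and another from the shift $l-h$ inside the tower, and I need to confirm these absorb into the tower without collapsing it by one level. The argument above handles this with the $T_{h+1}(l-h)$ shift, which is why I state the inequality in that form rather than as $\#(h,l)\geq T_{h+1}(l)$. Everything else is a routine induction.
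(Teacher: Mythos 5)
Your proof is correct and follows essentially the same route as the paper's: base case $\#(0,l)=2^l$, the lower bound $\binom{n}{\lfloor n/2\rfloor}\geq 2^{\lfloor n/2\rfloor}$ on the number of ways to choose the set of direct-subtree isomorphism classes, and induction on $h$. The only difference is that you carry out the bookkeeping (the loss from $\lfloor\cdot/2\rfloor$, absorbed via the shift to $T_{h+1}(l-h)$) explicitly, where the paper simply asserts that a $j$-exponential quantity exponentiated remains $(j+1)$-exponential; your version is the more careful writeup of the same argument.
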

\begin{proof}
We see that the number of pairwise non-isomorphic $\tuple{0,l}$-utrees
is $2^{l}$. From the definition of utrees where $n=2\left\lfloor \#\left(h,l\right)/2\right\rfloor $,
\begin{align*}
\#\left(h+1,l\right) & \geq nC\left(\frac{n}{2}\right)\\
 & =\frac{n!}{\frac{n}{2}!\frac{n}{2}!}\\
 & =\frac{n.n-1\ldots\frac{n}{2}\ldots2.1}{\left(\frac{n}{2}.\ldots.2.1\right)\left(\frac{n}{2}.\ldots.2.1\right)}\\
 & =\frac{n\left(n-1\right)\ldots\left(\frac{n}{2}+1\right)}{\left(\frac{n}{2}.\ldots.2.1\right)}\\
 & \geq2^{\left(\frac{n}{2}\right)}\mathfullstop
\end{align*}
Thus when $\#\left(h,l\right)$ is $j$-exponential in $l$, it is
the case that $\#\left(h+1,l\right)$ is $\left(j+1\right)$-exponential
in $l$. As $\#\left(0,l\right)$ is singly exponential in $l$ it
follows from induction that $\#\left(h,l\right)$ is at least $\left(h+1\right)$-exponential
in $l$.
\end{proof}
 It is well known that we can describe the structure of a tree using
a string of `\{' and `\}' characters. For example, ``\{\}'' represents
a tree with a single node, and ``\{\{\}\{\}\}'' represents a tree
where the root has two root nodes as successors. Algorithms~\ref{Alg:T2prefix}~and~\ref{Alg:T2g}
for outputting the \uline{prefix encoding} $\func{prefix}\left(T\right)$
of $T$ use this principle. The function $\func{prefix}$ is from
utrees to labelled trees where each node has degree of at most one;
essentially converting the utree into a linear string of symbols.
 In addition to the atoms used to label the input tree, the prefix
encoding also uses the following atoms as labels, where $h$ is the
height of the tree and $k\in\left[0,h\right]$.
\begin{lyxlist}{00.00.0000}
\item [{$I_{\{}$}] This atom indicates that we begin the description of
a direct subtree of the tree we were describing. The current world
also encodes the label of this subtree.
\item [{$I_{\}}$}] This atom indicates that we are ending the description
of some tree.
\item [{$t_{C}$}] This indicates that the description of the subtree $C$
starts here. This is not used in function $f$ below. It is only included
to allow sections of the encoding to be easily and unambiguously referenced
in the proof of correctness.
\item [{$H_{k}$}] The current input character describes the start of a
tree of height $k$, we are at a node of height $k$. Thus $I_{\{}\wedge H_{3}$
means we are beginning the definition of a tree of height 3 and $I_{\}}\wedge H_{3}$
means we are ending the definition of a tree of height 3.
\end{lyxlist}
The final world in the prefix encoding is $w_{Z}$; the prefix encoding
is not a transition structure as $w_{Z}$ has no successor.
\begin{exam}
Below we present the prefix encoding of the \index{utree}utree $T$
from Example~\ref{exa:utree}.
\end{exam}
\begin{center}
\input{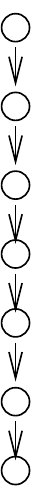tex_t}\hspace{3cm}
\par\end{center}

\providecommand{\RETURNN}{\STATE{}\textbf{return}}

\begin{algorithm}[h]

\caption{T2prefix($T$)}\label{Alg:T2prefix}

\begin{algorithmic}[1]

\STATE $\tuple{\valuation,i}$:=T2g($T$,$\emptyset$,0)

\STATE $\allworlds$ := $\func{domain}\left(g\right)\union\left\{ w_{Z}\right\} $

\STATE $\rightarrow$ := $\left\{ \tuple{w_{j-1},w_{j}}:j\in[1,i)\right\} \union\tuple{w_{i},w_{Z}}$

\RETURNN $\tuple{\allworlds,\rightarrow,\valuation}$

\end{algorithmic}

\end{algorithm}

\begin{algorithm}[h]

\caption{T2g($T$,$\valuation$,$i$)}\label{Alg:T2g}

\begin{algorithmic}[1]

\STATE $\tuple{\allnodes^{T},\ra^{T},g^{T}}$:=$T$

\STATE $\valuation[w_{i}]$:=$\{I_{\{},H_{\heightT\left(T\right)},t_{T}\}\union\valuation^{T}\left(\rootT\left(T\right)\right)$;
i := i + 1

\STATE \textbf{for each} direct subtree C of T: $\tuple{\valuation,i}$:=T2g($C$,$\valuation$,$i$)

\STATE $g[w_{i}]$:=$\{I_{\}},H_{\heightT\left(T\right)}\}$; i :=
i + 1

\RETURNN $\tuple{g,i}$

\end{algorithmic}

\end{algorithm}

Strictly speaking, to be an algorithm, the \textbf{for each} in Algorithm~\ref{Alg:T2g}
must iterate over the subtrees in some order, but the ordering chosen
is unimportant and will not be defined here.

We now define the \uline{suffix encoding} $\func{suffix}\left(T\right)$
of a tree $T=\tuple{\allnodes^{T},\ra^{T},g^{T}}$. In addition to
the atoms used in the labelling of the input tree $T$, the suffix
encoding uses: the violation atom $\Viol$ from RoCTL{*}; and $H_{k}^{F}$
for $k$ in $[0,h]$ which is used to indicate the height of the current
node in the tree, much like $H_{k}$ is used in the prefix encoding.
Let $N=\left\{ \node_{1},\dots,\node_{\left|N\right|}\right\} $ be
the set of nodes in the tree $T$. Let $N'$ be a numbered set such
that $\left|N\right|=\left|N'\right|$; that is $N'=\left\{ \node'_{1},\dots,\node'_{\left|N\right|}\right\} $.
Then for all trees $T$, if $\ctlstruct=\func{suffix}\left(T\right)$
we have
\begin{enumerate}
\item $\allnodes=N\union N'\union\left\{ \node_{Z}\right\} $
\item $\ra$ is the minimal relation satisfying: $\ra\supseteq\ra^{T}$,
and 
\begin{eqnarray*}
\left\{ \tuple{\node_{i,}\node'_{i}},\tuple{\node'_{i},\node_{Z}},\tuple{\node_{Z},\node_{Z}}\right\}  & \subseteq & \ra\mathcomma
\end{eqnarray*}
`for all $i\in[1,\left|N\right|]$.
\item the valuation $g$ is the valuation satisfying $g(\node_{i})=\left\{ \Viol\right\} $;
$g\left(\node_{Z}\right)=\emptyset$ and 
\begin{eqnarray*}
g\left(\node'_{i}\right) & = & g^{T}\left(\node_{i}\right)\union\left\{ H_{\mbox{\ensuremath{\heightT}}_{\ra^{T}}\left(\node_{i}\right)}^{F}\right\} \mathfullstop
\end{eqnarray*}
\end{enumerate}
\begin{exam}
Below we present the suffix encoding of the \index{utree}utree from
Example~\ref{exa:utree}.
\end{exam}
\begin{center}
\input{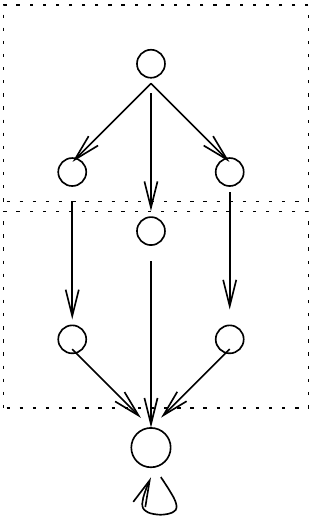tex_t}
\par\end{center}
\begin{defi}
We let $u\left(T,T'\right)$ be the model that results when we join
the prefix encoding of $T$ to the suffix encoding of $T'$ by adding
$\tuple{w_{Z},\func{root}\left(T'\right)}$ to $\ra$. Formally, where $\tuple{\allnodes^{P},\ra^{P},\valuation^{P}}$ is
the prefix encoding of $T$ and $\tuple{\allnodes^{S},\ra^{S},\valuation^{S}}$
is the suffix encoding of $T'$, it is the case that $u\left(T,T'\right)=\ctlstruct$
where $\allnodes=\allnodes^{P}\union\allnodes^{S}$, $\valuation(w)=g^{S}(w)$
if $w\in\allnodes^{S}$, $\valuation\left(w\right)=\valuation^{P}\left(w\right)$
if $w\in\allnodes^{P}$, $\ra=\ra^{S}\union\ra^{P}\union\left\{ \tuple{w_{Z},\func{root}\left(T'\right)}\right\} $.
\end{defi}

\begin{defi}
Let us define a function $f$ as follows from pairs of natural numbers
to RoCTL{*} \formulae{}:
\begin{align*}
f(0,l)= & \bigwedge_{i\in\left[1,l\right]}\left(b_{i}\rightarrow F\left(H_{0}^{F}\wedge b_{i}\right)\right)\wedge\\
 & \quad\bigwedge_{i\in\left[1,l\right]}\left(\neg b_{i}\rightarrow F\left(H_{0}^{F}\wedge\neg b_{i}\right)\right)\\
f(k,l)= & \left(\left(I_{\{}\wedge H_{k-1}\right)\rightarrow\prone f(k-1,l)\right)U\left(I_{\}}\wedge H_{k}\right)\\
 & \quad\wedge FH_{k}^{F}\wedge\left(I_{\{}\wedge H_{k}\right)
\end{align*}

\end{defi}
Recall that $F\phi$ is shorthand for $\left(\top U\phi\right)$,
and as such $M,\sigma\forces F\phi\iff\exists_{i}M,\sigma_{\geq i}\forces\phi$.

The intuition behind $f$ is that a path $\sigma$ through $u\left(T,T'\right)=\left\langle \allworlds,\ra,\valuation\right\rangle $
can correspond to both a subtree of $T$ and a subtree $T'$; if $t_{C}\in g\left(\sigma_{0}\right)$
then $\sigma$ starts at the beginning of the prefix encoding of some
subtree $C$ of $T$, and if $\node'_{C'}$ is in $\sigma$ then $\sigma$
corresponds to some subtree $C'$ of $T'$. The formula $f\left(0,l\right)$
is satisfied if the labels of $C$ and $C'$ match, so $f\left(0,l\right)$
is satisfied iff $C$ and $C'$ are isomorphic leaves. A deviation
from the current path can only have one additional failure, and hence
only one additional edge. So, where $\node'_{C'}$ is in $\sigma$,
then for each subtree $D'$ of $T'$ satisfying $\heightT\left(D'\right)=\heightT\left(C\right)-1$
there exists a deviation from $\sigma$ containing $\node'_{D'}$
iff $D'$ is a direct subtree of $C'$. As such, $\prone f(0,l)$
is satisfied exactly on those paths that correspond to subtrees $C$
and $D'$ such that $C$ has a direct subtree isomorphic to $D'$.
We use this intuition and recursion to prove the following lemma.

\begin{lemma}
\label{lem:uf-iff-isomorphic}For any integers $u$ and $l$, if $T$
and $T'$ are $\tuple{u,l}$-utrees then $u\left(T,T'\right)$ satisfies
$f\left(u,l\right)$ iff $T$ and $T'$ are isomorphic.
\end{lemma}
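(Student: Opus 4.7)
The plan is to prove, by induction on $k$, the following strengthened claim and then specialise it to $k = u$, $C = T$: for every subtree $C$ of $T$ of height $k$ and every fullpath $\sigma$ in $u(T,T')$ starting at $t_{C}$, $\sigma \forces f(k,l)$ iff $\sigma$ visits some primed node $\node'_{\func{root}(C')}$ with $C'$ a subtree of $T'$ of height $k$, and $C$ is isomorphic to $C'$. The lemma follows from the $k = u$ case because the only subtree of $T'$ of height $u$ is $T'$ itself, so the existential quantifier implicit in $u(T,T') \forces f(u,l)$ collapses to the unique fullpath from $w_{0}$ that visits $\node'_{\func{root}(T')}$.

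The structural analysis I would begin with concerns fullpaths in $u(T,T')$. Every fullpath from $t_{C}$ is forced linearly through the remainder of the prefix (each $w_{i}$ has a unique successor), reaches $w_{Z}$, enters the suffix at $\node_{\func{root}(T')}$, descends some branch $\node_{y_{0}},\ldots,\node_{y_{d}}$ in $T'$, jumps to $\node'_{y_{d}}$, and loops at $\node_{Z}$ forever. Thus $\sigma$ visits exactly one primed node and its height $u - d$ in $T'$ determines the corresponding subtree $C'$. Because deviations must be failure-free after the branching point, an $i$-deviation visiting a primed node distinct from $\sigma$'s must branch in the suffix at some $\node_{y_{j}}$ either into a sibling tree-child $\node_{z}$ of $y_{j}$ (then forced to continue via $\node'_{z} \to \node_{Z} \to \ldots$) or directly to $\node'_{y_{j}}$. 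Computing heights shows that when $\sigma$ corresponds to $C'$ of height $k$ (so $d = u - k$), the only deviation whose unique primed node has height $k-1$ is a branch at $\node_{y_{d}}$ into a tree-child $\node_{z}$, and this deviation corresponds to the pair $(D, D_{z})$ where $D$ is the direct subtree of $C$ whose prefix description starts at the deviation's starting world and $D_{z}$ is the direct subtree of $C'$ rooted at $z$.

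The base case $k = 0$ is direct: $f(0,l)$ says that for each bit $b_{i}$, its truth at $\sigma_{0} = t_{C}$ agrees with its truth at some future $H_{0}^{F}$-position, and since $\sigma$'s only $H_{0}^{F}$-position is $\node'_{y_{d}}$ when $y_{d}$ is a leaf of $T'$, this exactly forces the root labels of $C$ and of the leaf rooted at $y_{d}$ to coincide, i.e.\ isomorphism. For the inductive step I would unpack $f(k,l)$: the conjunct $I_{\{}\wedge H_{k}$ at $\sigma_{0}$ holds because $\heightT(C) = k$; the conjunct $FH_{k}^{F}$ forces the corresponding $C'$ to have height $k$; and the Until requires at every position $j$ in the description of $C$ carrying $I_{\{}\wedge H_{k-1}$ (the start of a direct subtree $D$ of $C$) that $\prone f(k-1,l)$ hold on $\sigma_{\geq j}$. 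Since $\sigma_{\geq j}$ still visits only $\node'_{y_{d}}$ with $H_{k}^{F}$, it cannot itself satisfy $f(k-1,l)$, so by the deviation analysis the witness must branch at $\node_{y_{d}}$ into some $\node_{z}$, and the inductive hypothesis yields $f(k-1,l)$ on this deviation iff $D$ is isomorphic to $D_{z}$. Hence $\sigma \forces f(k,l)$ iff $C$ and $C'$ both have height $k$ and every direct subtree of $C$ is isomorphic to some direct subtree of $C'$; because $C$ and $C'$ are utrees, each has $\lfloor \#(k-1,l)/2 \rfloor$ pairwise non-isomorphic direct subtrees, so this existential matching is automatically bijective and equivalent to isomorphism of $C$ and $C'$.

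The hard part will be the deviation case analysis: I expect to have to carefully rule out, for each candidate branching point $\node_{y_{j}}$ with $j < d$, any alternative deviation that could deposit the primed node at exactly height $k-1$ (a matter of checking that $u - j$ and $u - j - 1$ both differ from $k-1$ under the constraint $j < d = u - k$), and to rule out deviations inside the prefix by observing that continuation past $w_{Z}$ always encounters the unavoidable violation at $\node_{\func{root}(T')}$, blocking the failure-free tail required by the deviation definition.
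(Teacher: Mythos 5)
Your proposal is correct, and it rests on the same combinatorial facts as the paper's own proof: the rigid shape of fullpaths in $u(T,T')$ (linear through the prefix, one descent in the suffix, exactly one primed node), the observation that the failure-free tail required of a deviation forces it to jump to a primed node within one step of the branch point, the height bookkeeping via $H_{k}^{F}$ that pins the admissible branch point to $\node_{y_{d}}$, and the utree property (equally many pairwise non-isomorphic direct subtrees) that upgrades ``every direct subtree of $C$ is isomorphic to some direct subtree of $C'$'' to $C\cong C'$. The genuine difference is organisational: the paper proves the two directions separately --- $(\Leftarrow)$ by explicitly constructing the witnessing paths $\sigma^{C}$ bottom-up, and $(\Rightarrow)$ by extracting deviations from the assumed truth of $\prone f(k-1,l)$ and then running a second induction from the leaves upward --- whereas you prove a single biconditional invariant about \emph{all} fullpaths starting at $t_{C}$ by one induction on the height $k$, and specialise to $k=u$. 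Your packaging delivers both directions of the lemma at once and makes fully explicit the deviation case analysis that the paper treats somewhat informally (its ``from the structure of the suffix encoding it is clear that $B$ is a direct subtree of $A$'' is precisely your $j=d$ height computation, obtained by checking that $u-j$ and $u-j-1$ miss $k-1$ for all $j<d$); the price is that the stronger ``for every fullpath'' hypothesis obliges you also to verify that fullpaths whose primed node has the wrong height falsify $f(k,l)$, a case the paper's existential formulation never confronts. Both arguments are sound; yours is arguably the cleaner write-up of the same underlying proof.
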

\begin{proof}For each subtree $C$ of $T$, let $w_{C}$ be the world that is the
beginning of the suffix encoding of $C$, or more formally the world
where $t_{C}$ is true. For any path, $\sigma$ we define $\sigma_{\geq C}$
such that $\sigma_{\geq C}=\sigma_{\geq i}$ where $\sigma_{i}=w_{C}$.

$\left(\Longrightarrow\right)$ Say that $u\left(T,T'\right),\sigma^{T}\forces f\left(u,l\right)$
for some $\sigma^{T}$. We see that $\sigma_{0}^{T}=w_{0}$ as $f\left(u,l\right)\forces I_{\{}\wedge H_{u}$.
We define $\sigma^{C}$ recursively for each subtree $C$ of $T$.
Say we have defined the path $\sigma^{C}$ for some subtree $C$ such
that $u\left(T,T'\right),\sigma^{C}\forces f\left(k,l\right)$ where
$k$ is the height of $C$. Then for each direct subtree $D$ of $C$,
we see that $\sigma_{\geq D}^{C}\forces\prone f(k-1,l)$ and thus
there must exist a deviation from $\sigma_{\geq D}^{C}$ satisfying
$f(k-1,l)$, we call this deviation $\sigma^{D}$. 

We see that for each $C$ there is a unique $C'$ such that $\node'_{C'}$
is in the path $\sigma^{C}$. In the following paragraph we will show
that for each subtree $C$ and direct subtree $D$ of $C$, we can
produce $\sigma^{D}$ from $\sigma_{\geq D}^{C}$ by replacing $\node'_{C'}$
with $\node_{D'}\node'_{D'}$, and hence that $D'$ is a direct subtree
of $C'$.

Consider where $\sigma^{D}$ deviates from $\sigma_{\geq D}^{C}$.
Say $\node_{y}$ is the first world in $\sigma^{D}$ not in $\sigma^{C}$
and that $\node_{x}$ is the last world in both $\sigma^{C}$ and
$\sigma^{D}$. From the definition of deviations we see that $\sigma_{\geq\node_{y}}^{D}$
is failure-free and so the next world on $\sigma^{D}$ must be $\node'_{B}$.
Since $\sigma^{D}\forces FH_{k}^{F}$ where $k$ is the height of
$D$ it follows that $H_{k}^{F}\in\valuation\left(\node'_{B}\right)$;
from the structure of the suffix encoding it is clear that $B$ is
a direct subtree of $A$, and $\heightT\left(A\right)=k+1$ and thus
$H_{k+1}^{F}$ in $\valuation\left(\node'_{A}\right)$. As each parent
has a height greater than that of its direct subtrees, it follows
that $\node_{C'}$ is the only world in $\sigma^{C}$ such that $H_{k+1}^{F}\in\node'_{C'}$,
and hence it follows that $\node_{x}=\node_{C'}$. 

Consider $D$ of height 0. The path $\sigma^{D}$ is of the form
\begin{eqnarray*}
 &  & \left\langle w_{D},\ldots,w_{Z},\node_{T},\ldots,\node_{C'},\node_{D'},\node'_{D'},\node_{Z},\node_{Z},\ldots\right\rangle 
\end{eqnarray*}
It is easy to show that $D$ and $D'$ are isomorphic. For each $C$,
we choose $C'$ such that $\node'_{C'}$ is in the full path $\sigma^{C}$.
Say that for every $D$ of height $k$ it is the case that $D'$ and
$D$ are isomorphic. Consider $C$ of height $k+1$. We have shown
that for each direct subtree $D$ of $C$, it is the case that $D'$
is a direct subtree of $C'$. As $C$ must have the same height as
$C'$ (otherwise the requirement that $\sigma\forces FH_{k+1}^{F}$
would not be satisfied), $C'$ and $C$ have the same number of direct
subtrees, each of height $k$. We have show previously that for each
direct subtree $D$ of $C$, it is also the case that $D'$ is a direct
subtree of $C'$. By assumption, each pair $D,D'$ are isomorphic,
and so $C,C'$ are isomorphic. By induction $T$ and $T'$ are isomorphic. 

$\left(\Longleftarrow\right)$ Say that $T'$ and $T$ are isomorphic.
Clearly suffix encodings of $T'$ and $T$ will also be isomorphic,
and so $u\left(T,T'\right)$ satisfies $f\left(u,l\right)$ iff $u\left(T,T\right)$
does. Thus we can assume without loss of generality that $T=T'=\tuple{\allnodes^{T},\ra^{T},g^{T}}$.

Likewise let $\node_{C}$ be the node that is the root of the subtree
$C$. We define $\sigma^{C}$ recursively as follows: let $\sigma^{T}$
be the fullpath starting at $w_{0}$ that passes through $\node'_{0}$;
that is, $\sigma^{T}=\left\langle w_{0},\ldots,w_{Z},\node_{T},\node'_{T},\node_{Z},\node_{Z},\ldots\right\rangle $.
Say that $D$ is the direct subtree of $C$, then where 
\begin{align*}
\sigma^{C} & =\left\langle w_{C},\ldots,w_{D},\ldots,w_{Z},\node_{T},\ldots,\node_{C},\node'_{C},\node_{Z},\node_{Z},\ldots\right\rangle 
\end{align*}
we let
\begin{align*}
\sigma^{D} & =\left\langle w_{D},\ldots,w_{Z},\node_{T},\ldots,\node_{C},\node_{D},\node'_{D},\node_{Z},\node_{Z},\ldots\right\rangle \mathfullstop
\end{align*}
In other words, we produce $\sigma_{\geq D}^{C}$ from $\sigma^{C}$
by pruning everything prior to $w_{D}$, and $\sigma^{D}$ from $\sigma_{\geq D}^{C}$
by replacing $\node'_{C}$ with $\node_{D},\node'_{D}$. This remains
a full path, since $D$ is a direct subtree of $C$, and so $\node_{D}$
is a child of $\node_{C}$. Note also that $\sigma^{D}$ is a deviation
from $\sigma_{\geq D}^{C}$.

If $\heightT\left(C\right)=0$ it is easy to verify that $\sigma^{C}\forces f\left(0,l\right)$,
as $\valuation\left(w_{C}\right)\union\left\{ H_{0}^{F}\right\} =\valuation\left(\node_{C}\right)\union\left\{ H_{0},t_{C},I_{\{}\right\} $
so it is clear that $\left(\neg\right)b_{i}\rightarrow F\left(H_{0}^{F}\wedge\left(\neg\right)b_{i}\right)$.
For $C$ of height $k$, it is likewise easy to see that $\sigma^{C}\forces FH_{k}^{F}$.
Assume that $\sigma^{C}\forces f\left(k-1,l\right)$ for all $C$
of height $k-1$. Now consider $C$ of height $k$. It is easy to
show that
\begin{align*}
\sigma^{C}\forces & \left(\left(I_{\{}\wedge H_{k-1}\right)\rightarrow\bigvee_{D\mbox{ is child of C}}t_{D}\right)U\left(I_{\}}\wedge H_{k}\right)\mathfullstop
\end{align*}
By assumption $\sigma^{D}\forces f(k-1,l)$, and $\sigma^{D}$ is
a deviation from $\sigma_{\geq D}^{C}$, so $\sigma_{\geq D}^{C}\forces\prone f(k-1,l)$.
Thus 
\begin{align*}
\sigma^{C}\forces & \left(\left(I_{\{}\wedge H_{k-1}\right)\rightarrow\prone f(k-1)\right)U\left(I_{\}}\wedge H_{k}\right)\mbox{ .}
\end{align*}
Thus  $\sigma^{C}\forces f\left(k,l\right)$.  By induction $u\left(T,T'\right),\sigma^{T}\forces f\left(u,l\right)$.\end{proof}
\begin{exam}
In \prettyref{lem:uf-iff-isomorphic} above, we proved that $u\left(T,T'\right),\sigma^{T}\forces f\left(u,l\right)$
for some $\sigma^{T}$ iff $T$ and $T'$ are isomorphic.  Using $T$ as the tree in Example~\ref{exa:utree}, let
\begin{align*}
\sigma^{0} & =\left\langle w_{0},\ldots,w_{Z},\node_{1},\node'_{1},\node_{Z},\ldots\right\rangle \\
\sigma^{1} & =\left\langle w_{1},\ldots,w_{Z},\node_{1},\node_{2},\node'_{2},\node_{Z},\ldots\right\rangle \\
\sigma^{2} & =\left\langle w_{3},\ldots,w_{Z},\node_{1},\node_{3},\node'_{3},\node_{Z},\ldots\right\rangle 
\end{align*}
be paths through $u\left(T,T\right)$. We see that $\sigma^{1}$ and
$\sigma^{2}$ satisfy $f\left(0,2\right)$. As $\sigma^{1}$ and $\sigma^{2}$
are deviations from $\sigma_{\geq1}^{0}$ and $\sigma_{\geq3}^{0}$
respectively, it is the case that $\sigma_{\geq1}^{0}$ and $\sigma_{\geq3}^{0}$
satisfy $\prone f\left(0,2\right)$. Thus wherever $I_{\{}\wedge H_{0}$
is true, it is also the case that $\prone f(0,2)$ is true; hence
$u\left(T,T\right),\sigma^{0}\forces f\left(1,2\right)$.\end{exam}
\begin{defi}
We say an automaton $\autom$ accepts a structure $M$ iff the tree
unwinding of $M$ is a member of $\lang\left(\autom\right)$.
\end{defi}

\global\long\def\run{\mathfrak{R}}

\begin{lemma}
\label{lem:States-in-HAA-2exp}For any arbitrary $h,l\in\natnum$,
let $\autom=(\alphabet,\States,\States_{0},\transition,F)$ be an
SAA\index{SAA} such that for any pair $T,T'$ of $\tuple{h,l}$-utrees
$\autom$ accepts $u\left(T,T'\right)$ iff $T$ and $T'$ are isomorphic;
then $2^{\left|\States\right|}\geq\#\left(h,l\right)$.\end{lemma}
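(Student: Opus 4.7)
The plan is a Myhill--Nerode-style distinguishability argument. I will assign to each $\tuple{h,l}$-utree $T$ a ``prefix signature'' $P_T \subseteq \States$ and show that $T \not\cong T'$ forces $P_T \ne P_{T'}$; since $P_T$ ranges over $2^\States$ this at once yields $\#(h,l) \leq 2^{|\States|}$. To define $P_T$, note that $T \cong T$ implies by hypothesis that $\autom$ accepts $u(T,T)$, so fix an arbitrary accepting run $\run_T$ of $\autom$ on $u(T,T)$ and let $P_T$ be the set of automaton states $q$ such that some node of $\run_T$ is labelled $(\rootT(T),q)$ -- the set of states at which the run enters the suffix portion of the input structure.

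The structural observation that enables injectivity is that the prefix encoding is a \emph{finite, linear} sequence of worlds $w_0,w_1,\dots,w_Z$, so the portion of any run tree whose input positions lie in the prefix (together with the bridge transitions at $w_Z$) is a finite subtree, contributing no infinite paths and trivially satisfying the parity condition. Consequently every accepting run of $\autom$ on any structure $u(T,T')$ factors through its state-set at $\rootT(T')$: a finite ``prefix half'' covering $\func{prefix}(T)$ plus the bridge transition at $w_Z$, followed by a ``suffix half'' providing accepting continuations on $\func{suffix}(T')$ starting from each state in that state-set. Because $g(w_Z) = \emptyset$ irrespective of which utrees were chosen, bridge transitions used in one such run remain legal when transplanted into another.

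I establish injectivity by contradiction. Suppose $T \not\cong T'$ and yet $P_T = P_{T'}$. From $\run_T$ I extract the prefix half, a valid partial run on $\func{prefix}(T)$ whose bridge transition leaves state-set $P_T$ at the unique successor of $w_Z$ -- which in the target structure $u(T,T')$ is $\rootT(T')$. From $\run_{T'}$ I extract, for each $q \in P_{T'} = P_T$, the accepting subtree of $\run_{T'}$ rooted at $(\rootT(T'),q)$. Grafting the prefix half of $\run_T$ onto the family of suffix halves drawn from $\run_{T'}$ yields a run of $\autom$ on $u(T,T')$: every local transition remains valid, every $\square$/$\lozenge$ requirement is met in its contributing source run, and every infinite path of the combined run lives entirely in the suffix portion and so inherits parity from $\run_{T'}$. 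Hence $\autom$ accepts $u(T,T')$, contradicting the hypothesis that acceptance occurs only when $T \cong T'$.

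The main technical obstacle will be formalising this graft, specifically verifying that the three conditions of an SAA run -- local validity of the chosen transitions, satisfaction of $\square$/$\lozenge$ demands at each node, and the global parity acceptance condition -- are all preserved under the cut-and-paste. Each amounts to a careful but essentially routine structural check that draws on (i) the locality of the SAA transition function, (ii) the fact that $w_Z$ carries the same (empty) label in every $u(\cdot,\cdot)$ so that the bridge choices made in $\run_T$ remain legal against $u(T,T')$, and (iii) the finiteness of the prefix portion, which confines every parity-relevant infinite path to the suffix half borrowed from $\run_{T'}$.
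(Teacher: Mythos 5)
Your proposal is correct and follows essentially the same route as the paper: associate to each utree $T$ the set of automaton states occurring at $\rootT(T)$ in an accepting run on $u(T,T)$, then show by a cut-and-paste of the prefix half of one run with the suffix halves of another that two non-isomorphic utrees cannot share this state-set, and conclude by pigeonhole. The extra checks you flag (finiteness of the prefix confining parity to the suffix, the empty label at $w_Z$ keeping the bridge transition legal) are exactly the details the paper compresses into ``it is easy to show that $\run_{j}^{i}$ is an accepting run.''
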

\begin{proof}
Let $\left\{ \jSeqOneTo T{\#\left(h,l\right)}\right\} $ be a set
of pairwise non-isomorphic $\left(h,l\right)$-utrees. For each $i$,
let $\run_{i}=\left\langle \allnodes_{\run_{i}},\access_{\run_{i}},\valuation_{\run_{i}}\right\rangle $
be an accepting run of $\autom$ on $u\left(T_{i},T_{i}\right)$;
let $Q_{i}$ be the set of all states that the automata is in after
reading the prefix encoding of $T_{i}$; formally let $Q_{i}\subseteq\States$
be the set of states such that for all $q\in\States$ we have $q\in Q_{i}$
iff there exists $w_{R}\in\allnodes_{\run_{i}}$ such that $\left(\func{root}\left(T_{i}\right),q\right)\in g_{\run_{i}\left(w_{R}\right)}$.
Recall that $\func{root}\left(T_{i}\right)$ is the beginning of the
suffix encoding of $u\left(T_{i},T_{i}\right)$.

Say that $Q_{i}=Q_{j}$ for some $i\ne j$. Recall from \prettyref{def:shorthand-As}
that $\autom^{q}$ is shorthand for $(\alphabet,\States,\left\{ q\right\} ,\transition,F)$.
In the next paragraph we will define a run $\run_{j}^{i}$ with the
prefix from the run $\run_{j}$ and the suffix from $\run_{i}$.

Since all infinite paths of the run $\run_{i}$ are accepting, we
see that for each $q\in Q_{i}$, the relevant subtree $\run_{i,q}^{\mbox{suffix}}$
of $\run_{i}$ is an accepting run for $\autom^{q}$ on the suffix
encoding of $T_{i}$. Let $\run_{j}^{i}$ be the tree that results
when we replace the subtree beginning at $w_{\run}$ with $\run_{i,q}^{\mbox{suffix}}$,
for each $q\in Q_{i}=Q_{j}$ and $w_{\run\in\allnodes_{\run_{i}}}$
satisfying $\valuation_{\run_{j}}$. It is easy to show that $\run_{j}^{i}$
is an accepting run of $\autom$ on $u\left(T_{j},T_{i}\right)$.
However, we have assumed that $T_{i}$ is not isomorphic to $T_{j}$,
and so $\autom$ does not accept $u\left(T_{j},T_{i}\right)$. By
contradiction $Q_{i}\ne Q_{j}$ for any $i,j\in\left[1,\#\left(h,l\right)\right]$
such that $i\ne j$. As each $Q_{i}\in2^{\States}$, we can conclude
from the pigeon hole principle that $2^{\left|\States\right|}\geq\#\left(h,l\right)$.
\end{proof}

\begin{lemma}
For all fixed $h\geq1$, there is no function $e$ which is less than
$\left(h-1\right)$-exponential, such that the length $\left|\phi_{l}\right|$
of the shortest CTL{*} formula $\phi_{l}\equiv f\left(h,l\right)$
satisfies $\left|\phi_{l}\right|<e\left(l\right)$ for all $l$. \end{lemma}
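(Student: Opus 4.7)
The plan is to chain together the machinery already assembled: Lemma~\ref{lem:uf-iff-isomorphic} turns the truth of $f(h,l)$ into an isomorphism test on pairs of $\tuple{h,l}$-utrees; Lemma~\ref{lem:States-in-HAA-2exp} forces any SAA that performs this isomorphism test to have at least $\log_2 \#(h,l)$ states; Theorem~\ref{thm:CTL2AA-single-exponential} gives a singly exponential translation from CTL{*} to SAA; and Lemma~\ref{lem:hash-is-h1-exponential} shows $\#(h,l)$ is at least $(h+1)$-exponential in $l$. Putting these together, the length of any CTL{*} formula equivalent to $f(h,l)$ must be at least $(h-1)$-exponential in $l$.

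In detail, assume towards contradiction that for some fixed $h \geq 1$ there is a function $e(l)$ which is less than $(h-1)$-exponential and a family of CTL{*} \formulae{} $\phi_l$ with $\phi_l \equiv f(h,l)$ and $|\phi_l| < e(l)$ for all $l$. First I would apply Theorem~\ref{thm:CTL2AA-single-exponential} to obtain an SAA $\autom_l$ equivalent to $\phi_l$ with a number of states that is at most singly exponential in $|\phi_l|$, hence at most singly exponential in $e(l)$, and therefore still less than $(h-1)$-exponential composed with one more exponential, which is less than $h$-exponential in $l$.

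Next I would verify that $\autom_l$ meets the hypothesis of Lemma~\ref{lem:States-in-HAA-2exp}. For any pair of $\tuple{h,l}$-utrees $T,T'$, the structure $u(T,T')$ satisfies $f(h,l)$ at some fullpath through its root iff $T$ and $T'$ are isomorphic, by Lemma~\ref{lem:uf-iff-isomorphic}. Since RoCTL{*} is bisimulation invariant (Lemma~\ref{lem:RoCTL*-is-bisimulation-invariant}) and both RoCTL{*} and CTL{*} are preserved under tree unwinding, satisfaction of $\phi_l$ (and hence acceptance by $\autom_l$) on the tree unwinding of $u(T,T')$ coincides with satisfaction of $f(h,l)$ on $u(T,T')$. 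Thus $\autom_l$ accepts $u(T,T')$ iff $T \cong T'$, and Lemma~\ref{lem:States-in-HAA-2exp} gives $2^{|\States_l|} \geq \#(h,l)$, i.e. $|\States_l| \geq \log_2 \#(h,l)$. Combined with Lemma~\ref{lem:hash-is-h1-exponential}, this means $|\States_l|$ is at least $h$-exponential in $l$, contradicting the singly-exponential-in-$e(l)$ bound on the number of states of $\autom_l$ obtained above. Hence no such $e$ exists.

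The main obstacle I expect is purely bookkeeping: making sure that ``$\phi_l \equiv f(h,l)$'' in RoCTL{*} really does transfer to ``$\autom_l$ accepts the tree unwinding of $u(T,T')$ iff $T \cong T'$'' without loss. One has to be slightly careful that the notion of equivalence used in Theorem~\ref{thm:CTL2AA-single-exponential} matches the acceptance notion used in Lemma~\ref{lem:States-in-HAA-2exp}, and that starting the run of $\autom_l$ at the root of the prefix-encoding portion of $u(T,T')$ (so that the prefix/suffix split of Lemma~\ref{lem:States-in-HAA-2exp} applies cleanly) is legitimate. Beyond that, the arithmetic of towers of exponentials is routine: a single exponential applied to a function less than $(h-1)$-exponential in $l$ is still less than $h$-exponential in $l$, whereas $\log_2 \#(h,l)$ is at least $h$-exponential in $l$ by Lemma~\ref{lem:hash-is-h1-exponential}, giving the required contradiction.
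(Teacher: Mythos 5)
Your proposal is correct and follows essentially the same route as the paper's own proof: translate the hypothetical short CTL{*} formula into an SAA via Theorem~\ref{thm:CTL2AA-single-exponential}, apply Lemma~\ref{lem:States-in-HAA-2exp} together with Lemma~\ref{lem:uf-iff-isomorphic} to lower-bound the number of states by $\log_2\#(h,l)$, and conclude via Lemma~\ref{lem:hash-is-h1-exponential}. The extra care you take about tree unwindings and matching the acceptance notion is a point the paper passes over silently, but it does not change the argument.
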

\begin{proof}
Say $e$ exists. Since $\phi_{l}\equiv f\left(h,l\right)$ then there
exists a fullpath $\sigma^{T}$ starting at $w_{0}$ through $u(T,T')$
such that $u(T,T'),\sigma^{T}\forces\phi_{l}$ iff $T$ and $T'$
are isomorphic. As $e$ is less than $(h-1)$-exponential, from \prettyref{thm:CTL2AA-single-exponential}
the size of the SAA\index{SAA} is less than $h$-exponential in $l$. 

From Lemma~\ref{lem:States-in-HAA-2exp}, we have $2^{n}\geq\#\left(h,l\right)$
where $n$ is the size of the automata, and from \prettyref{lem:hash-is-h1-exponential}
we know that $\#\left(h,l\right)$ is $\left(h+1\right)$-exponential
in $l$. Hence $2^{n}$ is at least $\left(h+1\right)$-exponential
in $l$, and so $n$ is at least $h$-exponential in $l$. By contradiction
no such $e$ exists.\end{proof}
\begin{lemma}
For all fixed $h\geq2$, there is no function $e$ which is less than
$\left(h-2\right)$-exponential such that for all RoCTL{*} \formulae{}
$\phi$ with at most $h$ nested $\prone$ (or $\eA$), the length
$\left|\psi\right|$ of the shortest CTL{*} formula $\psi$ equivalent
to $\phi$ is no more than $e\left(\left|\phi\right|\right)$. \end{lemma}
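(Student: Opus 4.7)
The strategy is to derive this lemma as a direct corollary of the preceding one by specialising to the family $f(h,l)$ and varying $l$ while keeping $h$ fixed. First I would record two properties of $f(h,l)$ that follow immediately from its recursive definition: it contains exactly $h$ nested $\prone$ operators (and therefore at most $h$), and its length satisfies $|f(h,l)| = O(h + l)$, since each recursive step adds only a constant number of symbols on top of $f(k-1, l)$. For fixed $h$ this gives $|f(h,l)| \leq c \cdot l$ for some constant $c$ and all sufficiently large $l$.

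Suppose for contradiction that the lemma fails: some function $e$ is less than $(h-2)$-exponential, and every RoCTL{*} formula $\phi$ with at most $h$ nested $\prone$ admits an equivalent CTL{*} formula of length at most $e(|\phi|)$. Specialising to $\phi = f(h,l)$ produces, for every $l$, a CTL{*} formula $\phi_l \equiv f(h,l)$ with $|\phi_l| \leq e(c \cdot l)$. Since post-composing with a linear function (and adding a constant) does not raise a tower of exponentials by a full level, the function $e'(l) := e(c \cdot l) + 1$ remains less than $(h-1)$-exponential. But then $|\phi_l| < e'(l)$ for every $l$, which directly contradicts the preceding lemma, which forbids any bound on the length of the shortest CTL{*} equivalent of $f(h,l)$ that is less than $(h-1)$-exponential in $l$. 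Hence no such $e$ can exist.

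I expect no genuine obstacle: the argument is purely a scaling step on top of the previous lemma. The only point requiring care is the routine verification that composing a less-than-$(h-2)$-exponential function with a linear argument stays less than $(h-1)$-exponential; this is a standard fact about the tower hierarchy, and it is exactly what absorbs the one-level gap between the two lemma statements.
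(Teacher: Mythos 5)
Your proposal is correct and follows essentially the same route as the paper, whose proof is just the one-line observation that the result follows from the preceding lemma together with the facts that $f(h,l)$ has at most $h$ nested $\prone$ and $\left|f(h,l)\right|\in\bigO(h+l)$. Your write-up merely spells out the contrapositive and the (slack) bookkeeping that a less-than-$(h-2)$-exponential function composed with a linear argument is still below the $(h-1)$-exponential threshold required to contradict the preceding lemma.
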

\begin{proof}
This follows from the above lemma, and the fact that $f\left(h,l\right)$
has at most $h$ nested $\prone$ and $\left|f\left(h,l\right)\right|\in\bigO\left(h+l\right)$.
\end{proof}

We can now state the main succinctness result.
\begin{theorem}
\label{thm:RoCTL*-succinct}There is no truth preserving translation
from RoCTL{*} to \index{CTL{*}}CTL{*} that is elementary in the length
of the formula.
\end{theorem}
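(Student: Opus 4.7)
The plan is to derive the theorem directly from the preceding lemma by a diagonal-style argument: the preceding lemma gives a different lower bound for each fixed nesting depth $h$, and these bounds stack up so that no single elementary function can dominate them all.

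Suppose, for contradiction, that $\tau$ is a truth-preserving translation from RoCTL{*} into CTL{*} whose output length is bounded by some elementary function $e$ of the input length. Every elementary function is asymptotically dominated by a fixed-height tower of exponentials, so we may pick a positive integer $i$ such that $e(n) \leq 2_i(p(n))$ for all sufficiently large $n$, where $p$ is some polynomial and $2_i$ denotes the $i$-fold iterated exponential. The plan is to choose $h := i+3$, so that the lower bound of $(h-2)$-exponential supplied by the preceding lemma is an $(i+1)$-exponential and strictly outgrows any $i$-exponential upper bound.

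Next, consider the family of formulas $\{f(h,l) : l \geq 1\}$ with this fixed $h$. Because $h$ is a constant and $|f(h,l)| \in \bigO(h+l)$, we have $|f(h,l)| \in \bigO(l)$, and therefore $|\tau(f(h,l))| \leq e(|f(h,l)|)$ is at most $i$-exponential in $l$. On the other hand, since $\tau$ is truth-preserving, $\tau(f(h,l))$ is a CTL{*} formula equivalent to $f(h,l)$, and its length is an upper bound on the length of the shortest equivalent CTL{*} formula. By the preceding lemma the latter is not bounded above by any $(h-2)$-exponential (that is, any $(i+1)$-exponential) function of $l$, and in particular eventually exceeds every $i$-exponential bound in $l$ — contradicting the upper bound just derived.

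The main obstacle is not really a technical one, since the preceding lemma already does all the heavy lifting through the utree construction, the formula family $f(h,l)$, and the SAA state-counting argument. The only care required is in the diagonalisation: making precise that ``elementary'' means being dominated by some finite iterated exponential $2_i$, and then choosing $h$ large enough (as a function of $i$) that the bound from the preceding lemma sits one level above the class we are trying to exclude. Once $h$ is fixed in terms of $i$, the contradiction between an $i$-exponential upper bound and an $(i+1)$-exponential lower bound, both in the parameter $l$, is immediate.
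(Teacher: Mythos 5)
Your proposal is correct and is essentially the paper's own argument: the paper also derives the theorem immediately from the preceding lemma by observing that an $i$-exponential translation would in particular give an $i$-exponential translation for formulas with $i+3$ nested $\prone$ operators, contradicting the $(h-2)$-exponential lower bound at $h=i+3$. You merely spell out the diagonalisation (identifying ``elementary'' with domination by some fixed iterated exponential $2_i$) more explicitly than the paper does.
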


It is easy to prove this theorem from the lemma above. We only need
to note that if there were an $i$-exponential translation of RoCTL{*}
into CTL{*} for any $i\in\natnum$ there would be an $i$-exponential
translation of RoCTL{*} \formulae{} with $i+3$ nested $\prone$
operators.

We see that the only non-classical operators in $f(h,l)$ are positively
occurring $\eE$, $U$ and $F$. Since $F\psi$ is short hand for
$\top U\psi$ we see that alternations between positively occurring
$U$ and $\prone$ are sufficient to produce non-elementary blowup.
By slightly modifying $f$, we can similarly demonstrate that alternation
between positively occurring $\robust$ and $U$ are also sufficient
to produce non-elementary blowup. For example, the following $f'$
contains only operators equivalent to negatively occurring $U$, where
$W$ is the weak until operator and $H^{F}\approx\bigvee_{i}H_{i}^{F}$
:
\begin{align*}
f'(0,l)= & \bigwedge_{i\in\left[1,l\right]}\left(b_{i}\rightarrow G\left(H^{F}\rightarrow\left(H_{0}^{F}\wedge b_{i}\right)\right)\right)\wedge\\
 & \quad\bigwedge_{i\in\left[1,l\right]}\left(\neg b_{i}\rightarrow G\left(H^{F}\rightarrow\left(H_{0}^{F}\wedge\neg b_{i}\right)\right)\right)\\
f'(k,l)= & \left(\left(I_{\{}\wedge H_{k-1}\right)\rightarrow\prone f'(k-1,l)\right)W\left(I_{\}}\wedge H_{k}\right)\\
 & \quad\wedge FH_{k}^{F}\wedge\left(I_{\{}\wedge H_{k}\right)
\end{align*}
Since there is no elementary translation of $f$ and $f'$ into CTL{*},
there is also no elementary translation of $\neg f$ and $\neg f'$
into CTL{*}.

\ifnoroot{\bibliographystyle{latex8}
\bibliography{csbibtex}

}

\subsection{\label{sec:Easily-Translatable-Fragments}Easily Translatable Fragments
of RoCTL{*}}

Although the translation is non-elementary in the worst case we note
that real world{} formulas often fall into an easily translatable
fragment of RoCTL{*}. The most common use for nested Robustly operators
is to directly chain $n$ Robustly operators together to express the
statement ``Even with $n$ additional failures''. We also note that
when describing the behaviour of a system, the specification of the
system takes the form of a number of clauses each of which are reasonably
short, see  Example \ref{exa:coordinated-attack}. We will now show
that such formulas are easy to translate into CTL{*}, and that it
is easy to use CTL{*} decision procedures on such formulas.

It is easy to represent the statement ``$\Viol$ occurs at most $n$
times in future worlds'' in LTL, we will call this statement $\gamma^{n}$.
So for example, $\gamma^{0}\equiv NG\neg\Viol$, $\gamma^{1}\equiv N\left(\neg\Viol UNG\neg\Viol\right)$,
and so forth. Note that $\left|\gamma^{n}\right|\in\bigO\left(n\right)$.
We see that translating $\devi^{n}\phi$ is no more complex than translating
$\devi\phi$; we can translate $\devi^{n}\phi$ the same way as we
translated $\devi\phi$ as above, but we replace $\psi_{s_{i}}$ with
\begin{eqnarray*}
E\left(\bigwedge s_{i}\wedge N\gamma^{n-1}\right)\mbox{ .}
\end{eqnarray*}

We see that $\prone\phi$ means $\phi$ holds on the original fullpath
or a deviation, $\prone\prone\phi$ means that $\phi$ holds on the
original path or a deviation, or a deviation from a deviation. In
general $\prone^{n}\phi$ means that $\phi$ holds on some path at
most $n$ deviations from the current path. Thus:
\begin{eqnarray*}
\prone^{n}\phi & \equiv & \phi\vee\devi\phi\vee\cdots\vee\devi^{n}\phi\mathfullstop
\end{eqnarray*}

Thus we see that the length of the translation of $\prone^{n}\phi$
is linear in $n$, and thus has no overall effect on the order of
complexity. Note that $\robust^{n}\phi\equiv\neg\prone^{n}\neg\phi$,
so $\robust^{n}$ is also no harder to translate than a single $\robust$
operator.  This is significant because one of the motivations of
RoCTL{*} was to be able to express the statements of the form ``If
less than $n$ additional failures occur then $\phi$''. The related
statement ``If $n$ failures occur then $\phi$'' is ever easier
to translate into CTL{*} as $O\robust^{n}\phi\equiv A\left(\gamma^{n}\rightarrow\phi\right)$.

Let the $\robust$-complexity of a formula $\phi$ be defined as follows:
\begin{eqnarray*}
\left|\phi\right|_{\robust} & = & \max_{\robust\psi\leq\phi}\left|\psi\right|\mbox{ .}
\end{eqnarray*}

It is clear that there exists some function $f$ such that for all
RoCTL{*} formula $\phi$ of length $n$ the translation of $\phi$
into CTL{*} is of length $f\left(n\right)$ or less. As the translation
of $\prone$ does not look inside state \formulae{} it is clear that
$\left|c\left(\phi\right)\right|\in\bigO\left(f\left(\left|\phi\right|_{\robust}\right)\left|\phi\right|\right)$.
In other words, for any fragment of RoCTL{*} where the length $\left|\phi\right|_{\robust}$
of path-\formulae{} contained within a $\robust$ operator is bounded
there is a linear translation from this fragment to CTL{*}. As a result
the complexity properties of RoCTL{*} \formulae{} with  bounded
$\left|\phi\right|_{\robust}$ are similar to CTL{*}; we can decide
the satisfiability problem in doubly exponential time and the model
checking problem in time singly exponential in the length of the formula
and linear in the size of the model, see \citep{ModelChecking} for
an example of a model checker for CTL{*}.

We can refine both above results by noting that the construction of
$\autom_{\devi\phi}$ does not look inside state \formulae{}. Thus
a fragment of RoCTL{*} which has a bounded number of $\robust^{n}$
nested within a path-formula (unbroken by $A$ or $O$) has an elementary
translation into CTL{*}.

In \citep{two_seconds}, we discussed a fragment of RoCTL{*} called
State-RoCTL\@. This fragment could naturally express many interesting
robustness properties, but had a linear satisfaction preserving translation
into CTL\@. The truth-preserving translation of State-RoCTL into
RoCTL{*} was technically exponential, but had a linear number of unique
sub-formulas and so has a natural and efficient compressed format;
for example, the truth-preserving translation provided a polynomial-time
model checking procedure.

\section{Conclusion}

\label{sec:concl}
\newcommand{\bt}{\robust}  
We have defined a new, interesting, intuitive and expressive logic, RoCTL*, for specifying robustness in systems. The logic combines temporal and deontic notions in a way that captures the important contrary-to-duty obligations and seems free of the usual paradoxes. 

We have shown that all RoCTL* formulas can be expressed as an equivalent CTL* formula. This translation can also be used to translate RoBCTL* [McCabe-Dansted 2008] formulas into BCTL* formulas. Once translated into CTL* formula we can use any of the standard methods for model checking, so this result provides us with a model checking procedure for RoCTL*. As with CTL*, the model checking problem for RoCTL* is linear with respect to the size of the model [Clarke et al. 1999]. Classes of RoCTL* formulas with bounded $\bt$-complexity have linear translations into CTL*. Thus as with CTL* the model checking problem is also singly exponential [Clarke et al. 1999] with respect to the length of these formulas , and satisfiability is doubly exponential. Multiple nestings of $\bt$ (or $\Delta$) without any form of alternation can also be translated to CTL* without increasing the complexity of the translation over a single $\bt$ operator.

We have shown that RoCTL* is non-elementarily more succinct than CTL* for specifying some properties but we have not shown the exact complexity of the translation. However, asymptotically there is additionally one single exponential blowup per nested $\bt$ operator; never-the-less we expect model checking to be practical for some useful sub-classes of RoCTL* formulas. To verify this empirically we would need to implement the model checking procedure as a computer program. However, we have shown by hand that the given examples have translations into CTL* of reasonable length. Although a human translator can give better results than a naive computer translation, a practical model checking algorithm has the advantage that it can avoid translating automata back into CTL* and instead directly use the automaton to model check.
While in other logics non-elementary blowup is frequently the result of unbounded alternations between positive and negative occurrences of the same operator, we do not need to alternate between $\Delta$ and $\bt$ to demonstrate non-elementary blowup. Indeed, the only non-classical operators in the function $f$ were positively occurring $U$ and $\Delta$. We may modify $f$ slightly so that it only contains positively occurring $U$ and $\bt$.

RoCTL* is known to be decidable, but without a known elementary upper bound. Our succinctness result shows that a full translation into CTL* or Tree Automata cannot result in elementary decision procedures. The question still remains as to whether some other elementary decision procedure can be found for RoCTL*. The discovery of such a procedure would be interesting, as this would be the first modal logic which was elementary to decide but had only non-elementary translations into tree automata.

\bibliographystyle{ACM-Reference-Format-Journals}
\bibliography{csbibtex-acm}

\end{document}